\documentclass[%
 reprint,
superscriptaddress,
nofootinbib,
 amsmath,amssymb,
 aps,
]{revtex4-2}
\usepackage{amsmath,amssymb,amsthm,bm,mathrsfs,mathtools,braket,physics}

\usepackage{tikz}
\usepackage{array}

\usepackage{ascmac}
\usepackage{xcolor}
\usepackage{stmaryrd}
\usepackage{enumitem}
\usepackage{orcidlink}
\usepackage{hyperref}
\usepackage[capitalise]{cleveref}
\usepackage{algorithmicx}
\usepackage{algpseudocode}
\usepackage[normalem]{ulem}

\numberwithin{equation}{section}

\newcommand{\LER}{\mathrm{LER}}
\newcommand{\MAX}{\mathrm{max}}
\newcommand{\IN}{\mathrm{in}}
\newcommand{\OUT}{\mathrm{out}}
\newcommand{\BF}{\mathrm{bf}}
\newcommand{\PF}{\mathrm{pf}}
\newcommand{\IND}{\mathrm{Ind}}
\newcommand{\Ber}{\mathrm{Bernoulli}}
\newcommand{\MW}{\mathrm{MW}}
\newcommand{\ML}{\mathrm{ML}}
\DeclareMathOperator*{\argmin}{arg\,min}
\DeclareMathOperator*{\argmax}{arg\,max}
\DeclareMathOperator*{\wt}{wt}
\DeclareMathOperator{\mathspan}{span}

\newcommand{\red}[1]{\textcolor{red}{#1}}

\theoremstyle{definition}
\newtheorem{theorem}{Theorem}[section]
\newtheorem{definition}[theorem]{Definition}
\newtheorem{lemma}[theorem]{Lemma}

\newtheorem{example}[theorem]{Example}

\makeatletter
\newcommand{\equalcontrib}{%
  \frontmatter@footnote{These authors contributed equally to this work.}%
}
\makeatother

\newcommand{\Square}[3][]{
  \draw[#1] #2 rectangle ++(#3,#3);
}

\newcommand{\Rect}[4][]{
  \draw[#1] #2 rectangle ++(#3,#4);
}

\newcommand{\SmallCircle}[5][]{
\pgfmathsetmacro{\GrayPct}{#4*100}
  \filldraw[
    draw=black,
    fill=black!\GrayPct,
    fill opacity=1.0,
    draw opacity=1.0,
    #1
  ] ({#2},{#3}) circle ({#5});
}

\newcommand{\DressedPlaquette}[4]{
  \Square[draw=#4]{(#1, #2)}{#3}

  \pgfmathsetmacro{\pr}{0.075}

  \SmallCircle[draw=#4]{#1}{#2}{1}{\pr}
  \SmallCircle[draw=#4]{#1+#3}{#2}{1}{\pr}
  \SmallCircle[draw=#4]{#1+#3}{#2+#3}{1}{\pr}
  \SmallCircle[draw=#4]{#1}{#2+#3}{1}{\pr}
}

\newcommand{\LPlaquette}[5]{
  \pgfmathsetmacro{\pr}{0.075}
  \pgfmathsetmacro{\prr}{#4}
  \Rect[draw=#5]{(#1+\prr, #2)}{#3-\prr}{#3}
  \draw[color=#5] (#1 - \prr, #2) -- (#1 - \prr, #2 + #3);
}

\newcommand{\RPlaquette}[5]{
  \pgfmathsetmacro{\pr}{0.075}
  \pgfmathsetmacro{\prr}{#4}
  \Rect[draw=#5]{(#1, #2)}{#3-\prr}{#3}
  \draw[color=#5] (#1 + #3 + \prr, #2) -- (#1 + #3 + \prr, #2 + #3);
}

\definecolor{colorA}{HTML}{E6194B} % Red
\definecolor{colorB}{HTML}{3CB44B} % Green
\definecolor{colorC}{HTML}{FFE119} % Yellow
\definecolor{colorD}{HTML}{4363D8} % Blue
\definecolor{colorE}{HTML}{F58231} % Orange
\definecolor{colorF}{HTML}{911EB4} % Purple

\begin{document}

\title{Arbitrary-Distance Quantum Error Correction \\ with Gauss's Law for \texorpdfstring{$\mathbb Z_2$}{Z_2} Lattice Gauge Theory}%

\author{Neel S.~Modi\,\orcidlink{0009-0002-8308-9882}\equalcontrib}
\email{neel\_modi@berkeley.edu}
\affiliation{Physics Division, Lawrence Berkeley National Laboratory, Berkeley, CA 94720, USA}
\affiliation{Leinweber Institute for Theoretical Physics and Department of Physics,\\ University of California, Berkeley, Berkeley, CA 94720, USA}
\author{Lento~Nagano\,\orcidlink{0000-0003-4515-2115}\equalcontrib}
\email{lento.nagano@keio.jp}
\affiliation{Graduate School of Science and Technology, Keio University,\\
Yokohama, Kanagawa 223-8522, Japan}
\affiliation{International Center for Elementary Particle Physics (ICEPP),\\ The University of Tokyo, 7-3-1 Hongo, Bunkyo-ku, Tokyo 113-0033, Japan}

\author{Masazumi~Honda\,\orcidlink{0000-0001-6935-5609}}
\email{masazumi.honda@riken.jp}
\affiliation{RIKEN Center for Interdisciplinary Theoretical and Mathematical Sciences (iTHEMS), RIKEN, 2-1 Hirosawa, Wako, Saitama 351-0198 Japan}
\affiliation{Graduate School of Science and Engineering, Saitama University, 255 Shimo-Okubo, Sakura-ku, Saitama 338-8570, Japan}
\author{Nobuyuki~Yoshioka\,\orcidlink{0000-0001-6094-8635}}
\email{ny.nobuyoshioka@gmail.com}
\affiliation{International Center for Elementary Particle Physics (ICEPP),\\ The University of Tokyo, 7-3-1 Hongo, Bunkyo-ku, Tokyo 113-0033, Japan}

\author{Christian W.~Bauer\,\orcidlink{0000-0001-9820-5810}}
\email{cwbauer@lbl.gov}
\affiliation{Physics Division, Lawrence Berkeley National Laboratory, Berkeley, CA 94720, USA}
\affiliation{Leinweber Institute for Theoretical Physics and Department of Physics,\\ University of California, Berkeley, Berkeley, CA 94720, USA}

\date{\today}
\begin{abstract}
    It has previously been shown~\cite{Rajput:2021trn} that $\mathbb Z_2$ Gauss's law constraints can be used to build efficient quantum error-correcting codes (QECCs) that are robust against arbitrary single-qubit errors. In this work, we generalize the construction to be robust against arbitrary $t$-qubit errors, where $t$ is any positive integer. This includes a derivation of the optimal Gauss's law code within the considered family by minimizing the number of physical qubits required for a given code distance. Finally, we compare our codes against other efficient QECCs on metrics such as the number of physical qubits, the locality of the encoded Hamiltonian, and the logical error rate in the code capacity setting.
    Compared to using a domain-agnostic code for every lattice degree of freedom, we find that the Gauss's law code primarily excels at reducing the locality of the encoded Hamiltonian. Moreover, the physical qubit overhead is also reduced for $t \le 3$ (distance $d \le 7$).
\end{abstract}

\maketitle

\tableofcontents

\section{Introduction}\label{sec:intro}

Gauge theory describes fundamental interactions between elementary particles.
Studying the dynamics of a given gauge theory is relevant to understanding how non-perturbative phenomena such as confinement and string breaking emerge.
Lattice gauge theory (LGT) regulates the continuum gauge theory described by quantum field theory by putting it on the discretized lattice space(time), and allows us to investigate non-perturbative physics numerically~\cite{Wilson:1974sk,Kogut:1975,Kogut:1979wt}.
While the conventional Monte Carlo approach has provided illuminating results, it suffers from the sign problem~\cite{Troyer:2004ge} when dealing with chemical potentials, topological terms, and real-time dynamics.
To overcome this problem, we have recently seen a growth of interest in Hamiltonian simulations of LGT using quantum computing (reviews of the many different approaches, both analogue and digital, can be found in Refs.~\cite{Wiese:2013uua,Zohar:2015hwa,Dalmonte:2016alw,Aidelsburger:2021mia,Banuls:2019bmf,Zohar:2021nyc,Klco:2021lap,Bauer:2022hpo}).

In the Hamiltonian setting, one typically works in Weyl ($A_0 = 0$) gauge.
In this gauge, Gauss's law is not enforced through an equation of motion, but is instead enforced through constraints on the Hilbert space. 
We can impose these constraints by completely or partly solving them at the kinematical level and/or fixing the gauge redundancy, and rewriting the Hamiltonian in terms of the remaining degrees of freedom~\cite{Kaplan:2018vnj,Zohar:2019ygc,Raychowdhury:2018osk,Davoudi:2020yln,DAndrea:2023qnr,Grabowska:2024emw,Drell:1978hr,Bauer:2021gek,Haase:2020kaj}.
This approach automatically guarantees gauge-invariant dynamics and can reduce the number of qubits. However, it may introduce longer-range or more complicated interactions in some formulations, which may cause increasing gate costs~\cite{Hamer:1997dx,Martinez:2016yna,Zohar:2019ygc,DAndrea:2023qnr, Grabowska:2024emw,Grabowska:2022uos}.
For $\mathbb{Z}_2$ gauge theories coupled to matter, this strategy can be used to eliminate the fermionic matter entirely and recast the model as a pure spin system~\cite{Zohar:2018cwb,Zohar:2019ygc,Borla:2019chl,Borla:2020upy}.
The alternative is to retain the local, gauge-redundant Hamiltonian and either preserve Gauss's law at the algorithmic level~\cite{Lamm:2019bik,Mazzola:2021hma,Stryker:2021asy,Alexandrou:2025vaj,Sekiyama:2026rgt}, suppress gauge-violating errors during the evolution~\cite{Zohar:2011cw,Lamm:2020jwv,Tran:2020azk,Halimeh:2020ecg,Davoudi:2022uzo}, or mitigate their effects in measured observables through symmetry verification and postselection~\cite{Stryker:2018efp,Ballini:2024qmr,Bonet-Monroig:2018mpi}.
In this approach, one requires additional qubits to represent the enlarged Hilbert space, and one needs to protect the gauge symmetry from algorithmic or hardware errors.
In particular, taming the hardware error in a scalable way is also necessary in general to have reliable results in large-scale simulations.

Quantum error correction (QEC) is a general collection of techniques to protect quantum data (in the form of \emph{logical} states) by equipping it with redundancies (typically by using \emph{physical} states in the larger Hilbert space)~\cite{Shor:1995hbe,Steane:1995vv,Calderbank:1995dw,Knill:1996ny,nielsen00,gottesman2009introductionquantumerrorcorrection,Terhal:2013vbm}.
These techniques have recently moved from principle to practice: a superconducting surface-code memory has been operated below the error-correction threshold, with the logical error rate suppressed as the code distance is increased~\cite{GoogleQuantumAIandCollaborators:2024efv}, which makes the scaling of code distance a concrete rather than an asymptotic concern.

In the stabilizer formalism~\cite{Gottesman:1997zz}, the space of logical states can be specified by the generators (\emph{stabilizers}) of an Abelian group (stabilizer group) $\mathcal{S}$, via the requirement $g\ket{\psi} = \ket{\psi}$ for all $g\in \mathcal{S}$.
Rajput, Roggero, and Wiebe constructed prototypical examples of LGT-specific stabilizer codes~\cite{Rajput:2021trn} by regarding Gauss’s law constraints in $\mathbb{Z}_2$ LGT as stabilizers and incorporating them with repetition code stabilizers.
We refer to this as the RRW construction or RRW code.
It not only detects gauge-violating errors but can also be used to correct them using standard QEC protocols.
On top of that, it was shown that the RRW code uses fewer physical qubits compared with naively encoding every lattice variable independently with a domain-agnostic code ($\llbracket 5,1,3\rrbracket$ code)~\cite{Rajput:2021trn}.
Furthermore, Ref.~\cite{Spagnoli:2024mib} proposed how to perform fault-tolerant real-time evolution in the RRW code.
Ref.~\cite{Pato:2026wow} performed memory and dynamics simulations to compare the performance of the RRW code and generic quantum error-correcting codes (QECCs) against bit-flip noise.
Other important results include extensions to $\mathbb{Z}_N$ LGT~\cite{Spagnoli:2026qni,Turco:2026cte} and the lowest-level truncation of $\mathrm{SU}(2)$ LGT~\cite{Yao:2025cxs}.
(See also~\cite{Faist:2019ahr, Klco:2021jxl, Kong:2021wau, delPino:2022zzx, Chen:2022dox, Gustafson:2023swx, Wauters:2024shc, Carrozza:2024smc, Lacambra:2026twp, rothlin2026error,Carena:2024dzu} for other studies utilizing or formulating Gauss's law for QEC.)
While these developments have enabled a better understanding of the relationship between Gauss's law and QEC, the RRW construction in $\mathbb{Z}_2$ LGT~\cite{Rajput:2021trn, Spagnoli:2024mib} focused on code distance $d=3$, meaning it can only correct up to weight-one errors.
For quantum error correction to be useful, it is important to be able to suppress higher-weight errors, as well as to obtain the error threshold under which we can control the logical error rate.

\begin{figure*}
    \centering
\includegraphics[width=0.9\linewidth]{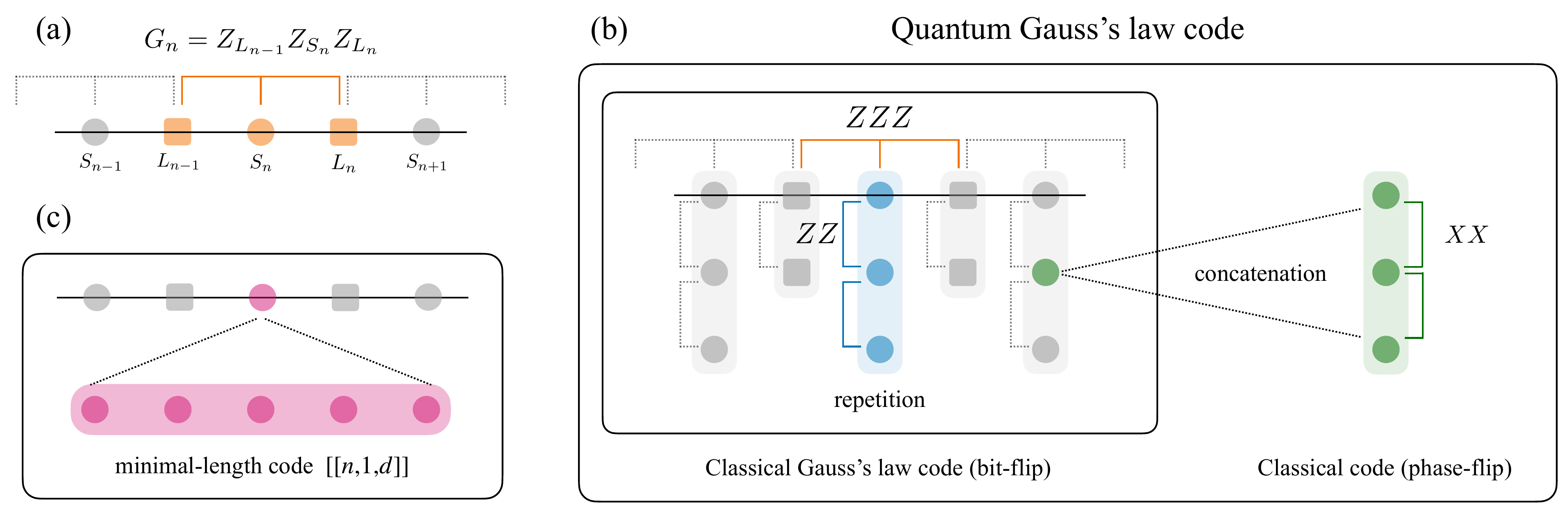}
    \caption{Summary of this paper. We depict the 1D example for simplicity, but we also give 2D results in the main text. (a) 1D lattice for $\mathbb{Z}_2$ LGT. The square symbol is a link variable (gauge field) and the circle is a site variable (matter field). Each Gauss's law constraint is a parity check between a site and two adjacent links, as highlighted in orange. (b) construction of the classical and quantum Gauss's law code. Classical Gauss's law duplicates sites and links with Gauss's law to build $Z$-type stabilizers, and is concatenated with $X$-type classical codes to yield a quantum Gauss's law code with prescribed code distance. (c) Generic (domain-agnostic) QEC encoding used as a baseline for comparison. We encode each lattice site and link by using the $\llbracket n,1,d\rrbracket$ stabilizer code with minimal $n$.}
    \label{fig:summary}
\end{figure*}
\begin{table*}[]
    \centering
    \caption{Summary of our results}
    \label{tab:results-summary}
    \begin{tabular}{c|c}
        Description & Location   \\ \hline \hline
    construction of classical Gauss's law (GL) code with arbitrary $d$ & Def.~\ref{def:gaussclassical}
    \\
    alternative construction of classical GL code & Lem.~\ref{lem:gaussclassicalalternative}
    \\
    construction of quantum GL code & Def.~\ref{def:gaussquantum} 
    \\ \hline
    GL code with optimal encoding rate $k/n$ (1D) & Thm.~\ref{thm:optimal1D} 
    \\
    GL code with optimal encoding rate $k/n$ (2D) & Thm.~\ref{thm:optimal2D} 
    \\ \hline
    qubits/locality of GL and minimal-length code (1D) & Tab.~\ref{tab:qubit-comparison-1D}, Tab.~\ref{tab:locality-comparison-1D}
    \\
    qubits/locality of GL and minimal-length code (2D) & Tab.~\ref{tab:qubit-comparison-2D}, Tab.~\ref{tab:locality-comparison-2D} 
    \\
    code capacity memory simulation of GL and minimal length code (1D) & \cref{fig:uncorrelated-XZ,fig:memory-experiment-depolarizing} 
    \end{tabular}
\end{table*}

Our main results are summarized in \cref{tab:results-summary} and \cref{fig:summary}.
We construct an explicit family of Gauss's law codes with an \emph{arbitrary} code distance for $\mathbb{Z}_2$ LGT coupled to matter, thereby extending the previous RRW construction ($d=3$) (see Def.~\ref{def:gaussclassical} and Def.~\ref{def:gaussquantum}).
As shown in \cref{fig:summary}(b), the construction first duplicates each lattice site and/or link, and then builds a $Z$-type stabilizer by combining Gauss's law with duplications.
The total number of physical qubits thus depends on the number of duplications.
We obtain the explicit construction (i.e., duplication pattern) which minimizes the number of physical qubits for a given number of logical qubits $k$ and code distance $d$ (see Thm.~\ref{thm:optimal1D} for the 1D case and Thm.~\ref{thm:optimal2D} for the 2D case).
In other words, our codes have an optimal encoding rate $k/n$ among the family of Gauss's law codes.
To assess the performance of Gauss's law codes, the number of physical qubits and the number of qubits involved in each Hamiltonian term are evaluated for concrete examples of $\mathbb Z_2$ LGTs, which are then compared with baseline (domain-agnostic) QECCs (see~\cref{fig:summary}~(c)).
The comparison results are given in ~\cref{tab:qubit-comparison-1D,tab:locality-comparison-1D} for the 1D case, and \cref{tab:qubit-comparison-2D,tab:locality-comparison-2D} for the 2D case.
As a first step toward threshold studies (see \cref{fig:uncorrelated-XZ,fig:memory-experiment-depolarizing}), we also confirm that we can suppress the logical error rate by increasing the code distance under code capacity setting.

This paper is organized as follows.
Section~\ref{sec:prelim} explains preliminaries needed for this work, including QEC (\cref{sec:QECC}) and Hamiltonian LGTs with $\mathbb Z_2$ gauge symmetry (\cref{sec:LGT}).
Readers familiar with these topics can skip this section.
Useful notations for $\mathbb{Z}_2$ LGTs are summarized in \cref{tab:notation-z2-lgt}.
Next, our theoretical results are given in \cref{sec:gauss}.
After reviewing the RRW results ($d=3$)~\cite{Rajput:2021trn} in \cref{sec:RRW}, the arbitrary distance Gauss's law code is constructed in~\cref{subsec:gauss-law-code-arbitrary-distance}.
Furthermore, the optimal Gauss's law code constructions for 1D and 2D cases are given in~\ref{sec:optimal}. 
\Cref{sec:compare} shows the comparison of code properties with baseline codes (minimal-length codes).
We compare the number of physical qubits and Hamiltonian locality in~\cref{sec:parameter-comparison}, and the code capacity memory simulation is given in~\cref{sec:LER-comparison}.
Finally, we summarize our results and provide an outlook in \cref{sec:summary}.
Additional details related to our results are provided in the appendices.

\section{Preliminary}\label{sec:prelim}

\subsection{Primer on Error-Correcting Codes}\label{sec:QECC} 

Error correction refers to a broad class of techniques that are used to protect data from being corrupted by noisy environments.
Included in these techniques are: (i) \textit{error-correcting codes} that encode logical information with redundancies; (ii) \textit{recovery operations} that aim to purify data; and (iii) strategies for \textit{scaling} to arbitrarily-high robustness (c.f. fault tolerance and thresholds).
It was discovered in the 1990s that many techniques for protecting classical data are neatly generalizable to the quantum setting ~\cite{calderbankshor1996, steane1996, Gottesman:1997zz}.
The similarities are powerful enough that it is useful for our purposes to separately review both the classical and quantum theory of error correction.
Indeed, many of the constructions introduced in this paper are formulated purely in the classical language, with the quantization being a mathematical formality.

This section is intended as a pedagogical introduction for the reader who is less familiar with these subjects.
We have chosen to restrict the discussion solely to error-correcting codes, as this is the key concept needed to understand the substance of this paper.
The only other technicality strictly needed to make sense of our results is the notion of \textit{syndrome decoding}, which is discussed in~\cref{app:syndrome-decoding}.
For the reader who is interested in learning more about error correction in the quantum setting, we highly recommend the famous notes by Gottesman \cite{gottesman2009introductionquantumerrorcorrection} and Preskill \cite{preskill1998}.

\subsubsection{Classical Linear Codes}\label{sec:classical}

Error-correcting codes were developed in the classical setting to facilitate the storing and transmission of data in the presence of noisy environments.
The idea is to convert a $k$-bit message~$m$ into a longer $n$-bit \textit{codeword}~$c(m)$, prior to exposure to noise.
That way, when noise occurs, it happens directly on the codeword $c(m)$.
If the mathematical structure of $c(m)$ is chosen cleverly, and the amount of noise experienced is reasonably small, then it can be possible to recover the exact original message $m$, in spite of the errors incurred.

Note that it is the bits of the codeword---not the message---that are directly subject to noise.
Therefore, many meaningful properties of an error-correcting code depend entirely on the set of codewords, remaining agnostic as to which codeword represents which underlying message.
For this reason, a classical code is typically defined purely by a selection of $n$-bit codewords $C$.
In the setting of binary codes, we have $C \subseteq \{0,1\}^n$.

A useful class of error-correcting codes is provided by the (binary) linear codes.
This class represents the setting in which the set of codewords $C$ is a linear subspace of the binary vector space $\mathbb F_2^n$.
When we use the term ``classical code" in this paper, we will always refer to binary linear codes.

\begin{definition}[Classical Code]\label{def:classical}
    Let $n\ge 1$ be an integer.
    Then an $n$-bit \textit{classical code} is a linear subspace $C \subseteq \mathbb F_2^n$. Elements $c\in C$ are called \emph{codewords}.
\end{definition}

\begin{example}[Classical Code]\label{ex:classical}
    Let $n = 5$, and define $C \subseteq \mathbb F_2^5$ by
    \begin{equation}
        C = \{00000, 10100, 01011, 11111\}.
    \end{equation}
    This is a linear subspace, because $C$ can also be viewed as the span of two basis vectors,
    \begin{equation}\label{eq:exclassicalspan}
        C = \mathspan\{10100, 01011\}.
    \end{equation}
    Therefore, $C$ is a $5$-bit classical code.
\end{example}

The basic function of a classical code is to be able to detect errors that may occur on codewords.
This is achieved by checking whether some received bit string is a codeword or not.
However, some errors will not be detectable.
For instance, if the bits flipped by an error are exactly those needed to transform the codeword into an entirely different codeword, then the received bit string will be recognized as a codeword, and the error will go undetected.
Definition~\ref{def:logical} summarizes the notion of undetectable errors.

\begin{definition}[Undetectable Error]\label{def:logical}
    Let $C \subseteq \mathbb F_2^n$ be a classical code, and let $c \in C$ be a codeword.
    A vector $e \in \mathbb F_2^n$ is called an \textit{undetectable error when it acts on $c$} if the following conditions are both satisfied:
    \begin{enumerate}[label=(\roman*)]
        \item $e \neq 0$; and
        \item $c + e \in C$.
    \end{enumerate}
\end{definition}

A basic fact about undetectable errors is that they are \textit{always} undetectable, irrespective of which codeword they act on.
In fact, any nonzero codeword is an undetectable error, and vice versa.
The nonzero requirement simply ensures that we're genuinely talking about an ``error," meaning a strict \textit{change} in the codeword.
This is summarized by Lemma~\ref{lem:logical}, but it follows trivially from linearity.

\begin{lemma}\label{lem:logical}
    Let $C \subseteq \mathbb F_2^n$ be a classical code, and let $c \in C$ be arbitrary.
    Then $e$ is an undetectable error when it acts on $c$ if and only if $e$ is a nonzero codeword.
\end{lemma}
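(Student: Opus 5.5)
The plan is to prove the two directions of the biconditional separately. Throughout we fix a codeword $c \in C$ and use only the fact, from Definition~\ref{def:classical}, that $C$ is a linear subspace of $\mathbb F_2^n$. We will need two consequences of this: $C$ is closed under addition, and every vector is its own additive inverse, since $x + x = 0$ in characteristic two. Condition (i) of Definition~\ref{def:logical} ($e \neq 0$) appears verbatim on both sides of the equivalence, so all the work concerns condition (ii).

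For the forward direction, suppose $e$ is an undetectable error when acting on $c$. Then $e \neq 0$ and $c + e \in C$. We write
\begin{equation}
    e = (c + e) + c ,
\end{equation}
which holds because $c + c = 0$. The right-hand side is a sum of two codewords, so closure under addition gives $e \in C$. Hence $e$ is a nonzero codeword.

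For the reverse direction, suppose $e$ is a nonzero codeword. Then $c + e$ is a sum of two elements of $C$, so it lies in $C$ by closure. Together with $e \neq 0$, both conditions of Definition~\ref{def:logical} hold.

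There is no real obstacle here. The only point to state explicitly is the identity $c + c = 0$ over $\mathbb F_2$, since that is what lets us recover $e$ from $c + e$. As a final remark, neither direction used any property of $c$ beyond $c \in C$. This justifies the claim in the surrounding text that undetectability of $e$ does not depend on which codeword it acts on.
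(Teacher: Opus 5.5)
Your proof is correct and is exactly the argument the paper has in mind: the paper gives no formal proof and simply remarks that the lemma ``follows trivially from linearity.'' Your two directions, closure of $C$ under addition combined with $c + c = 0$ over $\mathbb F_2$, are that linearity argument written out in full.
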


Thus, we can simply write ``$e$ is an undetectable error," without referencing on which codeword it acts.

\begin{example}[Undetectable Error]\label{ex:logical}
    Continue from example \ref{ex:classical}.
    Take, for example, the codeword $c = 10100$.
    An example of an undetectable error on this codeword is obtained by flipping every single bit, which transforms $c$ into $c' = 01011$.
    This error is captured by the vector $e = 11111$, so that $c + e = c'$.
    And, as indicated by Lemma~\ref{lem:logical}, $e$ is indeed a codeword.
    
    Moreover, $e$ is an undetectable error when it acts on any codeword.
    For instance, flipping all the bits of $00000$ by $e$ produces $11111$, which is again a codeword.
\end{example}

Having discussed what exactly an undetectable error means, we can now discuss the basic properties of a classical code~$C$.
The most common properties usually considered are:
\begin{itemize}
    \item The total number of bits used by the code, also called the number of \textit{physical bits} (denoted $n$, as before).
    \item The number of \textit{logical bits} of information that can be stored in the code (denoted $k$).
    \item The minimum number of bits that are flipped by any undetectable error, also called the \textit{distance} of the code (denoted $d$).
\end{itemize}

The number of logical bits $k$ can be thought of as the maximum length of a message $m$ that can be faithfully encoded by $C$.
Mathematically, this is simply given by
\begin{equation}
    k \equiv \dim C.
\end{equation}

The code distance $d$ is a measure of how difficult it is to produce an error that is invisible to the code $C$.
Not all undetectable errors are equal in this regard; a common way to distinguish them is through their \textit{weight}.
The weight of an error $e$ is denoted $\wt(e)$, and it simply counts the number of $1$s in the binary representation of $e$.
Therefore, the distance of $C$ can be mathematically expressed as
\begin{equation}\label{eq:d}
    d = \min_{c \in C\setminus\{0\}} \wt(c).
\end{equation}

When a code $C$ has $n$ physical bits, $k$ logical bits, and code distance $d$, it is common to write the descriptor $C = [n,k,d]$ as a shorthand.
Importantly, specifying $n$, $k$, and $d$ alone is not enough to fully specify the code $C$.
Nevertheless, the descriptor notation provides a quick and easy way to build a mental model of the performance and efficiency of the code, without getting into details.
The performance roughly corresponds to the distance $d$, and the efficiency roughly corresponds to the ratio $r = k / n$, also called the \textit{encoding rate}.

\begin{example}[Properties of Classical Code]\label{ex:properties}
    Continue from Example \ref{ex:classical}.
    The number of physical bits is clearly $n = 5$.
    The number of logical bits is $k = 2$, since Eq.~\eqref{eq:exclassicalspan} indicates that there are two basis vectors.
    The distance is $d = 2$, because although we discussed the undetectable error $e = 11111$ in Example~\ref{ex:logical}, this is not the undetectable error with smallest weight---that would be $10100$, with weight $2$.
    Based on the above observations, this code can be written with the shorthand $C = [5,2,2]$.
\end{example}

In order to detect (and eventually correct) errors in a classical code $C$, it is necessary to be able to quickly determine whether a given $n$-bit string $v$ is a codeword or not.
This knowledge could be obtained from looking at the full list of codewords, but that is not practical for larger codes.
A more efficient approach is to provide a set of linear constraints on the physical bits such that $C$ is exactly the linear subspace that satisfies all of those constraints.
If all linear constraints are satisfied, then $v \in C$; if even a single linear constraint is not satisfied, then~$v \notin C$.
In this setting, the linear constraints are called \textit{parity checks} (or simply ``checks" for short).

If $C\subseteq \mathbb F_2^n$ is $k$-dimensional, then one needs exactly $(n-k)$ independent parity checks.
One can always arrange for these checks to be homogeneous, so that they take the form
\begin{equation}\label{eq:PCM}
    Hc = 0,
\end{equation}
where $H$ is an $(n-k) \times n$ binary matrix, called the \textit{parity-check matrix}, such that the code $C$ coincides exactly with the set of vectors $c \in \mathbb F_2^n$ that satisfy Eq.~\eqref{eq:PCM}.
Conversely, any $(n-k) \times n$ binary parity-check matrix $H$ can itself be used to define the code $C$.
Note that the same code $C$ may admit several parity-check matrices, which are thus considered equivalent.

\begin{definition}[Parity-Check Matrix]\label{def:parity}
    Let $C = [n,k,d]$ be a classical code.
Then $H \in \mathbb F_2^{(n-k)\times n}$ is called a \textit{(full-rank)~\footnote{It is possible that a parity check matrix $H\in\mathbb{F}_2^{m\times n}$ is not full-rank. In this case,  $k=n-\text{rank}(H)$.} parity-check matrix} for $C$ if
    \begin{equation}
        H c = 0 \Leftrightarrow c \in C.
    \end{equation}
    Conversely, any full-rank $H \in \mathbb F_2^{(n-k)\times n}$ defines a classical code $C$ by
    \begin{equation}
        C \equiv \{c \in \mathbb F_2^n \mid Hc = 0\}.
    \end{equation}
\end{definition}

\begin{example}[Parity-Check Matrix]\label{ex:parity}
    Take $C = [5,2,2]$ from example \ref{ex:properties}, and write out the bit string for each codeword as $c = c_1 c_2 c_3 c_4 c_5$.
    The linear constraints
    \begin{align}\label{eq:exparityconstraints}
        c_1 &= c_3,\\
        c_2 &= c_4 = c_5,
    \end{align}
    are satisfied for every codeword.
    Moreover, every $5$-bit string that satisfies these constraints is also a codeword.
    There are $3$ independent constraints in \eqref{eq:exparityconstraints}, and they can be made homogeneous as follows:
    \begin{align}\label{eq:exparityhomogeneous}
        c_1 + c_3 &= 0,\\
        c_2 + c_4 &= 0,\\
        c_4 + c_5 &= 0,
    \end{align}
    where addition is performed in $\mathbb F_2$.
    The matrix form of \eqref{eq:exparityhomogeneous} is precisely $Hc = 0$, where
    \begin{equation}\label{eq:exparityH}
        H =
        \begin{pmatrix}
            1 & 0 & 1 & 0 & 0\\
            0 & 1 & 0 & 1 & 0\\
            0 & 0 & 0 & 1 & 1
        \end{pmatrix}
    \end{equation}
    is the parity-check matrix.
    Note that this matrix is not unique; for instance, we could have replaced the check $c_4 + c_5 = 0$ with $c_2 + c_5 = 0$, which would change the third row of $H$.

    To test whether $v = 10111$ is a codeword, simply compute $Hv = 010$.
    Since this is nonzero, $v$ cannot be a codeword.
    The quantity $Hv$ is also sometimes denoted $\sigma(v)$ and is called the \textit{(error) syndrome} of $v$.
    In addition to providing insight on whether an error has occurred on a received word, the syndrome is often useful for deducing (or \textit{decoding}) which specific error occurred, thereby enabling recovery of the original encoded data.
    
    Notably, this ``syndrome decoding" process is largely a separate undertaking from the code construction itself, and we do not discuss it in the main text to avoid distracting from the central story of this paper.
    We leave the task of defining error syndromes and decoding strategies to~\cref{app:syndrome-decoding}, which contains all technical details required to mathematically specify our code capacity simulations.
\end{example}

\subsubsection{Quantum Codes in the Stabilizer Formalism}
\label{sec:quantum_code_stabilizer}

Analogously to the classical case, quantum codes aim to encode a $k$-qubit message state $\ket{\psi_m}$ into an $n$-qubit code state $\ket{\psi_c}$.
It is then only the code state $\ket{\psi_c}$ that is exposed to noise, meaning that the ability to protect against errors is primarily a function of the allowed set of code states.

We will focus on the special case where the set of code states forms a sub-Hilbert space of all possible $n$-qubit states.
This is called the \textit{code space}, $\mathcal H_C \subseteq \mathbb C^{2^n}$.
The sub-Hilbert space property is the quantum analogue of the linear subspace property for classical codes.

\begin{definition}[Quantum Code]\label{def:quantum}
    Let $n \ge 1$ be an integer.
    Then an $n$-qubit \textit{quantum code} is a sub-Hilbert space $\mathcal H_C \subseteq \mathbb C^{2^n}$.
    Elements of $\mathcal H_C$ are called code states.
\end{definition}

\begin{example}[Quantum Code]\label{ex:quantum}
    Let $n=5$, and consider the set $\mathcal H_C$ of all quantum states of the form
    \begin{equation}\label{eq:exquantumstates}
        \alpha_1 \ket{00000} + \alpha_2 \ket{10100} + \alpha_3 \ket{01011} + \alpha_4 \ket{11111},
    \end{equation}
    where each $\alpha_j$ is a complex number.
    This is a sub-Hilbert space of $\mathbb C^{2^5}$, and therefore $\mathcal H_C$ qualifies as a $5$-qubit quantum code.\footnote{The fact that normalized quantum states in the code space satisfy $\sum_j |\alpha_j|^2 = 1$ is independent from the definition of the code space itself.}
    In fact, this is the quantum code whose code states are superpositions of codewords from the classical code defined in Example~\ref{ex:classical}.
    Not all quantum codes have the simple interpretation of representing superpositions of classical codewords, but using an example with this property helps make the intuition behind the definitions more clear.
\end{example}

Having discussed the notion of quantum codes in general, we now turn to the special case of \textit{stabilizer codes}, which are central to the modern theory of QECCs.

\begin{definition}[Stabilizer Code]\label{def:stabilizer} 
Let $\mathcal P_n=\{\pm1,\pm i\}\cdot \{I,X,Y,Z\}^{\otimes n}$ be the $n$-qubit Pauli group.
    Suppose a subgroup~$\mathcal S$ of the Pauli group $\mathcal P_n$ satisfies the following properties:
    \begin{enumerate}[label=(\roman*)]
        \item $\mathcal S$ is Abelian;
        \item The negative identity $-I \in \mathcal P_n$ is \textit{not} an element of $\mathcal S$.
    \end{enumerate}
    Then quantum code $\mathcal H_C \subseteq \mathbb C^{2^n}$ defined by
    \begin{equation}
        \mathcal H_C = \{\ket{\psi} \in \mathbb C^{2^n} \mid S\ket{\psi} = \ket{\psi},\quad\forall S \in \mathcal S\}.
    \end{equation}
    is called the \textit{stabilizer code} with stabilizer $\mathcal S$.
    We also say that $\mathcal H_C$ is the \textit{code space} of $\mathcal S$.
\end{definition}

Even though the physical Hilbert space $\mathbb C^{2^n}$ has dimension which grows exponentially in $n$, the code space can be defined efficiently by using stabilizer codes.
In particular, Definition~\ref{def:stabilizer} reduces the problem to providing an Abelian subgroup $\mathcal S$ of $\mathcal P_n$.
Such a subgroup $\mathcal S$ can always be defined by a set of at most $n$ generators.
Each stabilizer generator can be viewed as cutting the Hilbert space dimension in half (see Lemma \ref{lem:numgenerators}), implying that these are high-dimensional linear constraints.

\begin{example}[Stabilizer Code]\label{ex:stabilizer}
    The quantum code $\mathcal H_C$ from example \ref{ex:quantum} is actually a stabilizer code.
    To understand this, consider the subgroup $\mathcal S \subseteq \mathcal P_5$ generated by the operators
    \begin{align}
        S_1 &= \texttt{Z I Z I I},\\
        S_2 &= \texttt{I Z I Z I},\\
        S_3 &= \texttt{I I I Z Z}.
    \end{align}
    This clearly defines an Abelian group that does not contain $-I$.
    For this example, we have selected $S_1$, $S_2$, and $S_3$ as $Z$-type operators that mimic the rows $h_1$, $h_2$, and $h_3$ of the parity-check matrix \eqref{eq:exparityH} from example \ref{ex:parity}:
    \begin{align}
        h_1 &= 10100,\\
        h_2 &= 01010,\\
        h_3 &= 00011.
    \end{align}
    By doing so, we are guaranteed that
    \begin{equation}
        S_i \ket{x} = (-1)^{h_i \cdot x} \ket{x},
    \end{equation}
    for any $5$-bit string $x \in \mathbb F_2^5$ in the computational basis.
    Then the conditions $S_i \ket{x} = \ket{x}$ are precisely equivalent to the linear constraints $h_i \cdot x = 0$, which are exactly the parity checks for the classical code $C$ from Example~\ref{ex:classical}.
    Since $\mathcal H_C$ consists of superpositions of the classical codewords, as noted in Example~\ref{ex:quantum}, this immediately implies by linearity that $\mathcal H_C$ is the code space for $\mathcal S$, making it a stabilizer code.
\end{example}

A highly useful concept in the setting of stabilizer codes is that of \textit{logical Pauli operators}, which does not have a direct classical analogue.
In fact, this is precisely the state-independent quantum notion of an ``undetectable" error, as clarified by Lemma~\ref{lem:logical-pauli-operator}.
The reader who is less familiar with quantum error correction may consider proving it as an exercise.

\begin{definition}[Logical Pauli Operators]\label{def:logical-pauli-operator}
    Let $\mathcal H_C$ be a stabilizer code with stabilizer $\mathcal S$.
    Then an $n$-qubit Pauli operator $E \in \mathcal P_n$ is called a \textit{logical Pauli operator} if the following conditions are satisfied:
    \begin{enumerate}[label=(\roman*)]
        \item $E$ commutes with every element of the stabilizer group $\mathcal S$.
        \item $E$ is \textit{not} contained in the stabilizer group $\mathcal S$.
        \item $E$ is \textit{not} a global phase operator, $I$, $-I$, $iI$, or $-iI$.
    \end{enumerate}
    As a shorthand, we write that $E$ is a logical Pauli operator if and only if
    \begin{equation}
        E \in \mathcal N(\mathcal S)\setminus\left(\mathcal{S} \mathcal{I}\right),
    \end{equation}
    where
    \begin{equation}
        \mathcal N(\mathcal S) \equiv \{P \in \mathcal P_n \mid P S P^\dagger \in \mathcal S,\quad \forall S \in \mathcal S\}
    \end{equation}
    is the \textit{normalizer} of $\mathcal S$ in $\mathcal P_n$, and
    \begin{equation}
        \mathcal I \equiv \{I, iI, -I, -iI\} \subseteq \mathcal P_n
    \end{equation}
    is the set of global phase operations; and
    \begin{equation}
        \mathcal{S} \mathcal{I}= \{i^a S \mid a \in \{0,1,2,3\}, S\in \mathcal{S}\}\,.
    \end{equation}
\end{definition}

\begin{lemma}\label{lem:logical-pauli-operator}
    Let $\mathcal H_C$ be a stabilizer code with stabilizer $\mathcal S$, and suppose $E \in \mathcal P_n$ is a Pauli operator.
    Then $E$ is \textit{undetectable} (meaning that $E \ket{\psi} \propto \ket{\psi}$ for some $\ket{\psi} \in \mathcal H_C$) if and only if $E$ is a logical Pauli operator.
    Moreover, this is equivalent to the condition that $E$ commutes with the entire stabilizer $\mathcal S$, while not being related by global phase to any element of $\mathcal S$.
\end{lemma}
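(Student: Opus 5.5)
Because the parenthetical definition of ``undetectable'' is loose (every stabilizer element satisfies $E\ket\psi=\ket\psi$, yet is excluded from the logical operators), I would first fix the intended reading. $E$ is undetectable if it maps every code state back into $\mathcal H_C$, so that no syndrome is triggered, while \emph{not} acting as a scalar multiple of the identity on $\mathcal H_C$, so that it is a genuine change of the encoded data. With this reading, the proof reduces to two facts about the Pauli group. First, two Pauli operators either commute or anticommute. Second, $\mathcal H_C\neq 0$ has dimension $2^{n-r}$ when $\mathcal S$ has $r$ independent generators; this is the halving statement of Lemma~\ref{lem:numgenerators}, which I will assume.

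\textbf{Step 1 (syndrome dichotomy).} Suppose $E$ anticommutes with some $S\in\mathcal S$. For any $\ket\psi\in\mathcal H_C$ we then have $S E\ket\psi=-ES\ket\psi=-E\ket\psi$. Thus $E\ket\psi$ lies in the $-1$ eigenspace of $S$, which is orthogonal to $\mathcal H_C$, so $E$ is detected. If instead $E$ commutes with every $S$, then $SE\ket\psi=ES\ket\psi=E\ket\psi$, so $E$ preserves $\mathcal H_C$. Hence ``no syndrome'' is equivalent to ``$E$ commutes with all of $\mathcal S$''.

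\textbf{Step 2 (normalizer equals centralizer).} For Paulis, $PSP^\dagger=\pm S$. If $PSP^\dagger=-S$ were in $\mathcal S$, then $S\cdot(-S)\propto -I$ would lie in $\mathcal S$, contradicting condition (ii) of Definition~\ref{def:stabilizer}. (Here one uses that $S^2=I$ for elements of $\mathcal S$, which itself follows from $-I\notin\mathcal S$.) Therefore $\mathcal N(\mathcal S)$ is exactly the set of Paulis commuting with all of $\mathcal S$. This yields the ``moreover'' reformulation, once Step 3 identifies the scalar-acting operators with $\mathcal S\mathcal I$.

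\textbf{Step 3 (trivial action iff $E\in\mathcal S\mathcal I$); the main obstacle.} If $E=i^aS$, then $E$ acts on $\mathcal H_C$ as the scalar $i^a$. The converse is the nontrivial direction. Suppose $E$ commutes with $\mathcal S$ and $E\ket\psi=\lambda\ket\psi$ for all code states. Since $E^2=\pm I$, $\lambda$ is a power of $i$, so $E'=\lambda^{-1}E$ is a Pauli with $E'^2=I$ that fixes $\mathcal H_C$ pointwise. I would then consider $\mathcal S'=\langle\mathcal S,E'\rangle$. It is Abelian, and it does not contain $-I$: any element acts as $+1$ on the nonzero space $\mathcal H_C$, whereas $-I$ would act as $-1$. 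Its code space is still $\mathcal H_C$. If $E'\notin\mathcal S$, then $\mathcal S'$ has $r+1$ independent generators. Lemma~\ref{lem:numgenerators} would then give $\dim\mathcal H_C=2^{n-r-1}$, contradicting $\dim\mathcal H_C=2^{n-r}$. Hence $E'\in\mathcal S$, and so $E\in\mathcal S\mathcal I$.

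Combining Steps 1--3, $E$ is undetectable exactly when $E\in\mathcal N(\mathcal S)\setminus\mathcal S\mathcal I$, i.e.\ when $E$ is a logical Pauli operator. Condition (iii) of Definition~\ref{def:logical-pauli-operator} is automatically subsumed by (ii) once phases are allowed.
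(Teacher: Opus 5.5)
The paper does not prove this lemma (it leaves it as an exercise), so there is no argument to compare against. Your proof is correct.

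Your reinterpretation of ``undetectable'' is genuinely needed. Read literally, the parenthetical holds for exactly the elements of $\mathcal N(\mathcal S)$, including all of $\mathcal S\mathcal I$, so the first ``iff'' would be false. Your reading (preserves $\mathcal H_C$ but does not act as a scalar there) matches how the paper itself later invokes the lemma in Example~\ref{ex:propertiesquantum}.

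Working with the shorthand $\mathcal N(\mathcal S)\setminus\mathcal S\mathcal I$, rather than the enumerated conditions (ii)--(iii), is also the right choice. For instance, $-S$ or $iS$ with $S\in\mathcal S\setminus\{I\}$ satisfy (i)--(iii) literally, yet act as scalars on $\mathcal H_C$.

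Two small remarks on Step~3.
\begin{itemize}
\item The independence of $\{S_1,\dots,S_r,E'\}$ deserves one line. Since $E'^2=I$ and $E'$ commutes with $\mathcal S$, the group $\mathcal S'=\mathcal S\sqcup E'\mathcal S$ has order $2^{r+1}$.
\item Your reliance on Lemma~\ref{lem:numgenerators}, which the paper also leaves unproven, can be avoided. Let $\Pi=|\mathcal S|^{-1}\sum_{S\in\mathcal S}S$ be the code-space projector. If $E$ acts as $\lambda$ on $\mathcal H_C$, then $\mathrm{tr}(E\Pi)=\lambda\dim\mathcal H_C\neq 0$. But $\mathrm{tr}(ES)=0$ unless $ES\propto I$, so $E\in\mathcal S\mathcal I$ directly, without passing through $E'$ or $\mathcal S'$.
\end{itemize}
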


\begin{example}[Logical Pauli Operators]\label{ex:logical-pauli-operator}
    Take $\mathcal H_C$ to be the quantum code from Example~\ref{ex:quantum}, whose stabilizer $\mathcal S$ was determined in Example~\ref{ex:stabilizer}.
    An example of a logical operator on $\mathcal H_C$ is given by $E = \texttt{X X X X X}$.
    A quick check shows that $E$ commutes with all stabilizer generators, $S_1$, $S_2$, and $S_3$.
    In fact, this operation has precisely the same interpretation as the undetectable error discussed in Example~\ref{ex:logical} for the classical code---both operations apply a bit-flip to all $5$ bits.
    To make mathematical sense of why the detectability of classical bit-flip errors is the same as that for $X$-type errors in the quantum setting, note the relation
    \begin{equation}
        S_i E = (-1)^{h_i \cdot e} ES_i,
    \end{equation}
    which holds for any $X$-type Pauli error $E$, provided that $e \in \mathbb F_2^5$ is chosen to be the corresponding classical bit-flip error.
    
    At least for bit-flip errors, this clarifies why commutation with the stabilizer group is related to the undetectability of the error.
    As seen from Lemma~\ref{lem:logical-pauli-operator}, it is in fact the general rule that a Pauli operator will be undetectable on \textit{every} code state if and only if it commutes with the entire stabilizer.
    Analogously to the classical case, this motivates the notion of \textit{stabilizer syndrome} for Pauli operators.
    If $E \in \mathcal P_n$ is any Pauli operator, then its syndrome is defined to be the answers to each boolean question of the form, ``Does $E$ commute with the stabilizer generator $S_i \in \mathcal S$?"
    As in the classical case, the syndrome clearly helps to determine the detectability of an error, but it can also be used to posit informed guesses on how an error should be corrected, by using syndrome decoders.
    We delegate the discussion of technicalities related to syndrome decoding to~\cref{app:syndrome-decoding}.
\end{example}

Stabilizer codes admit properties that are analogous to the properties of classical codes.
The most basic properties are the following:
\begin{itemize}
    \item The total number of qubits used by the code, also called the number of \textit{physical qubits} (denoted $n$).
    \item The number of \textit{logical qubits} of information that can be stored in the code (denoted $k$).
    \item The minimum number of qubits acted upon by any logical Pauli operator, also called the \textit{distance} of the code (denoted $d$).
\end{itemize}

The number of logical qubits $k$ can be thought of as the maximum number of qubits in a message state $\ket{\psi_m}$ that can be faithfully encoded by the code.
Mathematically, it is given by
\begin{equation}
    k = \log_2 (\dim \mathcal H_C).
\end{equation}
In practical terms, however, one needs to know how $k$ can be determined from the generators of a stabilizer group $\mathcal S$, since that will be the actual information available when dealing with a code.
For this, we refer to Lemma~\ref{lem:numgenerators}, which immediately implies that $k = n - r$, where $r$ is the number of stabilizer generators.

\begin{lemma}\label{lem:numgenerators}
    Let $\mathcal S$ be an $n$-qubit stabilizer group with code space $\mathcal H_C$.
    If $r\ge 1$ is an integer, and $\{S_1, S_2, \dots, S_r\} \subseteq \mathcal S$ is a complete set of independent generators for $\mathcal S$, then $\dim \mathcal H_C = 2^{n-r}$.
\end{lemma}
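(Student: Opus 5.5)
The plan is to use projectors and an induction on the number of generators. For each generator $S_j$, the conditions $S_j\in\mathcal S$ and $-I\notin\mathcal S$ force $S_j^2=I$. If $S_j$ carried a phase $\pm i$, then $S_j^2=-I\in\mathcal S$, which is impossible. So each $S_j$ is Hermitian with eigenvalues $\pm1$, and
\begin{equation}
    P_j \equiv \tfrac12\left(I+S_j\right)
\end{equation}
is the orthogonal projector onto its $+1$ eigenspace. Since $\mathcal S$ is Abelian, the $P_j$ commute. The code space is the image of $P=\prod_{j=1}^r P_j$, because a state is fixed by all of $\mathcal S$ iff it is fixed by every generator. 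Therefore $\dim\mathcal H_C=\tr P$.

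Next I expand the product:
\begin{equation}
    P=\frac{1}{2^r}\sum_{a\in\mathbb F_2^r} S_1^{a_1}\cdots S_r^{a_r}.
\end{equation}
Each term is an element of $\mathcal S$. Independence of the generators means that for $a\neq 0$ the product $S^a\equiv S_1^{a_1}\cdots S_r^{a_r}$ is not $I$. It is also not $-I$ or $\pm iI$: it cannot be $-I$ since $-I\notin\mathcal S$, and if it were $\pm iI$ then its square $-I$ would lie in $\mathcal S$. So for $a\neq0$, $S^a$ is a Pauli operator with a nontrivial tensor factor, up to phase. Such an operator is traceless, because $\tr X=\tr Y=\tr Z=0$ and traces factor over tensor products. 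Only the $a=0$ term survives, so
\begin{equation}
    \tr P = 2^{-r}\tr I = 2^{n-r}.
\end{equation}

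The main point needing care is the meaning of ``independent''. I will take it to mean that no nontrivial product of the generators equals $I$, i.e.\ the map $a\mapsto S^a$ is injective. Then I must rule out every global phase for $S^a$, which is where $-I\notin\mathcal S$ enters, as above. Everything else is routine trace bookkeeping.
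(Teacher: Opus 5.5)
Your proof is correct and complete. The paper gives no proof of this lemma. It is quoted as a standard fact, supported only by the heuristic that each stabilizer generator ``cuts the Hilbert space dimension in half.''

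The usual textbook way to make that heuristic rigorous goes generator by generator. Inside the joint $+1$ eigenspace of $S_1,\dots,S_j$, one shows that $S_{j+1}$ has $\pm1$ eigenspaces of equal dimension. This is typically done by exhibiting a Pauli operator that commutes with $S_1,\dots,S_j$ and anticommutes with $S_{j+1}$, whose existence rests on linear independence of the rows of the binary check matrix.

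Your trace computation bypasses that step. It needs only:
\begin{itemize}
\item injectivity of $a\mapsto S^a$;
\item the exclusion of global phases via $-I\notin\mathcal S$;
\item tracelessness of non-identity Pauli strings.
\end{itemize}
It also gives more for free: replacing $P_j$ by $\tfrac12\bigl(I+(-1)^{x_j}S_j\bigr)$, the same computation shows that each of the $2^r$ syndrome subspaces has dimension $2^{n-r}$. What the halving route buys in exchange for its extra machinery is explicit Pauli operators carrying one syndrome sector onto another. These are the kind of syndrome representatives $E_m$ the paper uses in its decoding appendix.

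Two small presentation points:
\begin{itemize}
\item You announce ``an induction on the number of generators'' but never use one, since the product expansion handles all generators at once. Drop that phrase.
\item The paper never defines ``independent,'' so add one line showing that your injectivity condition matches the usual definition (no generator lies in the group generated by the others). If $S^a=I$ with $a_i=1$, then $S_i=\prod_{j\neq i}S_j^{a_j}$ by commutativity and $S_j^2=I$, and conversely.
\end{itemize}
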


As in the classical case, the code distance $d$ is a measure of how difficult it is to produce an error that is invisible to the code---but this time, the errors can be Pauli operators.
In this setting, the \textit{weight} of an error $E$ is denoted $\wt(E)$, and it counts the number of physical qubits on which $E$ acts as non-identity.
This is the analogue of the classical weight function that counts the number of nonzero bits.
The distance of the stabilizer code is then the minimum weight of any logical Pauli operator.
In other words, this can mathematically be expressed as
\begin{equation}\label{eq:stabd}
    d = \min_{E \in \mathcal N(\mathcal S)\setminus(\mathcal S \mathcal I)} \wt(E).
\end{equation}

As in the classical case, it is common to summarize the properties of a stabilizer code by using the descriptor $\llbracket n,k,d\rrbracket$, where the double-bracket notation indicates that this is a stabilizer code, rather than a classical code.

\begin{example}[Properties of Stabilizer Code]\label{ex:propertiesquantum}
    Take $\mathcal H_C$ and $\mathcal S$ from Examples~\ref{ex:quantum} and~\ref{ex:stabilizer}, which has $n=5$ physical qubits. There are $r = 3$ stabilizer generators, so the number of logical qubits can also be computed as $k = n - r = 2$. 
    This agrees with the fact that $\dim \mathcal H_C = 4$, corresponding to two logical qubits.
     
    For the code distance, we need the minimum-weight Pauli error $E$ that is undetectable by the code.
    The logical operator $\texttt{X X X X X}$ from Example~\ref{ex:logical-pauli-operator} has weight $5$, but it is not the smallest weight error.
    As an example, consider the operator $E = \texttt{Z I I I I}$.
    Clearly, this operator commutes with the entire stabilizer $\mathcal S$, and it's not difficult to see that $E$ is not in $\mathcal S$ and is clearly not a global phase.
    Therefore, $E$ is a logical operator.
    Lemma \ref{lem:logical-pauli-operator} suggests that there must exist a code state $\ket{\psi}$ which is mapped to a physically different code state by $E$.
    Such an example is not hard to find; for instance,
    \begin{equation}
        E\left(\ket{00000} + \ket{10100}\right) = \ket{00000} - \ket{10100}.
    \end{equation}
    The weight of this operator is $\wt(E) = 1$, which clearly cannot be reduced without literally making it proportional to the identity element of the Pauli group (and thus not an error).
    Therefore, the code distance is $d = 1$.

    Putting it all together, the code $\mathcal H_C$ is a $\llbracket 5,2,1\rrbracket$ stabilizer code.
    Note that this descriptor does not exactly match the descriptor $[5,2,2]$ for the classical code $C$ that is imitated by $\mathcal H_C$.
    It matches on the number of physical bits and logical bits, but not on the code distance.
    
    The number of physical qubits matches the number of physical bits because each qubit of the quantum code is imitating a corresponding bit of the classical code.
    The number of logical qubits matches the number of logical bits because Lemma~\ref{lem:numgenerators} shows that each stabilizer generator reduces the Hilbert space dimension by a factor of two.
    But the code distance does not match, because the minimum-weight Pauli error was a $Z$-error, which has no classical analogue---the classical code was only detecting bit-flip errors, which are $X$-errors in this setting.
\end{example}

The failure of the quantum code distance to match the classical code distance in Example~\ref{ex:properties} highlights an important point: quantum codes need to take into account the possibility of bit-flip errors \textit{and} phase-flip errors.
With some ingenuity, this is achievable in the stabilizer framework, as discussed in any standard text on the subject, e.g.~\cite{nielsen00,preskill_chap7}.
For the purposes of this paper, we will deal with this point by introducing a construction known as the \textit{concatenation} of two codes.

\subsubsection{Concatenation of Codes}

There are many ways to construct stabilizer codes that can simultaneously handle bit-flip and phase-flip errors.
Any standard reference will discuss this point, which includes the construction of CSS codes, perfect codes, and concatenation~\cite{knill1996concatenated}.
For the purposes of this work, we will need the concatenation construction.
In fact, we will only need a very special case of concatenation, formulated so that it produces a quantum code from a pair of classical codes.

As alluded to by Example~\ref{ex:stabilizer}, it is possible to produce a stabilizer code---entirely from $Z$-type operators---that mimics a classical parity-check matrix.
But the problem with doing so, as discussed in Example~\ref{ex:properties}, is that the code distance drops to $1$, because although bit-flip errors can be detected exactly as in the classical code, the quantum code remains indifferent to phase-flip errors.
Conversely, if we were to construct a stabilizer code entirely from $X$-type operators, it would be able to detect phase-flip errors, but it would become oblivious to bit-flip errors.

Concatenation solves this problem by instead using two parity-check matrices, $H^{(1)}$ and $H^{(2)}$, for classical codes $[n_\OUT, k_\OUT, d_\OUT]$ and $[n_\IN, k_\IN, d_\IN]$, respectively.
Converting $H^{(1)}$ into $Z$-type stabilizer generators, and considering $n_\OUT$ blocks of $n_\IN$ qubits, it becomes possible to detect $X$-type errors that are incurred over multiple blocks of the code.
However, $Z$-type errors remain undetected. 
By converting $H^{(2)}$ into $X$-type stabilizer generators, considering $n_\OUT$ blocks of $n_\IN$ qubitsallows to detect $Z$-type errors.

To define the construction precisely, the simplest method is to write down the stabilizer generators.
We will focus on the case where $k_\IN = 1$, since that is all we will need in this paper.
Our optimal code constructions in~\cref{sec:gauss} are largely unaffected if other concatenation strategies are used.

We first introduce a precursor definition (mostly for notation):
\begin{definition}\label{def:SOh}
    Let $h$ be an $n$-bit string, and let $O$ be an $n'$-qubit operator.
    Then define $S^O(h)$ to be the $n\times n'$-qubit operator
    \begin{equation}
        S^O(h) \equiv \bigotimes_{i = 1}^n O(i),
    \end{equation}
    where $O(i)$ is the $n'$-qubit operator
    \begin{equation}
        O(i) \equiv
        \begin{cases}
            I^{\otimes n'}, & h_i = 0,\\
            O, & h_i = 1.
        \end{cases}
    \end{equation}
\end{definition}

\begin{definition}[Concatenation]\label{def:concatenation}
    Let $C_\OUT = [n_\OUT,k_\OUT,d_\OUT]$ and $C_\IN = [n_\IN,1,d_\IN]$ be two classical codes, with parity-check matrices $H^{(\OUT)}$ and $H^{(\IN)}$, respectively.
    Write $h^{(\OUT)}_i$ and $h^{(\IN)}_i$ for the $i^{\mathrm{th}}$ rows of $H^{(\OUT)}$ and $H^{(\IN)}$, respectively.
    For $1 \le i \le n_\OUT$ and $1 \le j \le n_\IN-1$ (remember that $k_\IN = 1$), define the $n_\OUT \times n_\IN$-qubit \textit{inner block} operators
    \begin{equation}
        S_{i,j} \equiv I^{\otimes (i-1)n_\IN} \otimes S^X\left(h^{(\IN)}_j\right) \otimes I^{\otimes (n_\OUT-i)n_\IN},
    \end{equation}
    where $X$ denotes the single-qubit Pauli-$X$ operator.
    Now let $\overline{Z}$ denote any $Z$-type Pauli operator that is a logical error for $C_\IN$.
    For $1 \le i \le n_\OUT-k_\OUT$, define the $n_\OUT \times n_\IN$-qubit \textit{outer block} operators
    \begin{equation}
        S_i \equiv S^{\overline{Z}}\left(h^{(\OUT)}_i\right).
    \end{equation}
    Then the \textit{concatenation} of $C_\OUT$ and $C_\IN$ is the stabilizer code generated by all inner block and outer block operators defined above.
    We say that $C_\OUT$ is the \textit{outer code}, and $C_\IN$ is the \textit{inner code} for this construction.
    The concatenation of $C_\OUT$ and $C_\IN$ is denoted $C_\OUT \circ C_\IN$.
\end{definition}

Note that this definition depends on the choice of parity-check matrices used to define the classical codes $C_{\IN}$ and $C_{\OUT}$, and we mentioned earlier that a given code does not necessarily have a unique parity check matrix. However, the concatenation is independent of the choice of parity check matrices:

\begin{lemma}[Independent of Parity-Check Matrices]\label{lem:concatenationindependence}
    Let $C_\OUT = [n_\OUT,k_\OUT,d_\OUT]$ and $C_\IN = [n_\IN,1,d_\IN]$ be two classical codes.
    Then the concatenation $C \equiv C_\OUT\circ C_\IN$ is independent of the parity-check matrices chosen for $C_\OUT$ and $C_\IN$.
\end{lemma}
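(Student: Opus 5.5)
The plan is to show that the stabilizer group $\mathcal S$ generated in Definition~\ref{def:concatenation} can be described intrinsically, using only the codes $C_\OUT$ and $C_\IN$ (and the logical $\overline Z$). The key observation is that the map $h\mapsto S^O(h)$ of Definition~\ref{def:SOh} is a group homomorphism from $(\mathbb F_2^n,+)$ into the Pauli group whenever $O$ is a Hermitian Pauli operator with $O^2=I$. This holds because the factors on different blocks commute and $O\cdot O=I$ on overlapping blocks. Consequently the subgroup generated by $\{S^O(h_i)\}$ equals $\{S^O(h)\mid h\in \mathrm{rowspace}(H)\}$, and this depends only on the row space of $H$.

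First, I would record that any two full-rank parity-check matrices $H,H'$ for the same code $C$ have the same row space, namely the dual code $C^\perp$. Both row spaces are contained in $C^\perp$ because $Hc=0$ for all $c\in C$, and both have dimension $n-k=\dim C^\perp$. Applying the homomorphism observation to each block, the inner-block generators for block $i$ generate $\{I^{\otimes(i-1)n_\IN}\otimes S^X(h)\otimes I^{\otimes(n_\OUT-i)n_\IN}\mid h\in C_\IN^\perp\}$. Likewise the outer-block generators generate $\{S^{\overline Z}(h)\mid h\in C_\OUT^\perp\}$. Here I use that $\overline Z$ is a $Z$-type Pauli operator, so it is Hermitian and squares to $I$.

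Since $\mathcal S$ is generated by the union of these subgroups, $\mathcal S$ is the product of the groups just described, and each of them is independent of the choice of $H^{(\OUT)}$ and $H^{(\IN)}$. It follows that $\mathcal S$, and hence its code space, is independent of those choices.

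The main subtlety is that the definition also involves an arbitrary choice of $\overline Z$, which the lemma does not explicitly fix. I would address this by checking that two choices $\overline Z,\overline Z'$ differ by an element of the inner stabilizer. Since $k_\IN=1$, both correspond to $Z$-type operators whose supports are vectors not in $C_\IN^\perp$ but orthogonal to $C_\IN$'s... Even if this extra claim fails, the core statement about parity-check matrices follows from the row-space argument above, which is the substance of the proof.
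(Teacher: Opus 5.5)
Your argument is correct and complete for the lemma as stated. The paper itself states this lemma without proof, so there is nothing to compare it against; what you give is the standard argument. Because $X^2=I$ and $\overline Z^2=I$, the map $h\mapsto S^O(h)$ (and its embedding into the $i$th block) is a homomorphism out of $(\mathbb F_2^{n},+)$. Each family of generators therefore generates exactly the image of the row space of the corresponding parity-check matrix. By your dimension count, that row space is $C^\perp$ for every full-rank choice. Hence $\mathcal S$ depends only on $C_\IN^\perp$, $C_\OUT^\perp$ and $\overline Z$, which is what the lemma claims.

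Your final paragraph on $\overline Z$ is not needed for the lemma. As written, though, it is unfinished, and the condition it begins to state is the wrong one. Take a $Z$-type operator with support $v\in\mathbb F_2^{n_\IN}$. It commutes with every inner generator $S^X(h)$, $h\in C_\IN^\perp$, iff $v\in(C_\IN^\perp)^\perp=C_\IN$. Since the inner stabilizer is purely $X$-type, it is a logical error iff in addition $v\neq 0$.

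Whether $v\in C_\IN^\perp$ plays no role. For the $[2,1,2]$ repetition code, $c=11\in C_\IN^\perp$, yet $ZZ$ is the logical $Z$. Since $k_\IN=1$ gives $C_\IN=\{0,c\}$, $\overline Z$ is \emph{forced} to act as $Z$ exactly on the support of $c$. So there are not two choices to reconcile at all.

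The only leftover freedom is an overall sign. Replacing $\overline Z$ by $-\overline Z$ multiplies $S^{\overline Z}(h)$ by $(-1)^{\wt(h)}$, which changes the code space whenever $C_\OUT^\perp$ contains odd-weight vectors (e.g.\ the weight-$3$ Gauss's law checks). So ``$Z$-type'' has to be read with phase $+1$. That is a convention about $\overline Z$, not about the parity-check matrices, and it does not affect the lemma.
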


Lemma~\ref{lem:concatenationindependence} serves as the final step to conclude our definition, ensuring that concatenated codes are well-defined even when parity-check matrices are not explicitly provided.
We can now proceed to Example~\ref{ex:concatenation}, which shows explicitly how to build the stabilizer generators for a concatenated code.

\begin{example}[Concatenation]\label{ex:concatenation}
    Take the outer code~$C_\OUT$ to be defined by the parity-check matrix
    \begin{equation}\label{eq:exconcatenationH1}
        H^{(\OUT)} =
        \begin{pmatrix}
            1 & 0 & 1 & 0 & 0\\
            0 & 1 & 0 & 1 & 0\\
            0 & 0 & 0 & 1 & 1
        \end{pmatrix},
    \end{equation}
    as used in previous examples.
    Clearly, $C_\OUT = [5,2,2]$ as we have discussed in previous examples.
    Take the inner code $C_\IN$ to be defined by the parity-check matrix
    \begin{equation}\label{eq:exconcatenationH2}
        H^{(\IN)} =
        \begin{pmatrix}
            1 & 1 & 0\\
            0 & 1 & 1
        \end{pmatrix}.
    \end{equation}
    It can be shown that $C_\IN = [3,1,3]$, which encodes a single logical bit, and therefore can be used in the concatenation construction from Definition~\ref{def:concatenation}.

    First, we construct operators $S^X\left(h^{(\IN)}_j\right)$ from the rows of \eqref{eq:exconcatenationH2} as
    \begin{align}\label{eq:exconcatenationSX}
        S^X\left(h^{(\IN)}_1\right) &= \texttt{X X I},\\
        S^X\left(h^{(\IN)}_2\right) &= \texttt{I X X},
    \end{align}
    which leads to the inner block generators
    \begin{align}
        S_{1,1} &= \texttt{X X I} \mid \texttt{I I I} \mid \texttt{I I I} \mid \texttt{I I I} \mid \texttt{I I I},\\
        S_{1,2} &= \texttt{I X X} \mid \texttt{I I I} \mid \texttt{I I I} \mid \texttt{I I I} \mid \texttt{I I I},\\
        S_{2,1} &= \texttt{I I I} \mid \texttt{X X I} \mid \texttt{I I I} \mid \texttt{I I I} \mid \texttt{I I I},\\
        S_{2,2} &= \texttt{I I I} \mid \texttt{I X X} \mid \texttt{I I I} \mid \texttt{I I I} \mid \texttt{I I I},\\
        S_{3,1} &= \texttt{I I I} \mid \texttt{I I I} \mid \texttt{X X I} \mid \texttt{I I I} \mid \texttt{I I I},\\
        S_{3,2} &= \texttt{I I I} \mid \texttt{I I I} \mid \texttt{I X X} \mid \texttt{I I I} \mid \texttt{I I I},\\
        S_{4,1} &= \texttt{I I I} \mid \texttt{I I I} \mid \texttt{I I I} \mid \texttt{X X I} \mid \texttt{I I I},\\
        S_{4,2} &= \texttt{I I I} \mid \texttt{I I I} \mid \texttt{I I I} \mid \texttt{I X X} \mid \texttt{I I I},\\
        S_{5,1} &= \texttt{I I I} \mid \texttt{I I I} \mid \texttt{I I I} \mid \texttt{I I I} \mid \texttt{X X I},\\
        S_{5,2} &= \texttt{I I I} \mid \texttt{I I I} \mid \texttt{I I I} \mid \texttt{I I I} \mid \texttt{I X X},
    \end{align}
    where the divider lines $\mid$ are shown to make it easier to distinguish the different blocks of qubits.

    Next, for the outer block generators, we need to choose a $Z$-type Pauli operator $\overline{Z}$ that serves as a logical error for the inner code $C_\IN$, where $C_\IN$ has been modified to work in the $X$-basis via the operators Eq.~\eqref{eq:exconcatenationSX}.
    An example of a logical $Z$-type error is provided by $\overline{Z} = \texttt{Z Z Z}$, which clearly commutes with both $S^X\left(h^{(\IN)}_1\right)$ and $S^X\left(h^{(\IN)}_2\right)$, but is not contained in the group generated by those operators.

    By using the notation from definition \ref{def:SOh}, the operators $S^{\overline{Z}}$ are given by
    \begin{align}
        S^{\overline{Z}}\left(h^{(\OUT)}_1\right) &= \overline{\texttt{Z}} \mid \overline{\texttt{I}} \mid \overline{\texttt{Z}} \mid \overline{\texttt{I}} \mid \overline{\texttt{I}},\\
        S^{\overline{Z}}\left(h^{(\OUT)}_2\right) &= \overline{\texttt{I}} \mid \overline{\texttt{Z}} \mid \overline{\texttt{I}} \mid \overline{\texttt{Z}} \mid \overline{\texttt{I}},\\
        S^{\overline{Z}}\left(h^{(\OUT)}_3\right) &= \overline{\texttt{I}} \mid \overline{\texttt{I}} \mid \overline{\texttt{I}} \mid \overline{\texttt{Z}} \mid \overline{\texttt{Z}},
    \end{align}
    where we define $\overline{\texttt{I}}$ as a shorthand for the $3$-qubit identity, $\texttt{I I I}$.
    Plugging in $\overline{Z}$, we finally obtain the outer block generators
    \begin{align}
        S_1 &= \texttt{Z Z Z} \mid \texttt{I I I} \mid \texttt{Z Z Z} \mid \texttt{I I I} \mid \texttt{I I I},\\
        S_2 &= \texttt{I I I} \mid \texttt{Z Z Z} \mid \texttt{I I I} \mid \texttt{Z Z Z} \mid \texttt{I I I},\\
        S_3 &= \texttt{I I I} \mid \texttt{I I I} \mid \texttt{I I I} \mid \texttt{Z Z Z} \mid \texttt{Z Z Z}.
    \end{align}

    The concatenated code is now fully defined by taking the inner block generators $S_{i,j}$ together with the outer block generators $S_i$.
    If any $2$-qubit $Z$-type error occurs---whether supported on a single block, or separated between multiple blocks---then it \textit{must} anti-commute with one of the $X$-type generators $S_{i,j}$, and therefore it will be detected by the concatenated code.
    Similarly, if any single-qubit $X$-type error occurs, it \textit{must} anti-commute with one of the $Z$-type generators $S_i$, and therefore it will also be detected by the concatenated code.
    It is not hard to see that these properties follow from the classical abilities of the inner code $C_\IN = [3,1,3]$ and outer code $C_\OUT = [5,2,2]$ to detect errors of weights $2$ and $1$, respectively.
    Extending this observation to the smallest \textit{undetectable} (i.e., logical) errors, we obtain lemma \ref{lem:concatenationdXdZ}.
\end{example}

\begin{lemma}\label{lem:concatenationdXdZ}
    Let $C_\OUT = [n_\OUT, k_\OUT, d_\OUT]$ and $C_\IN = [n_\IN, 1, d_\IN]$ be two classical codes.
    Write $d_X$ and $d_Z$ for the minimum weights of $X$-type and $Z$-type Pauli operators that are logical errors for the concatenated code $C_\OUT \circ C_\IN$.
    Then $d_X = d_\OUT$ and $d_Z = d_\IN$.
\end{lemma}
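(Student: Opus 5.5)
Set $N = n_\OUT n_\IN$, index qubits by pairs $(i,a)$ with block $i$ and position $a$, and identify $X$-type and $Z$-type Pauli operators with vectors $x, z \in \mathbb F_2^{N}$; write $x_i, z_i \in \mathbb F_2^{n_\IN}$ for their restrictions to block $i$. The stabilizer generators of Definition~\ref{def:concatenation} are either purely $X$-type (the $S_{i,j}$) or purely $Z$-type (the $S_i$). Hence $\mathcal S$ is generated by its $X$-part $\mathcal S_X$ and its $Z$-part $\mathcal S_Z$, and a pure $X$-type or $Z$-type operator lies in $\mathcal S\mathcal I$ exactly when it lies, up to phase, in $\mathcal S_X$ or $\mathcal S_Z$ respectively. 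By Lemma~\ref{lem:logical-pauli-operator}, an $X$-type operator is a logical error iff it commutes with all generators and is not in $\mathcal S\mathcal I$. An $X$-type operator automatically commutes with the $S_{i,j}$, so only the outer generators matter for $d_X$; symmetrically, a $Z$-type operator commutes with the $S_i$, so only the inner generators matter for $d_Z$.

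\emph{The case $d_Z$.} A $Z$-type operator $z$ commutes with every $S_{i,j}$ iff $H^{(\IN)} z_i = 0$ for all $i$, i.e.\ each $z_i \in C_\IN = \{0, c^\ast\}$, where $c^\ast$ is the unique nonzero codeword. Its $Z$-support in block $i$ must be a logical of the $X$-basis inner code, so $\overline Z = S^Z(c^\ast)$ up to stabilizer. Such a $z$ is $S^{\overline Z}(u)$ for some $u \in \mathbb F_2^{n_\OUT}$, and it lies in $\mathcal S_Z$ iff $u \in \mathrm{rowspace}(H^{(\OUT)}) = C_\OUT^\perp$.

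For the lower bound: if $z \notin \mathcal S_Z$ then $u \neq 0$, so some block has $z_i = c^\ast$ and $\wt(z) \ge \wt(c^\ast) = d_\IN$.

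For achievability: we need a $u$ of weight $1$ with $u \notin C_\OUT^\perp$. This exists whenever $C_\OUT^\perp \neq \mathbb F_2^{n_\OUT}$, i.e.\ $k_\OUT \ge 1$. This is the one subtle point, and I would add $k_\OUT \ge 1$ as a standing assumption (it is implicit in $d_\OUT$ being defined). Then $d_Z = d_\IN$.

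\emph{The case $d_X$.} An $X$-type operator $x$ commutes with $S_i$ iff $\sum_{m} (h^{(\OUT)}_i)_m\, (c^\ast \cdot x_m) = 0$. Define $v \in \mathbb F_2^{n_\OUT}$ by $v_m = c^\ast \cdot x_m$; then commutation is exactly $v \in C_\OUT$. Next, $\mathcal S_X$ is the direct sum over blocks of $C_\IN^\perp$, and $c^\ast \cdot x_m = 0$ iff $x_m \in (C_\IN)^\perp$. So $x \in \mathcal S_X$ iff $v = 0$, and a logical $X$ error corresponds exactly to $v \in C_\OUT \setminus \{0\}$.

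For the lower bound: each block $m$ with $v_m = 1$ has $x_m \neq 0$, so $\wt(x) \ge \wt(v) \ge d_\OUT$.

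For achievability: pick a single position $a$ with $c^\ast_a = 1$, take a minimum-weight $v \in C_\OUT$, and put one $X$ on qubit $(m, a)$ for each $m \in \mathrm{supp}(v)$. This gives weight exactly $d_\OUT$.

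The main obstacle is the bookkeeping that identifies $\mathcal S_X$ and $\mathcal S_Z$ with $\bigoplus_i C_\IN^\perp$ and $S^{\overline Z}(C_\OUT^\perp)$. This requires full-rankness and the fact that a pure-type product of generators needs only generators of that type, which holds because $X$- and $Z$-supports add independently. Lemma~\ref{lem:concatenationindependence} lets us use any parity-check matrices, so no generality is lost.
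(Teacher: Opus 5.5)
Your proof is correct, and it is the rigorous version of the argument the paper only sketches in Example~\ref{ex:concatenation}: inner $X$-type checks catch $Z$-errors block by block via $d_\IN$, and outer $Z$-type checks catch $X$-errors through the induced block-level word $v \in C_\OUT$. The paper gives no formal proof of this lemma, so your explicit achievability constructions and the $k_\OUT \ge 1$ caveat are genuine additions; note also that full-rankness of the parity-check matrices is not actually needed to identify $\mathcal S_X$ and $\mathcal S_Z$, since the row space of any parity-check matrix of a code is its dual.
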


On the surface, Lemma~\ref{lem:concatenationdXdZ} might not seem immediately helpful to understand the actual distance of a concatenated code.
After all, distance refers to the smallest weight of \textit{any} logical Pauli error, which can even include mixed errors that are made from all three Pauli matrices, $X$, $Y$, and $Z$.
This is where Lemma~\ref{lem:concatenationproperties} comes into play.
In particular, it clarifies that knowledge of $d_X$ and $d_Z$ immediately implies knowledge of the full code distance.

\begin{lemma}[Properties of Concatenated Code]\label{lem:concatenationproperties}
    Let $C_\OUT = [n_\OUT, k_\OUT, d_\OUT]$ and $C_\IN = [n_\IN, 1, d_\IN]$ be two classical codes.
    Then the concatenation $C_\OUT \circ C_\IN$ is a stabilizer code with descriptor
    \begin{equation}
        C_\OUT \circ C_\IN = \llbracket n_\OUT \times n_\IN, k_\OUT, \min(d_\OUT,d_\IN)\rrbracket.
    \end{equation}
\end{lemma}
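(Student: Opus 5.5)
The plan is to check the three parameters $n$, $k$, $d$ of the concatenated code in turn, relying on Lemmas~\ref{lem:numgenerators} and~\ref{lem:concatenationdXdZ}.

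First, the stabilizer structure. The inner block operators are $X$-type, so they commute with each other. The outer block operators are $Z$-type and also commute with each other. An inner operator $S_{i,j}$ and an outer operator $S_{i'}$ can only overlap on block $i$, and only when $h^{(\OUT)}_{i'}$ has a $1$ in position $i$. On that block the overlap is between $S^X(h^{(\IN)}_j)$ and $\overline Z$, and these commute by the choice of $\overline Z$ as a logical $Z$-type operator of the $X$-basis inner code. Hence the group is Abelian. It does not contain $-I$ because it is generated by Hermitian $X$- and $Z$-type Paulis with trivial phases. Every product of such generators has the form $\pm X(a)Z(b)$ with a fixed phase; a sign $-1$ could only appear with $a=b=0$, and commuting generators multiply to $I$ there.

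Second, the count $n$, $k$. The number of physical qubits is $n_\OUT n_\IN$ by construction. For $k$, I will show that the generators are independent. There are $n_\OUT(n_\IN-1)$ inner generators and $n_\OUT-k_\OUT$ outer generators. Independence splits into an $X$-part and a $Z$-part, since in the binary symplectic representation the two types occupy disjoint coordinates. The inner generators are independent because the rows of each $H^{(\IN)}$ are independent and they sit on disjoint blocks. For the outer generators, the $Z$-support map $h\mapsto$ supp of $S^{\overline Z}(h)$ is the linear injection $h\mapsto h\otimes z$, where $z\neq 0$ is the support of $\overline Z$. So independence of the rows of $H^{(\OUT)}$ carries over. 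Lemma~\ref{lem:numgenerators} then gives
\[
k = n_\OUT n_\IN - n_\OUT(n_\IN-1) - (n_\OUT-k_\OUT) = k_\OUT .
\]

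Third, the distance, which I expect to be the main obstacle. The upper bound is immediate from Lemma~\ref{lem:concatenationdXdZ}: there are logical operators of weight $d_\OUT$ ($X$-type) and $d_\IN$ ($Z$-type), so $d\le \min(d_\OUT,d_\IN)$. For the lower bound, let $E\propto X(a)Z(b)$ be a logical operator. Because the stabilizer group is generated separately by $X$-type and $Z$-type elements, $E$ commutes with $\mathcal S$ if and only if $X(a)$ commutes with all the $Z$-generators and $Z(b)$ commutes with all the $X$-generators. So $X(a)$ and $Z(b)$ each lie in $\mathcal N(\mathcal S)$.

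Since $E\notin\mathcal S\mathcal I$, at least one of $X(a)$ and $Z(b)$ is not in $\mathcal S\mathcal I$. This needs a short argument. Suppose both were in $\mathcal S\mathcal I$, each up to phase. Their product would then be in $\mathcal S\mathcal I$, because $\mathcal S\mathcal I$ is a group, and so $E$ would be too, which is a contradiction.

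Hence $X(a)$ or $Z(b)$ is a logical operator of pure type, and by Lemma~\ref{lem:concatenationdXdZ} it has weight at least $d_\OUT$ or $d_\IN$ respectively. Finally, $\wt(E)\ge\max(|a|,|b|)$ because the support of $E$ is the union of the supports of $a$ and $b$. This gives $d\ge\min(d_\OUT,d_\IN)$ and completes the argument.
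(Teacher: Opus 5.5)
Your proof is correct and follows the route the paper indicates. The paper gets $n$ by construction, gets $k=n-r$ from Lemma~\ref{lem:numgenerators} with $r=n_\OUT(n_\IN-1)+(n_\OUT-k_\OUT)$ independent generators (as counted in Example~\ref{ex:concatenationproperties}), and asserts that the distance follows from $d_X=d_\OUT$ and $d_Z=d_\IN$ of Lemma~\ref{lem:concatenationdXdZ} without spelling this out; your CSS splitting $E\propto X(a)Z(b)$, in which at least one factor is a pure-type logical operator, is exactly the argument that justifies that last step.
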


\begin{example}[Properties of Concatenated Code]\label{ex:concatenationproperties}
    Continue from Example~\ref{ex:concatenation}.
    It's immediately clear that the number of physical qubits in the concatenation is $n = n_\OUT \times n_\IN = 15$.

    To understand the number of logical qubits, note that we have $10$ inner block generators and $3$ outer block generators, for a total of $r = 13$ stabilizer generators.
    Then the number of logical qubits is given by $k = n - r = 2$, which indeed does match the number of logical bits $k_\OUT = 2$ of the classical code $C_\OUT$.

    Finally, regarding the code distance, Lemma~\ref{lem:concatenationproperties} suggests that the distance should be $2$.
    This is equivalent to saying that any single-qubit Pauli error should be detectable, and there is a $2$-qubit Pauli error that is logical.
    Note that this should be true when considering the \textit{entire} Pauli group---not just $X$-type or $Z$-type errors.
    Indeed, by glancing at the generators from Example~\ref{ex:concatenation}, it's easy to see that any single-qubit $X$, $Y$, or $Z$ error anticommutes with at least \textit{one} of the generators.
    Additionally, there are plenty of $2$-qubit Pauli errors that are invisible to the code; an example is given by
    \begin{equation}
        E = \texttt{X I I} \mid \texttt{I I I} \mid \texttt{I X I} \mid \texttt{I I I} \mid \texttt{I I I},
    \end{equation}
    which clearly commutes with all inner and outer block generators. 
    Therefore, the code distance is $d = 2$.

    Putting it all together, the descriptor of the concatenated code is
    \begin{equation}
        C_\OUT \circ C_\IN = \llbracket 15,2,2\rrbracket,
    \end{equation}
    as suggested by Lemma~\ref{lem:concatenationproperties}.
\end{example}

A useful benefit to keep in mind when thinking about concatenated codes is that, while there is a fixed code distance $d = \min(d_X, d_Z)$, it is possible in practice to detect many errors with weights higher than $d$.
For instance, if low-weight errors occur on every single block of the code, then the inner generators will still detect them.
The exact code distance becomes the dominant measure of performance only when we deal with very specific error channels (e.g., local stochastic Pauli noise) and consider the limit of very low per-qubit error rates.
This point will be revisited in~\cref{sec:compare}.

\subsection{$\mathbb Z_2$ Lattice Gauge Theory with and without Gauss's law constraints}\label{sec:LGT}

This work focuses on constructing error-correcting codes for $\mathbb Z_2$ lattice gauge theories (LGTs) coupled to matter.
While there are several inequivalent LGTs that can be defined with the gauge group $\mathbb Z_2$, many of them share exactly the same Hilbert space at the kinematical level, once a lattice geometry has been fixed.
Therefore, the construction of error-correcting codes needs essentially zero input from the dynamics of these theories.

That being said, the dynamics can indeed impact the relative difficulty when it comes to actually implementing a full quantum simulation protocol for a given LGT.
For instance, it is completely possible that a code $C_1$ serves better than another code $C_2$ when simulating Hamiltonian $H$; while at the same time, $C_2$ serves better than $C_1$ for simulating a different Hamiltonian $H'$.
The difference can even depend on the specific simulation strategy chosen for each Hamiltonian, or indeed the low-level hardware specifications.

In this work, we do not perform a detailed comparison of circuits for quantum simulation; instead, we consider the \textit{locality of the encoded Hamiltonian} as a simple proxy for the difficulty of simulation.
Comparisons using this proxy are performed in~\cref{sec:compare}, but they require us to introduce explicit Hamiltonians in this section.
Since the code constructions depend only on the kinematics and not the dynamics, we have organized this section based on the kinematical Hilbert space, and we discuss the prototypical Hamiltonian choice for each Hilbert space therein.
All notation introduced in this section is summarized in~\cref{tab:notation-z2-lgt}.

Before discussing the Hilbert spaces we need, it is worth outlining the general setup for all LGTs considered in this work.
Unless otherwise specified, we work on a periodic, hypercubic lattice
\begin{equation}\label{eq:def-cubic-lattice}
    \Lambda \equiv \left([0,N_1) \times [0,N_2) \times \cdots \times [0,N_D)\right) \cap \mathbb Z^D,
\end{equation}
in $D$ spatial dimensions.
Some LGTs may use staggered fermions for their matter content, which requires that the number of lattice sites $N_\mu$ along the $\mu^{\mathrm{th}}$ axis is an \textit{even integer} for $\mu \in \{1, 2, \dots, D\}$.
The set of lattice sites will be denoted $\mathcal S \cong \Lambda$, the set of links will be denoted $\mathcal L$, and the set of plaquettes will be denoted $\mathcal P$.
For any plaquette $P \in \mathcal P$, we write $\partial P$ to denote the set of links $L \in \mathcal L$ comprising its boundary.

There are two distinguished kinematical descriptions for the $\mathbb Z_2$ LGT that we will consider.
We will often refer to these as the ``standard" description and the ``gauge-fixed" description.
In the standard description, gauge fields are supported on links, matter is supported on lattice sites, and Gauss's law is enforced by hand (but not kinematically).
In the gauge-fixed description, the Hilbert space is reduced by solving Gauss's law at the kinematical level, thereby removing all matter content and leaving the gauge fields as the only dynamical degrees of freedom.
This kinematical reduction allows any Hamiltonian in the standard description to be uniquely projected down to a Hamiltonian in the gauge-fixed formulation, with identical dynamics.

\subsubsection{Standard Description}
\label{subsec:standard-form}

Lattice sites are labeled by $S_{\mathbf{n}}$, where $\mathbf{n} = (n_1, n_2, \dots, n_D) \in \Lambda$ is a lattice coordinate. 
Links are labeled by $L_{\mathbf{n},\mu}$, where $\mathbf{n} \in \Lambda$ and $\mu \in \{1,2,\dots, D\}$ denotes an available spatial direction.
By abuse of notation, we use $S_{\mathbf{n}}$ to simultaneously denote the site $S_{\mathbf{n}} \in \mathcal S$, as well as the corresponding site variable operator, and its $\mathbb Z_2$-eigenvalue (represented as a bit $S_{\mathbf{n}} \in \mathbb F_2$) in any given state within its own eigenbasis.
Similarly, we use~$L_{\mathbf{n},\mu}$ to simultaneously denote the link $L_{\mathbf{n},\mu} \in \mathcal L$, the corresponding link variable operator, and its $\mathbb Z_2$-eigenvalue (represented as a bit $L_{\mathbf{n},\mu} \in \mathbb F_2$) in any given state within its own eigenbasis.
Graphically, this labeling of lattice data is shown for the $2$D spatial lattice in \cref{fig:standard}.

\begin{figure}[htp]
    \centering
    \begin{tikzpicture}
        \pgfmathsetmacro{\a}{2.0}
        \pgfmathsetmacro{\offsetLx}{-0.55}
        \pgfmathsetmacro{\offsetLy}{0.25}
        \pgfmathsetmacro{\offsetSx}{0.45}
        \pgfmathsetmacro{\offsetSy}{0.25}
        \foreach \i in {0,1,2} {
            \foreach \j in {0,1,2} {
                \DressedPlaquette{\i*\a}{\j*\a}{\a}{black}
            }
        }
        \foreach \i in {0,1,2,3} {
            \draw (\i*\a,3*\a) -- (\i*\a,4*\a);
            \draw (3*\a,\i*\a) -- (4*\a,\i*\a);
            \foreach \j in {0,1,2,3} {
                \pgfmathtruncatemacro{\indexhor}{\j+8*\i+1}
                \pgfmathtruncatemacro{\indexver}{\indexhor+4}
                \pgfmathtruncatemacro{\indexsite}{\j+4*\i+1}
                \pgfmathtruncatemacro{\nx}{\j+1}
                \pgfmathtruncatemacro{\ny}{\i+1}
                \node at (\a*\j+\a*0.5,\a*\i-\offsetLy) {$L_{(\nx,\ny),1}$};
                \node at (\a*\j-\offsetLx,\a*\i+\a*0.5) {$L_{(\nx,\ny),2}$};
                \node at (\a*\j+\offsetSx,\a*\i+\offsetSy) {$S_{(\nx,\ny)}$};
            }
        }
    \end{tikzpicture}
    \caption{Example $4 \times 4$ lattice in two spatial dimensions, with periodic boundary conditions. The unique set of sites and links are shown, with an example indexing. The Hilbert space $\mathcal H$ for this lattice is spanned by every possible assignment of bits to the site and link variables, disregarding Gauss's law constraints.}\label{fig:standard}
\end{figure}

The Hilbert space $\mathcal H$ used in the standard description is then spanned by the simultaneous eigenbasis of all $S_{\mathbf{n}}$ and~$L_{\mathbf{n},\mu}$ operators.
In other words, $\mathcal H$ consists of a single qubit on every lattice site and link, representing the $\mathbb Z_2$-charge or $\mathbb Z_2$-flux on that site or link, respectively.

Gauge invariant states in this description can be defined via the Gauss law operators 
\begin{equation}\label{eq:gauss-law-operator}
    G_{\mathbf{n}} \equiv Z_{S_{\mathbf{n}}} \times \prod_{L\in\partial S^*_{\mathbf{n}}} Z_L,\quad \mathbf{n}\in\Lambda,\footnote{A popular convention when dealing with fermionic matter involves the staggered background charge $\varepsilon_{\mathbf{n}} \in \{-1,1\}$ in the definition of $G_{\mathbf{n}}$; in that case, the $\ket{0}$ and $\ket{1}$ states typically refer to fermion occupancies, rather than excitation level above the Dirac sea. We will choose our computational basis so that no staggered phases appear, irrespective of whether the matter is bosonic or fermionic.}
\end{equation}
where we use the notation
\begin{equation}\label{eq:productnotation}
    \prod_{L\in\partial S_{\mathbf{n}}^*} Z_L \equiv \left(\prod_{\mu=1}^D Z_{L_{\mathbf{n},\mu}}\right) \times \left(\prod_{\mu=1}^D Z_{L_{\mathbf{n}-\hat{\mu},\mu}}\right),
\end{equation}
to denote the product over all links $L$ that are incident to the lattice site $\mathbf{n}$, and $\hat{\mu}$ denotes the unit lattice vector along the $\mu$-direction.
When we subscript a Pauli operator with a lattice variable, we mean that the Pauli operator acts solely on the qubit corresponding to that lattice variable.

The gauge-invariant subspace $\mathcal H_G \subseteq \mathcal H$ is then the simultaneous $+1$-eigenspace of all Gauss's law operators~$G_{\mathbf{n}}$;
\begin{equation}\label{eq:def-gauge-inv-hilbert-space}
    \mathcal{H}_G := 
    \{ \ket{\psi} \in \mathcal{H} \mid G_{\mathbf{n}}\ket{\psi} = \ket{\psi},\, \forall\mathbf{n}\in\Lambda\}\,.
\end{equation}
Indeed, states in this subspace remain invariant under arbitrary gauge transformations.

There are several inequivalent $\mathbb Z_2$ LGTs that can be defined in the standard description, thereby carrying the same kinematical Hilbert space $\mathcal H$, the same Gauss's law operators $G_{\mathbf{n}}$, and the same resulting gauge-invariant subspace $\mathcal H_G$.
These include theories with hard-core bosonic matter \cite{HornYankielowicz1979GaugeMatterScreening, Horn1980PhaseStructureZ2Matter, HornKatznelson1980VariationalZ2, BanksSinclair1981Z2Matter}, and truncations of the $\mathrm{U}(1)$ gauge field coupled to staggered fermions \cite{Kogut:1975,Susskind:1977}.
A prototypical Hamiltonian for $\mathbb Z_2$ LGT coupled to
hard-core bosonic matter \cite{HornYankielowicz1979GaugeMatterScreening} takes the electric basis form
\begin{align}\label{eq:Hstandard}
    H = &\,-\frac{g^2}{4} \sum_{L \in \mathcal L} Z_{L} - \frac{m}{2} \sum_{S \in \mathcal S} Z_{S}-\lambda \sum_{P \in \mathcal P} \prod_{L \in \partial P} X_{L}\nonumber\\
    &\,-\frac{\kappa}{2}\sum_{\mathbf{n}\in\Lambda}\sum_{\mu=1}^D X_{S_{\mathbf{n}}} X_{L_{\mathbf{n},\mu}} X_{S_{\mathbf{n}+\hat{\mu}}}\,,
\end{align}
where the plaquette term with coupling constant $\lambda$ should be dropped in the case of spatial dimension $D = 1$.
If one wishes to couple to staggered fermions instead, an alternative $\mathbb Z_2$ LGT can be defined by a Hamiltonian of the form
\begin{align}\label{eq:Hfermion}
    H_{\rm fer} = &\,-\frac{g^2}{4} \sum_{L \in \mathcal L} Z_{L} +m \sum_{\mathbf{n}\in\Lambda} \epsilon_{S_{\mathbf{n}}} \chi^\dag_{S_{\mathbf{n}}} \chi_{S_{\mathbf{n}}} -\lambda \sum_{P \in \mathcal P} \prod_{L \in \partial P} X_{L}\nonumber\\
    &\, +i\frac{\kappa}{2}
    \sum_{\mathbf{n}\in\Lambda}\sum_{\mu=1}^D 
    \left( \chi^\dag_{S_{\mathbf{n}}}  \eta^\mu_{S_{\mathbf{n}}} X_{L_{\mathbf{n},\mu}} \chi_{S_{\mathbf{n}+\hat{\mu}}}
    -{\rm h.c.} \right)\,,
\end{align}
where $\chi_S $ and $\chi_S^\dag$ are fermion operators at site $S$ satisfying the standard canonical anti-commutation relations, and $\epsilon_{S_{\mathbf{n}}}$ and $\eta^\mu_{S_{\mathbf{n}}}$ are site-dependent signatures defined as
\begin{align}
\epsilon_{S_{\mathbf{n}}} = (-1)^{\sum_{\nu=1}^D n_\nu } ,\quad
\eta^\mu_{S_{\mathbf{n}}} = (-1)^{\sum_{\nu=1}^{\mu-1} n_\nu } .
\end{align}
It is known that the fermionic Fock space can be mapped to qubits, where the fermion operators are expressed as spin operators via appropriate transformations.
The most well-known such transformation is the Jordan-Wigner transformation \cite{Jordan:1928wi} which maps the Hamiltonian \eqref{eq:Hfermion} into a local spin Hamiltonian for $D=1$ and a non-local spin Hamiltonian for $D\geq 2$.
For the $1$D application, see for instance~\cite{Mildenberger_2025}, which can be adapted to our convention as the Hamiltonian
\begin{align}\label{eq:HfermionJW}
    H_{\rm fer} = &\,-\frac{g^2}{4} \sum_{n=1}^N Z_{L_n} - \frac{m}{2} \sum_{n=1}^N (-1)^n Z_{S_n}\nonumber\\ 
     &\,-\frac{\kappa}{4}
    \sum_{n=1}^N
    \left( X_{S_n} X_{L_n} X_{S_{n+1}} + Y_{S_n} X_{L_n} Y_{S_{n+1}} \right)\,,
\end{align}
where we used the fact that $\mathbf{n} \to n$ is a single integer in $1$D.
The Hamiltonians \eqref{eq:Hstandard} and \eqref{eq:HfermionJW} commute with all Gauss's law operators $G_{\mathbf{n}}$, implying that Gauss's law is satisfied during the dynamics, even though there are states in $\mathcal H$ that are not gauge invariant, and therefore do not satisfy Gauss's law.

For simplicity, in $2$D and higher dimensions, this paper exclusively uses \eqref{eq:Hstandard} as our $\mathbb Z_2$ LGT formulation.\footnote{In $1$D, we may occasionally use \eqref{eq:HfermionJW} for illustrative purposes.}
Since our constructions are entirely kinematical and therefore apply automatically to all $\mathbb Z_2$ LGTs on the same lattice, they are identical for all equivalent lattice Hilbert spaces, provided that the Gauss's law conventions are chosen consistently with \eqref{eq:gauss-law-operator}.

\subsubsection{Gauge-Fixed Description}

An alternative kinematical description is obtained by solving the Gauss's law constraints explicitly.
This leads to a Hilbert space spanned entirely by the link variables $L_{\mathbf{n},\mu} \in \mathcal L$.
In this formulation, the kinematical Hilbert space is isomorphic to the gauge-invariant Hilbert space $\mathcal H_G$ (see \cref{eq:def-gauge-inv-hilbert-space}), and simply consists of one qubit per link, representing the electric flux.

Any Hamiltonian in the standard description can be converted to the gauge-fixed description by demanding that each Gauss's law operator $G_{\mathbf{n}}$ acts as identity on the gauge-fixed Hilbert space, and replacing all site operators appropriately.
For example, for the hard-core boson theory~\eqref{eq:Hstandard}, we can substitute
\begin{align}
    X_S &\to I_S,\\
    Z_S &\to \prod_{L \in \partial S^*} Z_L,
\end{align}
into \eqref{eq:Hstandard} to obtain the gauge-fixed Hamiltonian
\begin{align}\label{eq:Hgaugefixed}
    H_{\text{gf}} = &\, -\frac{g^2}{4} \sum_{L \in \mathcal L} Z_L - \frac{m}{2} \sum_{\mathbf{n}\in\Lambda} \prod_{L\in\partial S^*} Z_L\nonumber\\
    &\,-\lambda \sum_{P \in \mathcal P} \prod_{L \in \partial P} X_{L} -\frac{\kappa}{2} \sum_{L \in \mathcal L} X_L\,,
\end{align}
where we reuse the product notation given as~\cref{eq:productnotation}.
As before, if we are dealing with the case of one spatial dimension, then the plaquette term is discarded.

While the gauge-fixed formulation \eqref{eq:Hgaugefixed} requires fewer qubits than the standard formulation, the locality of the Hamiltonian can be substantially higher.
For this reason, the Hamiltonian of the standard formulation \eqref{eq:Hstandard} may be a more feasible target for implementation on existing hardware, particularly in circumstances where the ability to perform multi-qubit gates is more of a bottleneck than the total number of available qubits.
\begin{table}
\centering
\caption{List of notations for cubic lattice and $\mathbb{Z}_2$ LGT in general dimension}
\label{tab:notation-z2-lgt}
\begin{tabular}{c|c}
Symbol & Name
\\
\hline
    $\Lambda$ (\cref{eq:def-cubic-lattice}) & lattice
    \\
    $D$ & space dimension
    \\
    $\mathbf{n}\in \Lambda$ & coordinates
    \\
    \hline
    $S_\mathbf{n} \in \mathcal{S}$ & sites
    \\
    $L_\mathbf{n,\mu} \in \mathcal{L}$ & links
    \\
    $P \in \mathcal{P}$ & plaquettes
    \\
    \hline
    $G_\mathbf{n}$ (\cref{eq:gauss-law-operator}) & Gauss's law
    \\
    $H$ (\cref{eq:Hstandard}) & hard-core boson Hamiltonian
    \\
    $H_{\text{fer}}$ (\cref{eq:Hfermion}) & staggered fermion Hamiltonian
    \\
    $H_{\text{gf}}$ (\cref{eq:Hgaugefixed}) & gauge-fixed bosonic Hamiltonian
    \\
    \hline
    $\mathcal{H}$ & full Hilbert space
    \\
    $\mathcal{H}_G$ (\cref{eq:def-gauge-inv-hilbert-space}) & gauge-invariant Hilbert space
\end{tabular}
\end{table}

\section{Arbitrary-Distance Gauss's Law Codes}\label{sec:gauss}

\subsection{Motivation: The RRW Code}\label{sec:RRW}

In our standard formulation of the $\mathbb Z_2$ LGT defined in \cref{sec:LGT}, a violation of Gauss's law manifests itself as a bit-flip error.
For instance, consider the $1$D lattice, where sites $S_n$ can be indexed by a single integer $1 \le n \le N$, and links $L_n$ no longer need to specify a direction $\mu$ in space.
A piece of the $1$D lattice is shown in \cref{fig:gauss-law-1D}.

The Gauss's law operator at site $S_n$ takes the form
\begin{equation}
    G_{n} \equiv Z_{L_{n-1}} Z_{S_{n}} Z_{L_{n}},
\end{equation}
which is diagonalized in the computational basis, and acts as identity on any gauge-invariant state.
Therefore, at the level of dynamics constrained by gauge invariance, this permits the locally-defined states
\begin{align}
    \ket{0}_{L_{n-1}}\ket{0}_{S_{n}}\ket{0}_{L_{n}},\\
    \ket{1}_{L_{n-1}}\ket{1}_{S_{n}}\ket{0}_{L_{n}},\\
    \ket{0}_{L_{n-1}}\ket{1}_{S_{n}}\ket{1}_{L_{n}},\\
    \ket{1}_{L_{n-1}}\ket{0}_{S_{n}}\ket{1}_{L_{n}},
\end{align}
but forbids the locally-defined states
\begin{align}
    \ket{1}_{L_{n-1}}\ket{0}_{S_{n}}\ket{0}_{L_{n}},\\
    \ket{0}_{L_{n-1}}\ket{1}_{S_{n}}\ket{0}_{L_{n}},\\
    \ket{0}_{L_{n-1}}\ket{0}_{S_{n}}\ket{1}_{L_{n}},\\
    \ket{1}_{L_{n-1}}\ket{1}_{S_{n}}\ket{1}_{L_{n}},
\end{align}
where every state in the latter group can be obtained via single bit-flip from some state(s) in the former group.

\begin{figure}[htp]
    \centering
    \begin{tikzpicture}
        \pgfmathsetmacro{\pr}{0.075}
        \pgfmathsetmacro{\offset}{0.3}
        \pgfmathsetmacro{\a}{2.0}
        \SmallCircle{0}{0}{1}{\pr}
        \draw (-\a,0) -- (\a,0);
        \node at (-0.5*\a,-\offset) {$L_{n-1}$};
        \node at (0,\offset) {$S_{n}$};
        \node at (0.5*\a,-\offset) {$L_{n}$};
    \end{tikzpicture}
    \caption{Illustration of variables participating in a single Gauss's law check for a lattice in $1$ spatial dimension.}\label{fig:gauss-law-1D}
\end{figure}

This is a \textit{classical} restriction, based solely on the values of the individual bits, with no regard to the phase information.
In other words, it merely restricts which computational basis states can be reached during gauge-invariant dynamics, without making reference to any superpositions of states.
To clarify this point, convert any computational basis state into a ``lattice bit string" by writing the eigenvalues of all link and site variables in order from left to right:
\begin{equation}
    b \equiv S_1 L_1 S_2 L_2 \cdots L_{n-1} S_n L_n \cdots S_N L_N \in \mathbb F_2^{2N}.
\end{equation}
The computational basis state $\ket{b}$ is always an eigenstate of the Gauss's law operator $G_n$, and it satisfies the Gauss's law constraint if and only if
\begin{equation}\label{eq:Gb}
    G_n \ket{b} = \ket{b} \iff L_{n-1} + S_n + L_n = 0,
\end{equation}
where addition is performed in $\mathbb F_2$.

Since the Gauss law operator commutes with the Hamiltonian, any computational basis state that contributes during exact real-time evolution governed by the Hamiltonian \eqref{eq:Hstandard} must necessarily satisfy the $N$ linear constraints given by \eqref{eq:Gb}.
As long as the initial state is gauge invariant, the dynamics is constrained guaranteed to only explore the gauge-invariant ``code" space.
For a periodic lattice containing $N$ sites in $1$D, this code can either be viewed as a classical $[2N, N, 3]$ code, or a $\llbracket 2N, N, 1\rrbracket$ quantum code.

To understand this more clearly, note that a periodic $1$D lattice with $N$ sites contains $2N$ degrees of freedom (DOFs), considering both sites and links.
Depending on whether we think in classical or quantum terms, these DOFs are either bits or qubits.
Gauss's law then provides a single linear constraint per vertex (for $N$ total constraints), which are parity checks at the classical level, or stabilizer generators at the quantum level.
This means there are exactly $2N - N = N$ logical DOFs left over when all constraints are applied.
Of course, we know exactly what those $N$ logical DOFs physically represent: they are simply the $N$ link variables, whose configuration uniquely determines the fermion content at all lattice sites.

The classical code distance is the minimum number of bit-flips needed to modify one codeword into a different codeword.
This is $3$, because in addition to flipping a prescribed link $L$, one must also flip the lattice sites at both endpoints of $L$ to preserve Gauss's law.
As discussed in~\cref{sec:quantum_code_stabilizer}, viewed as a quantum code the code distance is only 1, since only bit-flip ($X$) errors can be detected, while phase-flip ($Z$) errors can not. 

By concatenating this classical bit-flip code with a separate $[3,1,3]$ code used to handle phase-flips, we obtain a full
\begin{equation}
    [2N, N, 3] \circ [3,1,3] = \llbracket 6N,N,3 \rrbracket
\end{equation}
stabilizer code---this is precisely the \emph{RRW code}, originally introduced in~\cite{Rajput:2021trn}.

To study what errors can and can not be detected, we consider bit-flip and phase-flip errors that can occur on the lattice. 
In order for an error to be undetectable, it \textit{must} either:
\begin{itemize}
    \item Flip all $3$ phases within an inner block; or
    \item Flip a bit on a sequence of consecutive blocks that constitutes a flux string (pair of matter excitations and link excitation between them) on the lattice.
\end{itemize}
Both of these possibilities are illustrated in \cref{fig:errors-RRW-1D}, and it is easy to see that the corresponding Pauli errors will commute with the stabilizer generators of the concatenated code.
For additional intuition, \cref{fig:errors-RRW-1D} also shows an example of a higher-weight error that can not only be detected, but even corrected based on typical QEC protocols from the literature~(e.g.~\cite{Gottesman:1997zz, Knill:1996ny}).
This point will be discussed in more detail in~\cref{sec:compare}.

\begin{figure}[htp]
    \begin{minipage}{\linewidth}
        \begin{tikzpicture}
            \pgfmathsetmacro{\pr}{0.075}
            \pgfmathsetmacro{\offset}{0.2}
            \pgfmathsetmacro{\offsetmath}{0.3}
            \pgfmathsetmacro{\dx}{1.75}
            \pgfmathsetmacro{\dy}{0.75}
            \pgfmathsetmacro{\c}{0.1}
            \foreach \j in {0,1,2} {
                \foreach \i in {0,1,2,3} {
                    \SmallCircle{\i*\dx}{\j*\dy}{1}{\pr}
                }
                \draw (-0.5*\dx,\j*\dy) -- (3.5*\dx,\j*\dy);
                \node at (-0.8*\dx, \j*\dy) {$\cdots$};
                \node at (3.8*\dx,\j*\dy) {$\cdots$};
            }
            \node at (1.5*\dx,0*\dy+\offset) {\red{\texttt{Z}}};
            \node at (1.5*\dx,1*\dy+\offset) {\red{\texttt{Z}}};
            \node at (1.5*\dx,2*\dy+\offset) {\red{\texttt{Z}}};

            \node at (0.0*\dx,0*\dy-\offsetmath) {$S_{n-1}$};
            \node at (1.0*\dx,0*\dy-\offsetmath) {$S_n$};
            \node at (2.0*\dx,0*\dy-\offsetmath) {$S_{n+1}$};
            \node at (3.0*\dx,0*\dy-\offsetmath) {$S_{n+2}$};
            \node at (0.5*\dx,0*\dy-\offsetmath) {$L_{n-1}$};
            \node at (1.5*\dx,0*\dy-\offsetmath) {$L_n$};
            \node at (2.5*\dx,0*\dy-\offsetmath) {$L_{n+1}$};
        \end{tikzpicture}
    \end{minipage}

    \vspace{3em}

    \begin{minipage}{\linewidth}
        \begin{tikzpicture}
            \pgfmathsetmacro{\pr}{0.075}
            \pgfmathsetmacro{\offset}{0.2}
            \pgfmathsetmacro{\offsetmath}{0.3}
            \pgfmathsetmacro{\dx}{1.75}
            \pgfmathsetmacro{\dy}{0.75}
            \pgfmathsetmacro{\c}{0.1}
            \foreach \j in {0,1,2} {
                \foreach \i in {0,1,2,3} {
                    \SmallCircle{\i*\dx}{\j*\dy}{1}{\pr}
                }
                \draw (-0.5*\dx,\j*\dy) -- (3.5*\dx,\j*\dy);
                \node at (-0.8*\dx, \j*\dy) {$\cdots$};
                \node at (3.8*\dx,\j*\dy) {$\cdots$};
            }
            \node at (0.0*\dx,0*\dy+0.5*\pr+\offset) {\red{\texttt{X}}};
            \node at (0.5*\dx,2*\dy+0.5*\pr+\offset) {\red{\texttt{X}}};
            \node at (1.5*\dx,2*\dy+0.5*\pr+\offset) {\red{\texttt{X}}};
            \node at (2.5*\dx,0*\dy+0.5*\pr+\offset) {\red{\texttt{X}}};
            \node at (3.0*\dx,1*\dy+0.5*\pr+\offset) {\red{\texttt{X}}};

            \node at (0.0*\dx,0*\dy-\offsetmath) {$S_{n-1}$};
            \node at (1.0*\dx,0*\dy-\offsetmath) {$S_n$};
            \node at (2.0*\dx,0*\dy-\offsetmath) {$S_{n+1}$};
            \node at (3.0*\dx,0*\dy-\offsetmath) {$S_{n+2}$};
            \node at (0.5*\dx,0*\dy-\offsetmath) {$L_{n-1}$};
            \node at (1.5*\dx,0*\dy-\offsetmath) {$L_n$};
            \node at (2.5*\dx,0*\dy-\offsetmath) {$L_{n+1}$};
        \end{tikzpicture}
    \end{minipage}
    
    \vspace{3em}

    \begin{minipage}{\linewidth}
        \begin{tikzpicture}
            \pgfmathsetmacro{\pr}{0.075}
            \pgfmathsetmacro{\offset}{0.2}
            \pgfmathsetmacro{\offsetmath}{0.3}
            \pgfmathsetmacro{\dx}{1.75}
            \pgfmathsetmacro{\dy}{0.75}
            \pgfmathsetmacro{\c}{0.1}
            \foreach \j in {0,1,2} {
                \foreach \i in {0,1,2,3} {
                    \SmallCircle{\i*\dx}{\j*\dy}{1}{\pr}
                }
                \draw (-0.5*\dx,\j*\dy) -- (3.5*\dx,\j*\dy);
                \node at (-0.8*\dx, \j*\dy) {$\cdots$};
                \node at (3.8*\dx,\j*\dy) {$\cdots$};
            }
            \node at (0.0*\dx,2*\dy+0.5*\pr+\offset) {\red{\texttt{Z}}};
            \node at (0.5*\dx,0*\dy+0.5*\pr+\offset) {\red{\texttt{Y}}};
            \node at (1.0*\dx,1*\dy+0.5*\pr+\offset) {\red{\texttt{Z}}};
            \node at (1.5*\dx,1*\dy+0.5*\pr+\offset) {\red{\texttt{Z}}};
            \node at (2.0*\dx,2*\dy+0.5*\pr+\offset) {\red{\texttt{Z}}};
            \node at (2.5*\dx,0*\dy+0.5*\pr+\offset) {\red{\texttt{Z}}};
            \node at (3.0*\dx,1*\dy+0.5*\pr+\offset) {\red{\texttt{Z}}};
            \node at (3.0*\dx,0*\dy+0.5*\pr+\offset) {\red{\texttt{X}}};

            \node at (0.0*\dx,0*\dy-\offsetmath) {$S_{n-1}$};
            \node at (1.0*\dx,0*\dy-\offsetmath) {$S_n$};
            \node at (2.0*\dx,0*\dy-\offsetmath) {$S_{n+1}$};
            \node at (3.0*\dx,0*\dy-\offsetmath) {$S_{n+2}$};
            \node at (0.5*\dx,0*\dy-\offsetmath) {$L_{n-1}$};
            \node at (1.5*\dx,0*\dy-\offsetmath) {$L_n$};
            \node at (2.5*\dx,0*\dy-\offsetmath) {$L_{n+1}$};
        \end{tikzpicture}
    \end{minipage}
    \caption{Different types of Pauli errors in the RRW code, shown on top of the physical qubits representing a portion of the $1$D lattice. Each link and site is represented with $3$ physical qubits functioning as a $[3,1,3]$ phase-flip code. Top: logical (undetectable) $Z$-type error on link $L_n$. Middle: logical (undetectable) $X$-type error that creates a flux string originating at site $S_{n-1}$, spanning $3$ links $L_{n-1}$, $L_n$, and $L_{n+1}$, and terminating at site $S_{n+2}$. Bottom: higher-weight \textit{detectable and correctable} error using all Pauli matrices, $X$, $Y$, and $Z$.}\label{fig:errors-RRW-1D}
\end{figure}

Note that for errors to be undetectable, several Pauli errors have to happen in close proximity. 
Errors with sufficiently low density of affected qubits are easily handled by this code.
This is in fact a feature of many quantum error correcting codes, which are built out of local blocks of physical qubits comprising a set of logical qubits.
For errors to be undetectable, several errors have to occur in the same block. 
One difference of the Gauss' law codes discussed here is that there are no separated blocks that can be defined, since Gauss' law always acts on neighbors on both ``sides''. 
This will be discussed more later.

As this construction will be extended to higher distances, it will be interesting to investigate how this locality continues to hold.

\subsection{Extension to Arbitrary Distance}
\label{subsec:gauss-law-code-arbitrary-distance}
Now we come to our extension of the RRW code to higher distances.
The construction proceeds in two steps.
First, we construct the classical Gauss's law code, capable of reaching arbitrary distance, that handles bit-flip errors in an analogous way to the original RRW construction.
Finally, to build a fully capable quantum code, we concatenate the classical Gauss's law code with an arbitrarily chosen secondary code that can handle phase-flip errors.

The majority of the effort goes into extending the classical Gauss's law code.
For this, we have to understand more deeply why the RRW code defined in \cref{sec:RRW} has distance $3$.
The standard formulation of the $\mathbb Z_2$ LGT makes reference to site variables $S_{\mathbf{n}}$ and link variables $L_{\mathbf{n},\mu}$.
At the classical level, a configuration of the lattice assigns a single bit to each variable, and each Gauss's law constraint takes the form of the linear constraint
\begin{equation}\label{eq:classicalgauss}
    S_{\mathbf{n}} + \sum_{\mu=1}^D\left(L_{\mathbf{n},\mu} + L_{\mathbf{n}-\hat{\mu},\mu}\right) = 0,\quad \mathbf{n}\in\Lambda.
\end{equation}
The set of configurations on the lattice that satisfy these constraints are geometrically represented by disjoint collections of open strings and closed strings.
An open string consists of a pair of matter excitations joined by a continuous, non-intersecting path of excited electric flux along the links of the lattice. Closed strings are similar, but they close into a loop, and therefore do not possess matter excitations.
These two possibilities are shown in \cref{fig:open-closed-string}.

\begin{figure}[htp]
    \centering
    \begin{tikzpicture}
        \pgfmathsetmacro{\a}{2.0}
        \pgfmathsetmacro{\offsetx}{0.25}
        \pgfmathsetmacro{\offsety}{0.25}
        \foreach \i in {0,1,2,3} {
            \foreach \j in {0,1,2,3} {
                \DressedPlaquette{\i*\a}{\j*\a}{\a}{black}
            }
        }

        \pgfmathsetmacro{\pr}{0.08}
        \SmallCircle[color=blue]{0*\a}{4*\a}{1}{\pr}
        \draw[color=blue, line width=2pt] (0*\a,4*\a) -- (0.96*\a, 4*\a);
        \draw[color=blue, line width=2pt] (1*\a,3.96*\a) -- (1*\a, 3.04*\a);
        \draw[color=blue, line width=2pt] (1.04*\a,3*\a) -- (2*\a, 3*\a);
        \SmallCircle[color=blue]{2*\a}{3*\a}{1}{\pr}

        \draw[color=green!70!black, line width=2pt] (2.04*\a,2*\a) -- (2.96*\a,2*\a);
        \draw[color=green!70!black, line width=2pt] (3*\a,1.96*\a) -- (3*\a,1.04*\a);
        \draw[color=green!70!black, line width=2pt] (3.04*\a,1*\a) -- (3.96*\a,1*\a);
        \draw[color=green!70!black, line width=2pt] (4*\a,0.96*\a) -- (4*\a,0.04*\a);
        \draw[color=green!70!black, line width=2pt] (3.96*\a,0*\a) -- (3.04*\a,0*\a);
        \draw[color=green!70!black, line width=2pt] (2.96*\a,0*\a) -- (2.04*\a,0*\a);
        \draw[color=green!70!black, line width=2pt] (2*\a,0.04*\a) -- (2*\a,0.96*\a);
        \draw[color=green!70!black, line width=2pt] (2*\a,1.04*\a) -- (2*\a,1.96*\a);

        \node at (0*\a-\offsetx,4*\a+\offsety) {$1$};
        \node at (0.5*\a,4*\a+\offsety) {$1$};
        \node at (1*\a+\offsetx,4*\a+\offsety) {$0$};
        \node at (1*\a+\offsetx,3.5*\a) {$1$};
        \node at (1*\a-\offsetx,3*\a-\offsety) {$0$};
        \node at (1.5*\a,3*\a-\offsety) {$1$};
        \node at (2*\a+\offsetx,3*\a-\offsety) {$1$};

        \node at (2*\a-\offsetx,2*\a+\offsety) {$0$};
        \node at (2.5*\a,2*\a+\offsety) {$1$};
        \node at (3*\a+\offsetx,2*\a+\offsety) {$0$};
        \node at (3*\a+\offsetx,1.5*\a) {$1$};
        \node at (3*\a-\offsetx,1*\a-\offsety) {$0$};
        \node at (3.5*\a,1*\a-\offsety) {$1$};
        \node at (4*\a+\offsetx,1*\a+\offsety) {$0$};
        \node at (4*\a+\offsetx,0.5*\a) {$1$};
        \node at (4*\a+\offsetx,0*\a-\offsety) {$0$};
        \node at (3.5*\a,0*\a-\offsety) {$1$};
        \node at (3*\a,0*\a-\offsety) {$0$};
        \node at (2.5*\a,0*\a-\offsety) {$1$};
        \node at (2*\a-\offsetx,0*\a-\offsety) {$0$};
        \node at (2*\a-\offsetx,0.5*\a) {$1$};
        \node at (2*\a-\offsetx,1.1*\a) {$0$};
        \node at (2*\a-\offsetx,1.5*\a) {$1$};
    \end{tikzpicture}
    \caption{Examples of flux strings that correspond to classical lattice configurations that satisfy Gauss's law. The bits for each link and site along the string paths are shown. Blue: open string, contains matter excitations at the endpoints. Green: closed string, contains no matter excitations.}\label{fig:open-closed-string}
\end{figure}

As used by RRW, the key observation is that classical lattice configurations satisfying \eqref{eq:classicalgauss} can be interpreted as the codewords defining a classical code within the larger 
space of all possible assignments of bits to the lattice variables.
The set of all possible assignments of bits is the binary vector space $\mathbb F_2^{(D+1)N}$, since there are $DN$ links and $N$ sites, and therefore $(D+1)N$ total bits that need to be assigned in order to define a classical lattice configuration.
The set of codewords $C$ is isomorphic to the linear subspace $C \cong \mathbb F_2^{DN} \subseteq \mathbb F_2^{(D+1)N}$.
The reduction in dimension by $N$ is precisely the result of the $N$ linear constraints of the form \eqref{eq:classicalgauss}.
The distance of this code is automatically $3$, because the smallest gauge invariant excitation that can be created is a pair of matter excitations on adjacent sites, separated by a single link of flux.\footnote{Technically, if the lattice has a periodic dimension spanned by just $2$ links, there can also be a non-contractible flux string that winds around in this very short direction. To keep our discussion simple, we will neglect this possibility by assuming that the number of lattice sites in each direction is large enough.}

An extension to higher distances can be motivated from this physical observation.
Namely, we duplicate the data on each site and link, so that the number of bit-flips needed to excite a flux string is increased.
For instance, if a single copy is made of every site and link variable, then it now requires $6$ bit-flips to excite the smallest flux string, instead of the $3$ bits that were needed in the previous case.
This example is shown in \cref{fig:flux-string-double}.

\begin{figure}[htp]
    \centering
    \begin{tikzpicture}
        \pgfmathsetmacro{\a}{2.0}
        \pgfmathsetmacro{\halfsep}{0.08}
        \pgfmathsetmacro{\pr}{0.075}
        \pgfmathsetmacro{\offsetx}{0.25}
        \pgfmathsetmacro{\offsety}{0.3}

        \draw (-0.5*\a,0*\a-\halfsep) -- (1.5*\a,0*\a-\halfsep);
        \draw (-0.5*\a,0*\a+\halfsep) -- (1.5*\a,0*\a+\halfsep);
        \draw (0*\a-\halfsep,0.5*\a) -- (0*\a-\halfsep,-0.5*\a);
        \draw (0*\a+\halfsep,0.5*\a) -- (0*\a+\halfsep,-0.5*\a);
        \draw (1*\a-\halfsep,0.5*\a) -- (1*\a-\halfsep,-0.5*\a);
        \draw (1*\a+\halfsep,0.5*\a) -- (1*\a+\halfsep,-0.5*\a);

        \SmallCircle[color=blue]{0*\a}{0*\a-\halfsep}{1}{\pr}
        \SmallCircle[color=blue]{0*\a}{0*\a+\halfsep}{1}{\pr}
        \SmallCircle[color=blue]{1*\a}{0*\a-\halfsep}{1}{\pr}
        \SmallCircle[color=blue]{1*\a}{0*\a+\halfsep}{1}{\pr}
        \draw[color=blue, line width=2pt] (0*\a,0*\a+\halfsep) -- (1*\a,0*\a+\halfsep);
        \draw[color=blue, line width=2pt] (0*\a,0*\a-\halfsep) -- (1*\a,0*\a-\halfsep);

        \node at (-\offsetx,\offsety) {$1$};
        \node at (0.5*\a,\offsety) {$1$};
        \node at (1*\a+\offsetx,\offsety) {$1$};
        \node at (-\offsetx,-\offsety) {$1$};
        \node at (0.5*\a,-\offsety) {$1$};
        \node at (1*\a+\offsetx,-\offsety) {$1$};
    \end{tikzpicture}
    \caption{Excitation of the smallest flux string, if there is an identical copy of every link and site variable on the lattice. This doubles the distance of the code from $3$ to $6$.}\label{fig:flux-string-double}
\end{figure}

The duplication increases the number of bits that we need to keep track of in a classical configuration.
A useful notation is to write $S^{(i)}_{\mathbf{n}}$ for the $i^{\mathrm{th}}$ copy of the site variable $S_{\mathbf{n}}$, and $L^{(j)}_{\mathbf{n},\mu}$ for the $j^{\mathrm{th}}$ copy of the link variable $L_{\mathbf{n},\mu}$.
In the most general case, the number of repetitions of each site and link variable can be chosen independently across the lattice.
To describe this general case, we will find it useful to introduce integers $p_{\mathbf{n}} \in \mathbb N$ and $q_{\mathbf{n},\mu} \in \mathbb N$ that specify the number of times the site variable $S_{\mathbf{n}}$ and link variable $L_{\mathbf{n},\mu}$ have been repeated, respectively.

To understand the resulting code, we need to understand the Gauss's law constraints in terms of the duplicated variables.
For instance, see \cref{fig:gauss-vertex} for an example vertex where every $p_{\mathbf{n}} = 2$ and every $q_{\mathbf{n},\mu} = 2$.

\begin{figure}[htp]
    \centering
    \begin{tikzpicture}
        \pgfmathsetmacro{\a}{2.0}
        \pgfmathsetmacro{\halfsep}{0.08}
        \pgfmathsetmacro{\pr}{0.075}
        \pgfmathsetmacro{\offsetx}{0.4}
        \pgfmathsetmacro{\offsety}{0.4}

        \draw (-0.75*\a,0*\a-\halfsep) -- (0.75*\a,0*\a-\halfsep);
        \draw (-0.75*\a,0*\a+\halfsep) -- (0.75*\a,0*\a+\halfsep);
        \draw (0*\a-\halfsep,0.75*\a) -- (0*\a-\halfsep,-0.75*\a);
        \draw (0*\a+\halfsep,0.75*\a) -- (0*\a+\halfsep,-0.75*\a);

        \SmallCircle{0*\a}{0*\a-\halfsep}{1}{\pr}
        \SmallCircle{0*\a}{0*\a+\halfsep}{1}{\pr}

        \node at (-\offsetx,\offsety) {$S^{(1)}_{\mathbf{n}}$};
        \node at (\offsetx,-\offsety) {$S^{(2)}_{\mathbf{n}}$};
        \node at (0.75*\a+\offsetx,\offsety) {$L^{(1)}_{\mathbf{n},x}$};
        \node at (0.75*\a+\offsetx,-\offsety) {$L^{(2)}_{\mathbf{n},x}$};
        \node at (-0.75*\a-\offsetx,\offsety) {$L^{(1)}_{\mathbf{n}-\hat{x},x}$};
        \node at (-0.75*\a-\offsetx,-\offsety) {$L^{(2)}_{\mathbf{n}-\hat{x},x}$};
        \node at (-\offsety,0.75*\a+\offsetx) {$L^{(1)}_{\mathbf{n},y}$};
        \node at (\offsety,0.75*\a+\offsetx) {$L^{(2)}_{\mathbf{n},y}$};
        \node at (-1.5*\offsety,-0.75*\a-0.5*\offsetx) {$L^{(1)}_{\mathbf{n}-\hat{y},y}$};
        \node at (1.5*\offsety,-0.75*\a-0.5*\offsetx) {$L^{(2)}_{\mathbf{n}-\hat{y},y}$};
    \end{tikzpicture}
    \caption{Example vertex showing labeling of duplicate lattice variables. Several copies of Gauss's law constraints are satisfied between all the duplicated variables.}\label{fig:gauss-vertex}
\end{figure}

It turns out that there are two mathematically equivalent ways to express the linear constraints enforced at a vertex $\mathbf{n}\in\Lambda$ by Gauss's law, under the assumption that the classical lattice data has simply been copied into each respective duplicated variable.
The first way to express this is to imagine selecting a representative value for the lattice variable from each link and site, and then demanding that a Gauss's law constraint holds between them.
Enforcing this for every possible choice of representatives gives a constraint of the form
\begin{equation}\label{eq:gaussrepresentative}
    S^{(a)}_{\mathbf{n}} + \sum_{\mu=1}^D \left(L^{(b_\mu)}_{\mathbf{n},\mu} + L^{(c_\mu)}_{\mathbf{n}-\hat{\mu},\mu}\right) = 0,
\end{equation}
where $1\le a\le p_{\mathbf{n}}$ is the index that selects any of the possible representatives of $S_{\mathbf{n}}$, and the sequences of indices $1\le b_{\mu} \le q_{\mathbf{n},\mu}$ and $1\le c_{\mu}\le q_{\mathbf{n}-\hat{\mu},\mu}$ select any of the possible representatives of the surrounding link variables.
The number of constraints that can be written down of the form \eqref{eq:gaussrepresentative} is given by the combinatorial factor
\begin{equation}
    \# \text{ constraints} = p_{\mathbf{n}} \times \prod_{\mu=1}^D q_{\mathbf{n},\mu} q_{\mathbf{n}-\hat{\mu},\mu},
\end{equation}
but it turns out that the number of \textit{independent} linear constraints within this collection is only
\begin{equation}\label{eq:gaussrepresentativeindependent}
    \# \text{ indep. constraints} = -2D + p_{\mathbf{n}} + \sum_{\mu=1}^D \left(q_{\mathbf{n},\mu} + q_{\mathbf{n}-\hat{\mu},\mu}\right).
\end{equation}

To understand why \eqref{eq:gaussrepresentativeindependent} holds true, we come to the second way in which the linear constraints can be expressed at any vertex $\mathbf{n}\in\Lambda$.
Namely, in homogeneous form,
\begin{align}\label{eq:gaussrepetitionconstraints}
    S^{(1)}_{\mathbf{n}} + \sum_{\mu=1}^D \left(L^{(1)}_{\mathbf{n},\mu} + L^{(1)}_{\mathbf{n}-\hat{\mu},\mu}\right) &= 0,\\
    S^{(a)}_{\mathbf{n}} + S^{(a')}_{\mathbf{n}} &= 0,\\
    L^{(b)}_{\mathbf{n},\mu} + L^{(b')}_{\mathbf{n},\mu} &= 0,\\
    L^{(c)}_{\mathbf{n}-\hat{\mu},\mu} + L^{(c')}_{\mathbf{n}-\hat{\mu},\mu} &= 0,
\end{align}
where it's important to keep in mind that addition is still defined over $\mathbb F_2$.
The linear constraints \eqref{eq:gaussrepetitionconstraints} illustrate that there is really only a single Gauss's law constraint, applied to one set of representative variables (for which we have arbitrarily chosen the index $1$).
The rest of the constraints are simply the statement that every representative of a lattice variable equals every other representative.
But this is basically just a classical repetition code on the duplicate variables.
Setting $M$ binary variables equal to each other requires $M-1$ constraints.
Therefore, the number of independent constraints is obtained by summing
\begin{align}
    \# & \text{ indep. constraints} \nonumber\\
    & \qquad =  p_{\mathbf{n}} - 1 + \sum_{\mu=1}^D\left(q_{\mathbf{n},\mu} - 1\right)\nonumber\\
    &\qquad\qquad+ \sum_{\mu=1}^D\left(q_{\mathbf{n}-\hat{\mu},\mu} - 1\right)+ 1\,,\nonumber\\
    & \qquad = \, -2D + p_{\mathbf{n}}+ \sum_{\mu=1}^D\left(q_{\mathbf{n},\mu} + q_{\mathbf{n}-\hat{\mu},\mu}\right)
\end{align}
where the final $+1$ comes from including the single representative constraint for Gauss's law.
Indeed, this exactly simplifies to \eqref{eq:gaussrepresentativeindependent}, for any lattice site $\mathbf{n}$.

Knowing the number of independent constraints at any vertex $\mathbf{n} \in \Lambda$ puts us in the perfect position to count the number of logical bits encoded by this code.
This is an important calculation to perform as a consistency check to make sure that our duplication procedure did not inadvertently add any logical DOFs that were not already present in the original $\mathbb Z_2$ LGT.
To count the number of logical DOFs, we need to know $r$, the number of rows in the full-rank parity-check matrix $H$ for this code.
Na\"ively, we could sum the number of independent linear constraints per site according to \eqref{eq:gaussrepresentativeindependent}, over all sites $\mathbf{n}\in\Lambda$,
\begin{align}
    r_{\text{na\"ive}} &= \sum_{\mathbf{n}\in\Lambda} \left(-2D + p_{\mathbf{n}} + \sum_{\mu=1}^D\left(q_{\mathbf{n},\mu} + q_{\mathbf{n}-\hat{\mu},\mu}\right)\right)\\
    & = -2DN + \sum_{\mathbf{n}\in\Lambda} p_{\mathbf{n}} + 2\sum_{\mathbf{n}\in\Lambda} \sum_{\mu=1}^D q_{\mathbf{n},\mu}.
\end{align}

But note that not all constraints are independent.
According to \eqref{eq:gaussrepetitionconstraints}, all representatives of any link adjacent to a site $\mathbf{n}\in\Lambda$ are set equal to each other by the vertex constraints at that site.
When collecting together this set of constraints for every lattice site, we double count those link constraints---once for each endpoint of the link.
This is the only double counting present, as a single representative of Gauss's law is independent at every lattice site, and the constraints equating all copies of a given site variable are also independent at different sites.
Therefore, the amount of over counting is precisely given by the number of constraints that equate the copies of each link variable, i.e.,
\begin{align}
    \# \text{ over counted} &= \sum_{\mathbf{n}\in\Lambda} \sum_{\mu=1}^D \left(q_{\mathbf{n},\mu}-1\right)\\
    & = -DN + \sum_{n\in\Lambda} \sum_{\mu=1}^D q_{\mathbf{n},\mu},
\end{align}
when summed over all lattice sites.

The number of independent parity checks is therefore
\begin{align}
    r &= r_{\text{na\"ive}} - \# \text{ over counted}\\
    & = -DN + \sum_{\mathbf{n}\in\Lambda} p_{\mathbf{n}} + \sum_{\mathbf{n}\in\Lambda} \sum_{\mu=1}^D q_{\mathbf{n},\mu}.
\end{align}
Finally, note that
\begin{equation}
    n = \sum_{\mathbf{n}\in\Lambda} p_{\mathbf{n}} + \sum_{\mathbf{n}\in\Lambda} \sum_{\mu=1}^D q_{\mathbf{n},\mu}
\end{equation}
is simply the total number of physical bits used by this code, since it adds up the number of duplicates used by every link and site variable.
The number of independent logical DOFs used by this code is then exactly
\begin{align}
    k &= n - r\\
    & = DN,
\end{align}
which indeed agrees with our expectation that the independent kinematical DOFs are in bijective correspondence with the original $DN$ link variables in the standard formulation of the $\mathbb Z_2$ LGT.

Having performed this consistency check, we can now summarize the classical construction of this code with definition \ref{def:gaussclassical} and lemma \ref{lem:gaussclassicalalternative}.

\begin{definition}[Classical Gauss's Law Code]\label{def:gaussclassical}
    Assume we are given positive integers $p_{\mathbf{n}}, q_{\mathbf{n},\mu} \in \mathbb N$ for every $\mathbf{n} \in \Lambda$ and $\mu \in \{1,\dots,D\}$.
    Let $n$ denote the total number of physical bits, defined by
    \begin{equation}
        n = \sum_{\mathbf{n}\in\Lambda} p_{\mathbf{n}} + \sum_{\mathbf{n}\in\Lambda} \sum_{\mu=1}^D q_{\mathbf{n},\mu}.
    \end{equation}
    Suppose that we identify the binary vector space $\mathbb F_2^{n}$ with assignments of bits to duplicated lattice variables
    \begin{equation}
        S^{(i)}_{\mathbf{n}}, L^{(j)}_{\mathbf{n},\mu} \in \mathbb F_2,
    \end{equation}
    where $1\le i\le p_{\mathbf{n}}$ and $1\le j\le q_{\mathbf{n},\mu}$.
    Then the \textit{classical Gauss's law code} built from the integers $p_{\mathbf{n}}$ and $q_{\mathbf{n},\mu}$ is the linear subspace $C \subseteq \mathbb F_2^n$ satisfying the constraints
    \begin{equation}
        S^{(a)}_{\mathbf{n}} + \sum_{\mu=1}^D \left(L^{(b_\mu)}_{\mathbf{n},\mu} + L^{(c_\mu)}_{\mathbf{n}-\hat{\mu},\mu}\right) = 0,
    \end{equation}
    for every integer $1 \le a \le p_{\mathbf{n}}$, and every pair of integer sequences $b_{\mu}$ and $c_{\mu}$ whose values satisfy $1\le b_{\mu} \le q_{\mathbf{n},\mu}$ and $1\le c_{\mu} \le q_{\mathbf{n}-\hat{\mu},\mu}$.
\end{definition}

\begin{lemma}[Alternative Characterization]\label{lem:gaussclassicalalternative}
    Assume we are given positive integers $p_{\mathbf{n}}, q_{\mathbf{n},\mu} \in \mathbb N$ for every $\mathbf{n} \in \Lambda$ and $\mu \in \{1,\dots,D\}$.
    Let $n$ denote the total number of physical bits, defined by
    \begin{equation}
        n = \sum_{\mathbf{n}\in\Lambda} p_{\mathbf{n}} + \sum_{\mathbf{n}\in\Lambda} \sum_{\mu=1}^D q_{\mathbf{n},\mu}.
    \end{equation}
    Suppose that we identify the binary vector space $\mathbb F_2^{n}$ with assignments of bits to duplicated lattice variables
    \begin{equation}
        S^{(i)}_{\mathbf{n}}, L^{(j)}_{\mathbf{n},\mu} \in \mathbb F_2,
    \end{equation}
    where $1\le i\le p_{\mathbf{n}}$ and $1\le j\le q_{\mathbf{n},\mu}$.
    Let $C \subseteq \mathbb F_2^n$ denote the linear subspace satisfying the constraints
    \begin{align}
        S^{(1)}_{\mathbf{n}} + \sum_{\mu=1}^D \left(L^{(1)}_{\mathbf{n},\mu} + L^{(1)}_{\mathbf{n}-\hat{\mu},\mu}\right) &= 0,\\
        S^{(a)}_{\mathbf{n}} + S^{(a')}_{\mathbf{n}} &= 0,\\
        L^{(b)}_{\mathbf{n},\mu} + L^{(b')}_{\mathbf{n},\mu} &= 0,\\
        L^{(c)}_{\mathbf{n}-\hat{\mu},\mu} + L^{(c')}_{\mathbf{n}-\hat{\mu},\mu} &= 0,
    \end{align}
    for every pair of integers $1\le a,a'\le p_{\mathbf{n}}$, every pair of integers $1\le b,b'\le q_{\mathbf{n},\mu}$, and every pair of integers $1\le c,c'\le q_{\mathbf{n}-\hat{\mu},\mu}$.
    Then $C$ is exactly the classical Gauss's law code defined by the collection of integers $p_{\mathbf{n}}$ and $q_{\mathbf{n},\mu}$.
\end{lemma}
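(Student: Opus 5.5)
We show the two systems of linear constraints define the same subspace of $\mathbb F_2^n$ by proving each constraint of one system is a linear combination (over $\mathbb F_2$) of constraints of the other. Call the system of Definition~\ref{def:gaussclassical} the ``representative'' constraints, and the system in Lemma~\ref{lem:gaussclassicalalternative} the ``repetition'' constraints. Then $C_{\text{rep}} \subseteq C_{\text{def}}$ and $C_{\text{def}} \subseteq C_{\text{rep}}$ follow from the two inclusions of row spaces.

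\emph{Repetition constraints imply representative constraints.} Fix a vertex $\mathbf n$ and indices $a$, $b_\mu$, $c_\mu$. Take the single Gauss constraint on the copies with index $1$. Add the equality constraints $S^{(1)}_{\mathbf n}+S^{(a)}_{\mathbf n}=0$, $L^{(1)}_{\mathbf n,\mu}+L^{(b_\mu)}_{\mathbf n,\mu}=0$ and $L^{(1)}_{\mathbf n-\hat\mu,\mu}+L^{(c_\mu)}_{\mathbf n-\hat\mu,\mu}=0$ for each $\mu$. The copies with index $1$ cancel in pairs, leaving exactly $S^{(a)}_{\mathbf n}+\sum_\mu(L^{(b_\mu)}_{\mathbf n,\mu}+L^{(c_\mu)}_{\mathbf n-\hat\mu,\mu})=0$. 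If an index already equals $1$, the corresponding equality constraint is trivial and is omitted.

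\emph{Representative constraints imply repetition constraints.} The Gauss constraint with all indices equal to $1$ is itself a representative constraint. For the site equalities, take two representative constraints at $\mathbf n$ that differ only in the site index ($a$ versus $a'$, all link indices $1$). Their sum is $S^{(a)}_{\mathbf n}+S^{(a')}_{\mathbf n}=0$. Similarly, summing two representative constraints at $\mathbf n$ that differ only in $b_\mu$ (respectively $c_\mu$) gives $L^{(b)}_{\mathbf n,\mu}+L^{(b')}_{\mathbf n,\mu}=0$ (respectively the $\mathbf n-\hat\mu$ link equality). This is legitimate because every link $L_{\mathbf n,\mu}$ is incident to the vertex $\mathbf n$, and $L_{\mathbf n-\hat\mu,\mu}$ is incident to $\mathbf n$ as well, so the needed pair of constraints exists.

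No step is really an obstacle; the only care needed is bookkeeping. Each pair of constraints combined must lie at the same vertex and differ in exactly one index, so that $\mathbb F_2$ cancellation isolates the desired two-variable equality. One should also state once that, for subspaces defined as common kernels, equal row spans imply equal kernels. This immediately gives $C_{\text{def}}=C_{\text{rep}}$.
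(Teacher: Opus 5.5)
Your proof is correct and takes the same approach as the paper. In the discussion preceding Definition~\ref{def:gaussclassical}, the paper explains that the all-representatives Gauss constraints reduce to a single Gauss constraint on the index-$1$ copies plus repetition-code equalities among copies; you make that equivalence precise by showing each system's constraints lie in the $\mathbb F_2$-span of the other's, so the row spans and hence the kernels coincide.
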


We currently know the number of physical and logical bits used by the classical Gauss's law code defined from a collection of integers $p_{\mathbf{n}}$ and $q_{\mathbf{n},\mu}$.
But we have not yet discussed the code distance in this general framework, apart from the initial motivation that led us to introduce duplications of lattice variables in the first place.

As we saw in \cref{fig:flux-string-double}, the distance is related to gauge-invariant excitations that correspond to flux strings on the lattice.
To create any such excitation, all the bits of every participating lattice variable must be flipped.
The specific lattice variables involved will depend on whether we have an open string or closed string, and the specific shape of that string.

An open string $\alpha$ corresponds to a continuous, non-intersecting path of links between two lattice sites $\mathbf{n},\mathbf{n}'\in\Lambda$ with $\mathbf{n}\neq\mathbf{n'}$.
For a path spanning $\ell$ links, we can write $L_{\mathbf{n_1},\mu_1}, L_{\mathbf{n_2},\mu_2}, \dots L_{\mathbf{n_{\ell}},\mu_{\ell}}$ for the links on this path.
In this case, the weight of the logical error corresponding to exciting this open string is given by
\begin{equation}
    \wt(\alpha) = p_{\mathbf{n}} + q_{\mathbf{n_1},\mu_1} + q_{\mathbf{n_2},\mu_2} + \cdots + q_{\mathbf{n_{\ell}},\mu_{\ell}} + p_{\mathbf{n'}}.
\end{equation}
This is the number of bit-flips needed to flip every copy of each lattice variable involved in the string $\alpha$; namely, the two lattice sites at the endpoints, and all links along the path.

A closed string $\beta$ corresponds to a continuous, non-intersecting closed path of links.
For a closed path spanning $\ell$ links, we can again write $L_{\mathbf{n_1},\mu_1}, L_{\mathbf{n_2},\mu_2}, \dots L_{\mathbf{n_{\ell}},\mu_{\ell}}$ for the links along this path.
The weight in this case is given by
\begin{equation}
    \wt(\beta) = q_{\mathbf{n_1},\mu_1} + q_{\mathbf{n_2},\mu_2} + \cdots + q_{\mathbf{n_{\ell}},\mu_{\ell}},
\end{equation}
which reflects the fact that there are no matter excitations on a closed string.

The classical code distance is simply the minimum weight of any flux string, taken over all open strings and closed strings on the lattice.
This is summarized by Lemma~\ref{lem:gaussdistance}.

\begin{lemma}[Classical Gauss's Law Code Distance]\label{lem:gaussdistance}
    Let $C$ be the classical Gauss's law code defined by the collection of integers $p_{\mathbf{n}},q_{\mathbf{n},\mu} \in \mathbb N$.
    Let
    \begin{equation}
        w_o \equiv \min_{\alpha} \wt(\alpha)
    \end{equation}
    denote the minimum weight of any open string $\alpha$, and let
    \begin{equation}
        w_c \equiv \min_{\beta} \wt(\beta)
    \end{equation}
    denote the minimum weight of any closed string $\beta$.
    Then the code distance of $C$ is exactly
    \begin{equation}
        d = \min(w_o, w_c).
    \end{equation}
\end{lemma}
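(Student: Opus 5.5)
Using Lemma~\ref{lem:gaussclassicalalternative}, every codeword $c\in C$ assigns a single well-defined bit to each lattice variable: all copies $S^{(i)}_{\mathbf{n}}$ agree, and all copies $L^{(j)}_{\mathbf{n},\mu}$ agree. Hence $c$ is the image of an underlying lattice configuration $(S_{\mathbf{n}},L_{\mathbf{n},\mu})$ satisfying the ordinary Gauss's law constraints \eqref{eq:classicalgauss}. Its weight is
\begin{equation}
    \wt(c)=\sum_{\mathbf{n}:S_{\mathbf{n}}=1}p_{\mathbf{n}}+\sum_{(\mathbf{n},\mu):L_{\mathbf{n},\mu}=1}q_{\mathbf{n},\mu}.
\end{equation}
The distance \eqref{eq:d} is therefore the minimum of this weighted count over nonzero gauge-invariant configurations, and the proof has two directions.

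\textbf{Upper bound.} Any open string $\alpha$ (with distinct endpoints and a non-self-intersecting path) satisfies \eqref{eq:classicalgauss}: each interior vertex of the path meets exactly two excited links and no excited site, and each endpoint meets one excited link and one excited site. So $\alpha$ is a nonzero codeword of weight $\wt(\alpha)$. Similarly, every closed string $\beta$ has two excited links at each of its vertices, so it is a nonzero codeword of weight $\wt(\beta)$. Therefore $d\le\min(w_o,w_c)$.

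\textbf{Lower bound.} Take any nonzero codeword, i.e.\ a nonzero configuration with $S_{\mathbf{n}}=\sum_{L\ni\mathbf{n}}L$ at every site. Regard the excited links as the edge set $E$ of a subgraph of the lattice. Gauss's law says the excited sites are exactly the odd-degree vertices of $E$, so $S$ is determined by $E$. There are two cases.
\begin{enumerate}[label=(\roman*)]
    \item If $E=\emptyset$, then all $S_{\mathbf{n}}=0$, so the configuration is zero, which is excluded.
    \item If $E$ has an odd-degree vertex $\mathbf{n}$, then, since the number of odd-degree vertices in any graph is even, the connected component of $\mathbf{n}$ in $E$ contains another odd-degree vertex $\mathbf{n}'\ne\mathbf{n}$. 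A shortest path in $E$ from $\mathbf{n}$ to $\mathbf{n}'$ is simple, so it defines an open string $\alpha$. All its links lie in $E$ and both its endpoints are excited, so $\alpha$ is supported inside the configuration. Since the weights $p,q$ are positive, $\wt(c)\ge\wt(\alpha)\ge w_o$.
    \item If every vertex of $E$ has even degree and $E\neq\emptyset$, then $E$ contains a cycle: walk along unused edges until a vertex repeats, and extract a simple closed path $\beta\subseteq E$. This gives $\wt(c)\ge\wt(\beta)\ge w_c$.
\end{enumerate}
Combining the cases gives $d\ge\min(w_o,w_c)$.

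The only step requiring care is this lower-bound decomposition, in particular the handshake-lemma argument producing the second odd vertex in the same component and the extraction of simple paths and cycles. It is routine graph theory, and positivity of $p_{\mathbf{n}},q_{\mathbf{n},\mu}$ makes the monotonicity $\wt(c)\ge\wt(\text{sub-string})$ immediate. One should also note that non-contractible windings count as closed strings, so $w_c$ correctly includes them.
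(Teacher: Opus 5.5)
Your proof is correct and follows the same idea as the paper, which justifies this lemma only informally: codewords of the classical Gauss's law code are collections of open and closed flux strings (with every copy of each participating variable flipped), so the lightest nonzero codeword is a single lightest string. Your handshake-lemma and cycle-extraction step makes the lower bound rigorous more cleanly than the paper's appeal to a decomposition into disjoint strings. It needs only one open or closed string whose support lies inside the codeword's support, so it also sidesteps the issue that strings in such a decomposition can share vertices, in which case site weights would not simply add.
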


Note that for the case where all $p_{\mathbf{n}} = q_{\mathbf{n},\mu} = p$ equal, one finds $w_o = 3 p$, $w_c = 4p$, giving $d = 3p$.
For other choices of $p_{\mathbf{n}}$ and $q_{\mathbf{n},\mu}$, however, other values of $d$ can be achieved. 
For example, using $q_{\mathbf{n},\mu} = 1$, while $p_{\rm even} = 1$, $p_{\rm odd} = 3$ gives $w_o = w_c = d = 4$, as can be seen in \cref{fig:longer-string}. 
From this example, one also notices the important fact that the minimum must be taken over flux strings of \textit{every} possible length on the lattice.
The smallest string measured according to lattice units will not necessarily have the smallest weight.
\begin{figure}[htp]
    \centering
    \begin{tikzpicture}
        \pgfmathsetmacro{\a}{2.0}
        \pgfmathsetmacro{\halfsep}{0.08}
        \pgfmathsetmacro{\pr}{0.075}
        \pgfmathsetmacro{\offset}{0.5}

        \SmallCircle[color=blue]{0*\a}{0*\a}{1}{\pr}
        \SmallCircle[color=blue]{2*\a}{0*\a}{1}{\pr}
        \draw[color=blue, line width=2pt] (0*\a,0*\a) -- (2*\a,0*\a);
        \SmallCircle[color=black]{1*\a}{0*\a}{1}{\pr}
        \SmallCircle[color=black]{1*\a}{0*\a+2*\halfsep}{1}{\pr}
        \SmallCircle[color=black]{1*\a}{0*\a-2*\halfsep}{1}{\pr}

        \node at (0*\a,0*\a-\offset) {$S_{n-1}$};
        \node at (0.5*\a,0*\a-\offset) {$L_{n-1}$};
        \node at (1*\a,0*\a-\offset) {$S_{n}$};
        \node at (1.5*\a,0*\a-\offset) {$L_{n}$};
        \node at (2*\a,0*\a-\offset) {$S_{n+1}$};
    \end{tikzpicture}
    \caption{Example where a longer string has lower weight. The number of copies of each lattice variable are given by $(p_{n-1},q_{n-1},p_{n},q_{n},p_{n+1}) = (1,1,3,1,1)$. The length-$2$ string shown in blue has weight $4$, while the length-$1$ strings that can be made here would both have weight $5$.}\label{fig:longer-string}
\end{figure} 
Understanding how to choose the values $p_{\mathbf{n}}$ and $q_{\mathbf{n},\mu}$ to obtain a given value $d$ in the most efficient way will be discussed in~\cref{sec:optimal}.

Having discussed our construction of classical Gauss's law codes to arbitrary distance, all that remains is to form a full quantum code by concatenating with an appropriately chosen inner code.
In \cref{sec:prelim}, we opted for a basic definition of concatenation (see Definition~\ref{def:concatenation}) which functions purely as an operation between two classical codes, where one code handles bit-flips and the other handles phase-flips.
Therefore, to keep the discussion both accessible and concise, we write Definition~\ref{def:gaussquantum} under the assumption that the concatenation is performed using another classical code.

\begin{definition}[Quantum Gauss's Law Code]\label{def:gaussquantum}
    Let $C_\OUT = [n_\OUT,k_\OUT,d_\OUT]$ be a classical Gauss's law code, and let $C_\IN = [n_\IN,1,d_\IN]$ be an arbitrary classical code that encodes a single logical bit.
    Then the concatenation
    \begin{equation}
        C_\OUT \circ C_\IN = \llbracket n_\OUT n_\IN, k_\OUT, \min(d_\OUT,d_\IN) \rrbracket
    \end{equation}
    is called the \textit{quantum Gauss's law code} built from $C_\OUT$ and $C_\IN$.
\end{definition}

It is worth noting that the construction of the full quantum code works with any general concatenation strategy.
This does lead to a few notable differences between quantum Gauss's law codes defined by the basic concatenation strategy (Definition \ref{def:gaussquantum}) and those obtained from other choices of concatenation.
For instance, one may choose a quantum code
$C_\IN=\llbracket n_\IN,k_\IN=n_\OUT,d_\IN\rrbracket$ as an inner code, identify its logical qubits with the physical qubits of $C_\OUT$, and replace the Pauli operators in the stabilizers of $C_\OUT$ with logical Pauli operators of $C_\IN$. This can lead to a more compact encoding, i.e., with a higher encoding rate $k/n$. The construction for the code distance three was given in Ref.~\cite{Spagnoli:2024mib}.
However, the stabilizers after concatenation may have weights that grow with the lattice size $N$, making their measurement more costly.
On the other hand, Definition~\ref{def:gaussquantum} \textit{always} produces a quantum \textit{low-density parity check} (LDPC) code, meaning that the weight of stabilizer generators and the number of stabilizers acting on each qubit are bounded independently of $N$.

Therefore, the class of quantum Gauss's law codes defined by concatenation in the form of Definition~\ref{def:gaussquantum} is a reasonable target for further investigation.
Our optimality theorems in \cref{sec:optimal} for the encoding rate---specifically Theorems~\ref{thm:optimal1D} and \ref{thm:optimal2D}---depend mostly on our construction of the classical Gauss's law code, as elaborated further in \cref{app:proofs}.
Therefore, these results suffice to provide an insight into what we can expect from any concatenated construction for a quantum Gauss's law code.

\subsection{Optimal Encoding Rate}\label{sec:optimal}

Now we come to the question of the optimal encoding rate $k/n$ for quantum Gauss's law codes defined by our previous class of constructions (Definitions \ref{def:gaussclassical} and \ref{def:gaussquantum}).
Equivalently, we are interested in determining which specific choice of multiplicities $p_{\mathbf{n}}$ and $q_{\mathbf{n},\mu}$ is necessary to minimize the total number of physical qubits $n$, subject to a desired code distance $d$ and fixed lattice size $N$.
The answer can get quite messy to write down, due to edge cases that depend on the parity of lattice sites, and so on---but the underlying logic is always the same.
In order to keep our discussion clean and concise, and to ensure that the logic is easy to see, we will make a few simplifying assumptions:
\begin{enumerate}[label=(\roman*)]
    \item We assume that the number of lattice sites $N_{\mu}$ is even along every axis $\mu \in \{1,2,\dots,D\}$.
    \item We assume that the number of lattice sites $N_{\mu}$ is larger than the code distance $d$ along every axis $\mu$.
\end{enumerate}

Assumption (i) is a standard requirement for the $\mathbb Z_2$ LGT coupled to staggered fermions; and while it is not strictly necessary for a more generic theory, it will help to write a relatively simple expression for the optimal choice of $p_{\mathbf{n}}$ and $q_{\mathbf{n},\mu}$, without getting into distracting technicalities.
Assumption (ii) is useful just for the sake of ensuring that a non-contractible string doesn't contribute to the optimization procedure---as long as every $N_{\mu} \ge d$, it is guaranteed that every non-contractible string already has weight $\ge d$, and therefore there are no constraints placed on $p_{\mathbf{n}}$ and $q_{\mathbf{n},\mu}$ by such strings. 
For typical lattices that are used to extract interesting physics, this constraint is always satisfied.

We discuss only the $1$D and $2$D cases in explicit detail.
The $1$D case is special because there are no plaquettes, and therefore all logical bit-flip errors arise from open strings.
The $2$D case introduces the full complexity of the problem, requiring consideration of both open strings and closed strings.
The extension to higher dimensions follows straightforwardly once the $2$D case is understood, and we do not write it explicitly here.
For the technical proofs of our optimality theorems, see \cref{app:proofs}.

\subsubsection{$1$D Case}\label{sec:optimal1D}

The intuition for the optimally efficient code construction in $1$D is as follows: the smallest possible gauge-invariant excitation is formed from exciting a pair of matter sites separated by $1$ link.
This excitation requires \textit{two} lattice sites to be flipped, but only \textit{one} link to be flipped.
Therefore, there is a sense in which the code distance is affected more by the difficulty of flipping lattice sites, in comparison to that of flipping links.
Of course, the situation is slightly complicated due to the fact that longer strings also need to be considered, as in \cref{fig:longer-string}; these technical details are taken into account in Theorem~\ref{thm:optimal1D}.
In fact, we find that to obtain the code that minimizes the number of physical qubits needed, we must indeed delegate the entirety of all duplication to just the lattice sites, prior to concatenation.

\begin{theorem}[Optimal Gauss's Law Code in $1$D]\label{thm:optimal1D}
    Let $d \ge 3$ and $N \ge d$ be given, with $N$ even.
    Suppose that $C_\OUT\circ C_\IN = \llbracket n, k, d\rrbracket$ is a quantum Gauss's law code (see Def.~\ref{def:gaussquantum}) with distance $d$ for the $1$D $\mathbb Z_2$ LGT defined by Hamiltonian \eqref{eq:Hstandard} on $N$ lattice sites.
    Then the following two results hold:
    \begin{enumerate}[label=(\roman*)]
        \item The encoding rate is bounded by
        \begin{equation}\label{eq:bound1D}
            \frac{k}{n} \le \frac2{d(d+1)}
        \end{equation}
        \item There exists a quantum Gauss's law code $C_\OUT\circ C_\IN$ 
        (i.e., a specific choice of duplication variables $(p_{\mathbf{n}},q_{\mathbf{n}})$ and a classical code $C_\IN$)
        that saturates the bound \eqref{eq:bound1D}. 
    \end{enumerate}
\end{theorem}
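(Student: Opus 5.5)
The plan is to write $n=n_\OUT\, n_\IN$ with $k=k_\OUT=N$. This follows from the counting $k=DN$ with $D=1$ done before Definition~\ref{def:gaussclassical}, together with Lemma~\ref{lem:concatenationproperties}. I then bound the two factors separately from below and exhibit a choice achieving both bounds.

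\emph{Inner code.} Since $\min(d_\OUT,d_\IN)=d$, we have $d_\IN\ge d$. A classical $[n_\IN,1,d_\IN]$ code has a nonzero codeword of weight at most $n_\IN$, so $n_\IN\ge d_\IN\ge d$.

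\emph{Outer code.} Likewise $d_\OUT\ge d$, so by Lemma~\ref{lem:gaussdistance} every open string has weight at least $d$. I apply this only to the $N$ length-one strings, which gives
\begin{equation}
p_m+q_m+p_{m+1}\ge d\quad\text{for every site } m\text{ (indices mod } N).
\end{equation}
Summing over $m$ on the periodic lattice gives $2P+Q\ge Nd$, where $P=\sum_m p_m$ and $Q=\sum_m q_m$. Because every multiplicity is a positive integer, $Q\ge N$. Hence
\begin{equation}
n_\OUT=P+Q=\tfrac12(2P+Q)+\tfrac12 Q\ge \tfrac12 N(d+1).
\end{equation}
Combining the two bounds, $n\ge Nd(d+1)/2$, which gives \eqref{eq:bound1D} since $k=N$.

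\emph{Saturation (part (ii)).} I take the inner code to be the $[d,1,d]$ repetition code and set $q_m=1$ for all links. For odd $d$, I take $p_m=(d-1)/2$ for every site. For even $d$, I alternate $p_m=d/2$ and $p_m=d/2-1$ on even and odd sites; this is where $N$ even is needed for consistency around the periodic lattice. In both cases $n_\OUT=N(d+1)/2$ exactly, so $n=Nd(d+1)/2$.

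It remains to check $d_\OUT\ge d$ (indeed $=d$) via Lemma~\ref{lem:gaussdistance}. Closed strings do not occur in 1D, apart from the non-contractible one; that one has weight $N\ge d$ by assumption (ii). An open string of length $\ell$ has weight $p_{\rm start}+p_{\rm end}+\ell$. For $\ell=1$ this is $d$, since adjacent sites have opposite parity in the even case. For $\ell\ge 2$ it is at least $2(d/2-1)+2=d$ in the even case and $(d-1)+2>d$ in the odd case. The condition $d\ge3$ ensures all $p_m\ge1$.

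The main obstacle I anticipate is the apparent need to control strings of all lengths, as \cref{fig:longer-string} warns. It is resolved by noting that the lower bound needs only the length-one constraints plus $q_m\ge1$. For the construction, longer strings are handled by the uniform estimate above. The only delicate point is the even-$d$ alternating pattern: there a length-two string between two ``light'' sites is exactly tight at weight $d$.
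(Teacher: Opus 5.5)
Your proposal is correct and follows the paper's proof essentially step for step: $n_{\mathrm{in}}\ge d_{\mathrm{in}}\ge d$ for the inner code, the $N$ length-one open-string inequalities summed to $2P+Q\ge dN$, and $Q\ge N$ to conclude $n_{\mathrm{out}}\ge (d+1)N/2$. Your multiplicities are the same as the paper's construction, and your explicit check of the saturating code goes further than the paper, which only says the check is easy. That check covers longer open strings, the tight length-two string in the even case, and the non-contractible loop of weight $N\ge d$.
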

In other words, this theorem states that the optimal Gauss's law code has $\llbracket  Nd(d+1)/2,N,d\rrbracket$.
The proof of theorem \ref{thm:optimal1D} can be found in \cref{app:proofs}.
In this section, we provide just the optimal code construction.
The concatenation step is straightforward---take $C_\OUT$ to be a classical distance-$d$ Gauss's law code for bit-flips using as few physical bits as possible, and take $C_\IN$ to be a classical repetition code $C_\IN = [d,1,d]$ for phase-flips.

The interesting part of the theorem comes from considering how to choose $C_\OUT$ with the minimum possible number of physical bits.
Recall that the classical Gauss's law code $C_\OUT$ for a $1$D lattice is defined entirely from a sequence of integers, $p_n$ and $q_n$.
Therefore, specifying $p_n$ and $q_n$ for each lattice site index $n \in \Lambda = \{1, 2, \dots, N\}$ is enough to define the classical code.
An optimal code is obtained by the following selection:
\begin{align}\label{eq:optimal1Dpq}
    p_n &=
    \begin{cases}
        \lceil\frac{d-1}{2}\rceil, & n \equiv 0 \pmod 2,\\
        \lceil\frac{d}{2}\rceil - 1, & n \equiv 1 \pmod 2,
    \end{cases}\\
    q_n &= 1.
\end{align}
The effect of this choice is to condition the duplication of lattice sites by the parity of $d$.
If $d$ is odd, then all sites are uniformly duplicated by the same amount.
If $d$ is even, then the odd-numbered sites are duplicated with one less copy than the even-numbered sites. 
These two cases are shown in \cref{fig:optimal-1D}.
It's easy to see that the classical code distance $d$ is recovered in both cases.
Upon concatenation with $C_\IN = [d,1,d]$, it's also easy to verify that the bound from theorem \ref{thm:optimal1D} is saturated, regardless of the parity of $d$.

\begin{figure}[htp]
    \centering
    \begin{minipage}{\linewidth}
        \begin{tikzpicture}
            \pgfmathsetmacro{\a}{2.0}
            \pgfmathsetmacro{\halfsep}{0.08}
            \pgfmathsetmacro{\pr}{0.075}
            \pgfmathsetmacro{\offset}{0.5}

            \draw (-1*\a,0*\a) -- (3*\a,0*\a);
            
            \SmallCircle[color=blue]{1*\a}{0*\a+\halfsep}{1}{\pr}
            \SmallCircle[color=blue]{1*\a}{0*\a-\halfsep}{1}{\pr}
            \draw[color=blue, line width=2pt] (0*\a,0*\a) -- (1*\a,0*\a);
            \SmallCircle[color=blue]{0*\a}{0*\a}{1}{\pr}
            \SmallCircle[color=black]{2*\a}{0*\a}{1}{\pr}
    
            \node at (0*\a,0*\a-\offset) {$S_{2i-1}$};
            \node at (0.5*\a,0*\a-\offset) {$L_{2i-1}$};
            \node at (1*\a,0*\a-\offset) {$S_{2i}$};
            \node at (1.5*\a,0*\a-\offset) {$L_{2i}$};
            \node at (2*\a,0*\a-\offset) {$S_{2i+1}$};
        \end{tikzpicture}
    \end{minipage}

    \vspace{3em}

    \begin{minipage}{\linewidth}
        \begin{tikzpicture}
            \pgfmathsetmacro{\a}{2.0}
            \pgfmathsetmacro{\halfsep}{0.08}
            \pgfmathsetmacro{\pr}{0.075}
            \pgfmathsetmacro{\offset}{0.5}

            \draw (-1*\a,0*\a) -- (3*\a,0*\a);
            
            \SmallCircle[color=blue]{0*\a}{0*\a+\halfsep}{1}{\pr}
            \SmallCircle[color=blue]{0*\a}{0*\a-\halfsep}{1}{\pr}
            \draw[color=blue, line width=2pt] (0*\a,0*\a) -- (1*\a,0*\a);
            \SmallCircle[color=blue]{1*\a}{0*\a+\halfsep}{1}{\pr}
            \SmallCircle[color=blue]{1*\a}{0*\a-\halfsep}{1}{\pr}
            \SmallCircle[color=black]{2*\a}{0*\a+\halfsep}{1}{\pr}
            \SmallCircle[color=black]{2*\a}{0*\a-\halfsep}{1}{\pr}
    
            \node at (0*\a,0*\a-\offset) {$S_{2i-1}$};
            \node at (0.5*\a,0*\a-\offset) {$L_{2i-1}$};
            \node at (1*\a,0*\a-\offset) {$S_{2i}$};
            \node at (1.5*\a,0*\a-\offset) {$L_{2i}$};
            \node at (2*\a,0*\a-\offset) {$S_{2i+1}$};
        \end{tikzpicture}
    \end{minipage}
    \caption{Duplicating lattice variables for an optimal code. Each link or site (and any duplicate) corresponds to a single bit in the outer bit-flip code. The minimum-weight logical error is shown in both cases (blue), which corresponds to a flux string excitation on the lattice. Top: distance $d=4$ (even case). Bottom: distance $d=5$ (odd case).}\label{fig:optimal-1D}
\end{figure}

Roughly speaking, it works to choose $p_n$ with the split cases in \eqref{eq:optimal1Dpq} due to the fact that a length-$1$ flux string always uses exactly one endpoint from both the even and odd sub-lattices; and therefore, the code distance increases by $1$ upon incrementing the site multiplicity on just one of the sub-lattices.

\subsubsection{$2$D Case}
\label{sec:optimal2D}

Deducing the optimal code construction in the $2$D case is substantially more complicated than the $1$D case, due to the presence of closed strings.
Nevertheless, there is a generalization of the methodology used in the $1$D case that applies directly to the $2$D case and higher dimensions.
The idea is to separately keep track of the minimum weights $w_o$ and $w_c$ for open strings and closed strings, respectively.
By partitioning the lattice variables based on parity, we can control $w_o$ and $w_c$, and therefore also the code distance.

For instance, we partition the set of lattice sites $\mathcal S$ as
\begin{align}
    \mathcal S &= \mathcal S_1 \sqcup \mathcal S_2,\\
    \mathcal S_1 &\equiv \{S_{\mathbf{n}} \mid n_1 + n_2 \equiv 0\pmod{2}\},\\
    \mathcal S_2 &\equiv \{S_{\mathbf{n}} \mid n_1 + n_2 \equiv 1\pmod{2}\}.
\end{align}
This is the standard staggering into the even and odd sub-lattices.
The point is that when all multiplicities on a sub-lattice of staggered sites is incremented, $w_o$ is expected to increase by $1$ (exactly as in the $1$D case), while $w_c$ remains unchanged.

A similar property can be engineered for minimum-weight closed strings, by duplicating links.
Specifically, one can partition the set of links $\mathcal L$ into $4$ groups:
\begin{align}
    \mathcal L &= \mathcal L_1 \sqcup \mathcal L_2 \sqcup \mathcal L_3 \sqcup \mathcal L_4,\\
    \mathcal L_1 &\equiv \{L_{\mathbf{n},\mu} \mid n_2 \equiv 0\pmod 2,\quad \mu = x\},\\
    \mathcal L_2 &\equiv \{L_{\mathbf{n},\mu} \mid n_2 \equiv 1\pmod 2,\quad \mu = x\},\\
    \mathcal L_3 &\equiv \{L_{\mathbf{n},\mu} \mid n_1 \equiv 0\pmod 2,\quad \mu = y\},\\
    \mathcal L_4 &\equiv \{L_{\mathbf{n},\mu} \mid n_1 \equiv 1\pmod 2,\quad \mu = y\}.
\end{align}
To put it succinctly, the links are categorized by: (i) whether they are horizontal or vertical; and (ii) the parity of the transverse coordinate.
This partition has the property that every plaquette on the lattice uses exactly one link from each set ($\mathcal L_1$, $\mathcal L_2$, $\mathcal L_3$, \textit{and} $\mathcal L_4$) of this partition.
This ultimately has the effect that incrementing the multiplicities of every link in a fixed set $\mathcal L_i$ increases $w_c$ by $1$, while leaving $w_o$ unchanged.
See \cref{fig:partition-2D} for an illustration that summarizes the partitioning.

\begin{figure}[htp]
    \centering
    \begin{tikzpicture}
        \pgfmathsetmacro{\a}{2.0}
        \pgfmathsetmacro{\offsetx}{0.25}
        \pgfmathsetmacro{\offsety}{0.25}
        \pgfmathsetmacro{\pr}{0.075}
        
        \foreach \i in {0,2} {
            \draw[color=colorB, line width=1pt] (0,\i*\a) -- (4*\a,\i*\a);
            \draw[color=colorA, line width=1pt] (0,\i*\a+\a) -- (4*\a,\i*\a+\a);
            \draw[color=colorC, line width=1pt] (\i*\a,0) -- (\i*\a,4*\a);
            \draw[color=colorF, line width=1pt] (\i*\a+\a,0) -- (\i*\a+\a,4*\a);
        }
        
        \foreach \i in {0,2} {
            \foreach \j in {0,2} {
                \SmallCircle[color=colorD]{\i*\a}{\j*\a}{1}{\pr}
                \SmallCircle[color=colorE]{\i*\a+\a}{\j*\a}{1}{\pr}
                \SmallCircle[color=colorE]{\i*\a}{\j*\a+\a}{1}{\pr}
                \SmallCircle[color=colorD]{\i*\a+\a}{\j*\a+\a}{1}{\pr}
            }
        }
    \end{tikzpicture}
    \caption{Simultaneous partitioning of lattice sites and links. The two sub-lattices are colored blue ($\mathcal S_1$) and orange ($\mathcal S_2$). The partitioned links are colored as green ($\mathcal L_1$), red ($\mathcal L_2$), yellow ($\mathcal L_3$), and purple ($\mathcal L_4$). Note that every length-$1$ open string touches both sub-lattices (blue and orange), and every plaquette uses all the link colors (green, red, yellow, and purple).}\label{fig:partition-2D}
\end{figure}

As in the $1$D case, there are subtleties regarding open strings and closed strings with larger size, and these technicalities are taken into account to yield Theorem~\ref{thm:optimal2D}.
We find that the optimal code can be built from this incremental procedure, performed in parallel on both lattice sites and links, with minor modifications designed to achieve exact optimality.

\begin{theorem}[Optimal Gauss's Law Code in $2$D]\label{thm:optimal2D}
    Let $d \ge 3$ and $N_x,N_y \ge d$ be given, with $N_x,N_y$ even.
    Suppose that $C_\OUT\circ C_\IN = \llbracket n, k, d\rrbracket$ is a quantum Gauss's law code (see Def.~\ref{def:gaussquantum}) with distance $d$ for the $2$D $\mathbb Z_2$ LGT defined by Hamiltonian \eqref{eq:Hstandard} on an $N_x \times N_y$ lattice.
    Then the following two results hold:
    \begin{enumerate}[label=(\roman*)]
        \item The encoding rate is bounded by
        \begin{equation}\label{eq:bound2D}
            \frac{k}{n} \le
            \begin{cases}
                \frac 29, & d = 3,\\
                \frac{16}{7d^2}, & d > 3, \quad d:  {\rm even},\\
                \frac{16}{(7d+1)d} & d > 3, \quad d: {\rm odd}.
            \end{cases}
        \end{equation}
        \item There exists a quantum Gauss's law code $C_\OUT\circ C_\IN$ that saturates the bound \eqref{eq:bound2D}. 
    \end{enumerate}
\end{theorem}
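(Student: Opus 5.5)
The plan is to reduce the statement to a counting problem for the classical outer code, much as for Theorem~\ref{thm:optimal1D}. In 2D the number of logical qubits is $k=k_\OUT=2N$, where $N=N_xN_y$, and $n=n_\OUT n_\IN$. Lemma~\ref{lem:concatenationproperties} gives distance $\min(d_\OUT,d_\IN)$, so both classical codes must have distance at least $d$. The inner code encodes one bit, so $n_\IN\ge d_\IN\ge d$, with equality for the repetition code $[d,1,d]$. Hence the bound \eqref{eq:bound2D} is equivalent to a lower bound on $n_\OUT=\sum_{\mathbf n}p_{\mathbf n}+\sum_{\mathbf n,\mu}q_{\mathbf n,\mu}$ for any classical Gauss's law code of distance $\ge d$. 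That bound is $3N$ for $d=3$, $7dN/8$ for even $d>3$, and $(7d+1)N/8$ for odd $d>3$.

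\textbf{Lower bound by averaging.} By Lemma~\ref{lem:gaussdistance}, every length-one open string gives a constraint $p_{\mathbf n}+q_L+p_{\mathbf n'}\ge d$, and every plaquette (closed string) gives $\sum_{L\in\partial P}q_L\ge d$.
\begin{itemize}
\item Summing the open-string constraints over all $2N$ links counts each site four times, so $4\Sigma_p+\Sigma_q\ge 2Nd$.
\item Summing the plaquette constraints over all $N$ plaquettes counts each link twice, so $\Sigma_q\ge Nd/2$.
\item Since every $q\ge1$, we also have $\Sigma_q\ge 2N$.
\end{itemize}
Combining these, $n_\OUT=\Sigma_p+\Sigma_q\ge \tfrac{2Nd-\Sigma_q}{4}+\Sigma_q=\tfrac{Nd}{2}+\tfrac34\Sigma_q$. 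This gives $7Nd/8$ from $\Sigma_q\ge Nd/2$, and $3N$ when $d=3$ from $\Sigma_q\ge 2N$. These yield the even-$d$ and $d=3$ cases of \eqref{eq:bound2D}.

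\textbf{Odd $d\ge5$.} This case needs an extra $N/8$, which must come from integrality. Each plaquette sum is an integer $\ge d$, so its four links cannot be split evenly as $d/4$. I would try to prove a local inequality on a $2\times2$ block of plaquettes (or a unit cell of the parity partition of \cref{fig:partition-2D}) stating that the block's share of $\Sigma_p+\Sigma_q$ is at least $(7d+1)/2$ per four sites. To do this I would combine the plaquette constraints, the length-one open-string constraints, and where necessary length-two open strings $p+q+p'+q'+p''\ge d$, and then round in the style of Chvátal--Gomory, using that all variables are integers. Summing the block inequality over a tiling of the even lattice gives the bound. I expect this to be the main obstacle: finding the right combination of constraints whose rounding gains exactly $1/8$ per site.

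\textbf{Achievability.} I would assign constant multiplicities $q$ to each class $\mathcal L_1,\dots,\mathcal L_4$ with sum $d$ (the sum is $d+1$ or adjusted suitably for odd $d$, chosen to meet the count), and constant $p$ values on $\mathcal S_1,\mathcal S_2$. These are chosen so that $p_1+p_2+q_{\min}$ equals $d$ for the relevant link class. For example, for $d=4m$ take all $q=m$ and $p_1+p_2=3m$, split as $\lceil 3m/2\rceil$ and $\lfloor3m/2\rfloor$, giving $7d/8$ per site. For $d=3$ take $p=q=1$. Then I would verify the distance with Lemma~\ref{lem:gaussdistance}:
\begin{itemize}
\item Any closed string uses at least one link from each class $\mathcal L_i$, or at least two from each direction, so its weight is at least $d$.
\item An open string of length $\ell$ has both endpoints plus $\ell$ links, and longer strings only add link weight, so open strings also have weight at least $d$.
\item Non-contractible strings are handled by the assumption $N_x,N_y\ge d$.
\end{itemize}
Concatenating with the repetition code $[d,1,d]$ then saturates \eqref{eq:bound2D}.
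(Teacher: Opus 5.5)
\textbf{Lower bound.} Your reduction to a lower bound on $n_\OUT$ is correct, and so are your $d=3$ and even-$d$ bounds; both are essentially the paper's argument. The paper writes $n_\OUT=\tfrac14(4P+Q)+\tfrac38(2Q)$, with $P,Q$ your $\Sigma_p,\Sigma_q$, and for $d=3$ it just uses $n_\OUT\ge 3N$ from positivity.

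The odd-$d$ case, however, is only a search plan, and that is a genuine gap: this is exactly the part of (i) that needs an idea, and you do not produce the inequality. It is simpler than you expect. No $2\times2$ blocks or length-two strings are needed; a single plaquette suffices.
\begin{itemize}
\item For the plaquette with corner $\mathbf n$, let $v_{\mathbf n}$ be the sum of the $p$'s at its four corners and $\ell_{\mathbf n}$ the sum of the $q$'s on its four links.
\item Each corner lies on exactly two of those links. Adding the four single-link open-string constraints therefore gives $2v_{\mathbf n}+\ell_{\mathbf n}\ge 4d$.
\item Adding three copies of the plaquette constraint $\ell_{\mathbf n}\ge d$ gives $2v_{\mathbf n}+4\ell_{\mathbf n}\ge 7d$.
\item The left side is even, so for odd $d$ it is at least $7d+1$.
\item Summing over all $N$ plaquettes, with $\sum_{\mathbf n}v_{\mathbf n}=4P$ and $\sum_{\mathbf n}\ell_{\mathbf n}=2Q$, yields $8n_\OUT\ge(7d+1)N$.
\end{itemize}
The multipliers $(1,3)$ are chosen precisely so that every coefficient becomes even. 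This is the Chv\'atal--Gomory cut you were looking for, and without the rounding step it reproduces your even-$d$ bound.

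\textbf{Achievability.} This part also fails for half the residues. In your ansatz, $p_1,p_2$ are constant on $\mathcal S_1,\mathcal S_2$ and each $q_i$ is constant on $\mathcal L_i$. Every link joins $\mathcal S_1$ to $\mathcal S_2$, and each class has $N/2$ elements. So $n_\OUT=\tfrac N2\bigl(p_1+p_2+\sum_i q_i\bigr)$, subject to the necessary constraints $p_1+p_2+\min_i q_i\ge d$ and $\sum_i q_i\ge d$. Since $\min_i q_i\le\lfloor\sum_i q_i/4\rfloor$, these force $n_\OUT\ge\tfrac N2\bigl(2d-\lfloor d/4\rfloor\bigr)$.
\begin{itemize}
\item For $d\equiv 0,1\pmod 4$ this equals the bound, and your ansatz matches the paper's construction.
\item For $d\equiv 2,3\pmod 4$ it exceeds the bound by $N/4$. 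No ``suitable adjustment'' within your ansatz can saturate \eqref{eq:bound2D} there.
\end{itemize}
The missing idea is to split $\mathcal S_1$ by coordinate parity. The $(\text{even},\text{even})$ sites are endpoints only of $\mathcal L_1\cup\mathcal L_3$ links, and the $(\text{odd},\text{odd})$ sites only of $\mathcal L_2\cup\mathcal L_4$ links. Giving these two types different multiplicities lets heavier link classes pair with lighter endpoints. For example, for $d=4a+2$ the paper takes:
\begin{itemize}
\item $p=2a$ on $(\text{even},\text{even})$, $p=2a+1$ on $(\text{odd},\text{odd})$, and $p=a+1$ on $\mathcal S_2$;
\item $q=a+1$ on $\mathcal L_1\cup\mathcal L_3$ and $q=a$ on $\mathcal L_2\cup\mathcal L_4$.
\end{itemize}
Every single-link string and every plaquette then has weight exactly $d$, at a cost of $14a+7$ per $2\times2$ cell.

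Your distance check also needs repair. ``Longer strings only add link weight'' ignores that a longer string has different endpoints (cf.\ \cref{fig:longer-string}). Similarly, ``at least two links in each direction'' does not give closed-string weight $\ge d$ when the $q_i$ are unequal. You must use that every contractible loop other than a plaquette has at least four links in one direction.
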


In other words, the optimal Gauss's law code has a parameter
\begin{align}
    \begin{cases}
        \llbracket 9N, 2N, 3\rrbracket & d=3\,,
        \\
        \llbracket \frac{7d^2}{8}N, 2N, d\rrbracket & d:\text{even}\,,
        \\
        \llbracket \frac{(7d+1)d}{8}N, 2N, d\rrbracket & d:\text{odd}\,.
    \end{cases}
        \end{align}
The proof of this theorem is again provided in \cref{app:proofs}.
For now, we just provide the optimal code construction.
As in the $1$D case, it suffices to specify the values of all $p_{\mathbf{n}}$ and $q_{\mathbf{n},\mu}$.
It is easiest to provide these multiplicities in four separate cases, depending on the remainder of $d$ modulo~$4$.
An explicit example for the classical Gauss's law code with $d = 5$ is shown in \cref{fig:optimal-2D}.

\begin{figure}[htp]
    \centering
    \begin{tikzpicture}
        \pgfmathsetmacro{\a}{2.0}
        \pgfmathsetmacro{\offset}{0.3}
        \pgfmathsetmacro{\pr}{0.075}
        \pgfmathsetmacro{\prr}{0.08}
        \foreach \i in {0} {
            \LPlaquette{\i*\a}{0*\a}{\a}{\prr}{black}
            \draw[color=black] (\i*\a+\a,0) -- (\i*\a+2*\a,0);
            \draw[color=black] (\i*\a+\a,\a) -- (\i*\a+2*\a,\a);
            \LPlaquette{\i*\a+2*\a}{0*\a}{\a}{\prr}{black}
            \SmallCircle[draw=black,fill=black]{\i*\a-\prr}{0*\a}{1}{\pr}
            \SmallCircle[draw=black,fill=black]{\i*\a+\prr}{0*\a}{1}{\pr}
            \SmallCircle[draw=black,fill=black]{\i*\a+\a-\prr}{0*\a}{1}{\pr}
            \SmallCircle[draw=black,fill=black]{\i*\a+\a+\prr}{0*\a}{1}{\pr}
            \SmallCircle{\i*\a+2*\a-\prr}{0*\a}{1}{\pr}
            \SmallCircle{\i*\a+2*\a+\prr}{0*\a}{1}{\pr}
            \SmallCircle{\i*\a+3*\a-\prr}{0*\a}{1}{\pr}
            \SmallCircle{\i*\a+3*\a+\prr}{0*\a}{1}{\pr}
            
            \draw[color=black] (\i*\a-\prr,\a) -- (\i*\a-\prr,2*\a);
            \draw[color=black] (\i*\a+\prr,\a) -- (\i*\a+\prr,2*\a);
            \draw[color=black] (\i*\a+\a,\a) -- (\i*\a+\a,2*\a);
            \LPlaquette{\i*\a+2*\a}{1*\a}{\a}{\prr}{black}
            \SmallCircle[draw=black,fill=black]{\i*\a-\prr}{1*\a}{1}{\pr}
            \SmallCircle[draw=black,fill=black]{\i*\a+\prr}{1*\a}{1}{\pr}
            \SmallCircle[draw=black,fill=black]{\i*\a+\a-\prr}{1*\a}{1}{\pr}
            \SmallCircle[draw=black,fill=black]{\i*\a+\a+\prr}{1*\a}{1}{\pr}
            \SmallCircle{\i*\a+2*\a-\prr}{1*\a}{1}{\pr}
            \SmallCircle{\i*\a+2*\a+\prr}{1*\a}{1}{\pr}
            \SmallCircle{\i*\a+3*\a-\prr}{1*\a}{1}{\pr}
            \SmallCircle{\i*\a+3*\a+\prr}{1*\a}{1}{\pr}
            
            \LPlaquette{\i*\a}{2*\a}{\a}{\prr}{black}
            \RPlaquette{\i*\a+\a}{2*\a}{\a}{\prr}{black}
            \LPlaquette{\i*\a+2*\a}{2*\a}{\a}{\prr}{black}
            \SmallCircle{\i*\a-\prr}{2*\a}{1}{\pr}
            \SmallCircle{\i*\a+\prr}{2*\a}{1}{\pr}
            \SmallCircle{\i*\a+\a-\prr}{2*\a}{1}{\pr}
            \SmallCircle{\i*\a+\a+\prr}{2*\a}{1}{\pr}
            \SmallCircle{\i*\a+2*\a-\prr}{2*\a}{1}{\pr}
            \SmallCircle{\i*\a+2*\a+\prr}{2*\a}{1}{\pr}
            \SmallCircle{\i*\a+3*\a-\prr}{2*\a}{1}{\pr}
            \SmallCircle{\i*\a+3*\a+\prr}{2*\a}{1}{\pr}
        }
        \foreach \i in {0,2} {
            \draw (\i*\a-\prr,3*\a) -- (\i*\a-\prr,4*\a);
            \draw (\i*\a+\prr,3*\a) -- (\i*\a+\prr,4*\a);
            \draw (\i*\a+\a,3*\a) -- (\i*\a+\a,4*\a);
            \draw (\i*\a+\a,3*\a) -- (\i*\a+\a,4*\a);
            
            \draw (3*\a,\i*\a) -- (4*\a,\i*\a);
            \draw (3*\a,\i*\a+\a) -- (4*\a,\i*\a+\a);
        }
        \SmallCircle{-\prr}{3*\a}{1}{\pr}
        \SmallCircle{\prr}{3*\a}{1}{\pr}
        \SmallCircle{\a-\prr}{3*\a}{1}{\pr}
        \SmallCircle{\a+\prr}{3*\a}{1}{\pr}
        \SmallCircle{2*\a-\prr}{3*\a}{1}{\pr}
        \SmallCircle{2*\a+\prr}{3*\a}{1}{\pr}
        \SmallCircle{3*\a-\prr}{3*\a}{1}{\pr}
        \SmallCircle{3*\a+\prr}{3*\a}{1}{\pr}

        \pgfmathsetmacro{\lpr}{0.08}
        \pgfmathsetmacro{\lprr}{0.085}
        \SmallCircle[color=blue]{1*\a+\lprr}{3*\a}{1}{\lpr}
        \SmallCircle[color=blue]{1*\a-\lprr}{3*\a}{1}{\lpr}
        \draw[color=blue, line width=2pt] (1*\a,3*\a) -- (1*\a,2*\a);
        \SmallCircle[color=blue]{1*\a+\lprr}{2*\a}{1}{\lpr}
        \SmallCircle[color=blue]{1*\a-\lprr}{2*\a}{1}{\lpr}

        \pgfmathsetmacro{\offsethor}{0.155}
        \pgfmathsetmacro{\offsetver}{0.05}
        \pgfmathsetmacro{\offsetdoublever}{0.075}
        \draw[color=green!70!black, line width=2pt] (2*\a+\offsethor,2*\a) -- (3*\a-\offsethor,2*\a);
        \draw[color=green!70!black, line width=2pt] (3*\a,2*\a-\offsetver) -- (3*\a,1*\a+\offsetver);
        \draw[color=green!70!black, line width=2pt] (2*\a+\offsethor,1*\a) -- (3*\a-\offsethor,1*\a);
        \draw[color=green!70!black, line width=2pt] (2*\a-0.075,2*\a-\offsetdoublever) -- (2*\a-0.075,1*\a+\offsetdoublever);
        \draw[color=green!70!black, line width=2pt] (2*\a+0.075,2*\a-\offsetdoublever) -- (2*\a+0.075,1*\a+\offsetdoublever);
    \end{tikzpicture}
    \caption{Optimal code construction for distance $d=5$. Each link or site (and any duplicate) corresponds to a single bit in the outer bit-flip code. Blue: logical error with weight $5$ corresponding to an open string. Green: logical error with weight $5$ corresponding to a closed string.}\label{fig:optimal-2D}
\end{figure}

In what follows, we will assume $d > 3$, since the $d = 3$ case is trivial.
There are four separate cases to consider.
If $d \equiv 0\pmod{4}$, then let $d = 4a$, and take
\begin{align}
    p_{\mathbf{n}} &=
    \begin{cases}
        \lfloor\frac{3a}{2}\rfloor, & S_{\mathbf{n}} \in \mathcal S_1,\\
        \lceil\frac{3a}{2}\rceil, & S_{\mathbf{n}} \in \mathcal S_2,
    \end{cases}\\
    q_{\mathbf{n},\mu} &= a.
\end{align}
If $d \equiv 1\pmod{4}$, then let $d = 4a+1$, and take
\begin{align}
    p_{\mathbf{n}} &=
    \begin{cases}
        \lfloor\frac{3a+1}{2}\rfloor, & S_{\mathbf{n}} \in \mathcal S_1,\\
        \lceil\frac{3a+1}{2}\rceil, & S_{\mathbf{n}} \in \mathcal S_2,
    \end{cases}\\
    q_{\mathbf{n},\mu} &=
    \begin{cases}
        a, & L_{\mathbf{n},\mu} \in \mathcal L_1 \cup \mathcal L_2 \cup \mathcal{L}_4,\\
        a+1, & L_{\mathbf{n},\mu} \in \mathcal L_3.
    \end{cases}
\end{align}
If $d \equiv 2\pmod{4}$, then let $d = 4a+2$, and take
\begin{align}
    p_{\mathbf{n}} &=
    \begin{cases}
        2a, & n_1 \equiv n_2 \equiv 0\pmod{2},\\
        2a+1, & n_1 \equiv n_2 \equiv 1\pmod{2},\\
        a+1, & S_{\mathbf{n}} \in \mathcal S_2,
    \end{cases}\\
    q_{\mathbf{n},\mu} &=
    \begin{cases}
        a+1, & L_{\mathbf{n},\mu} \in \mathcal L_1 \cup \mathcal L_3,\\
        a, & L_{\mathbf{n},\mu} \in \mathcal L_2 \cup \mathcal L_4.
    \end{cases}
\end{align}
If $d \equiv 3\pmod{4}$, then let $d = 4a+3$, and take
\begin{align}
    p_{\mathbf{n}} &=
    \begin{cases}
        2a+1, & n_1 \equiv n_2 \equiv 0\pmod{2},\\
        2a, & n_1 \equiv n_2 \equiv 1\pmod{2},\\
        a+2, & S_{\mathbf{n}} \in \mathcal S_2,
    \end{cases}\\
    q_{\mathbf{n},\mu} &=
    \begin{cases}
        a+1, & L_{\mathbf{n},\mu} \in \mathcal L_1 \cup \mathcal L_2 \cup \mathcal L_4,\\
        a, & L_{\mathbf{n},\mu} \in \mathcal L_3.
    \end{cases}
\end{align}

Note that the $d \equiv 2\pmod{4}$ and $d \equiv 3\pmod{4}$ cases involve further splitting the sub-lattices $\mathcal S_1$ and $\mathcal S_2$ based on the exact parity of each coordinate.
The reason for doing this is a slight technicality that enables hitting exactly the optimal encoding rate.
That being said, a simple algebraic calculation leads to, after concatenating with $C_\IN = [d,1,d]$, that all of the above cases hit exactly
\begin{align}
    n =
    \begin{cases}
    \frac{7}{8} d^2 N   & d:\text{even}\,, 
    \\
    \frac{7}{8}d(d+\frac 17) N & d:\text{odd}\,.
    \end{cases}
\end{align}
Since there are exactly $k = 2N$ logical DOFs in the $2$D $\mathbb Z_2$ LGT, this immediately implies that the bound from Theorem~\ref{thm:optimal2D} is saturated.

\section{Performance Comparison for Gauss's Law Codes}\label{sec:compare}

In this section, we provide in-depth comparisons for the codes constructed in this paper.
This includes comparing the performance of Gauss's law codes as a function of code distance, as well as comparing Gauss's law codes with the basic alternatives.
To keep the discussion concise, we consider only the \textit{optimal} Gauss's law codes derived in~\cref{sec:gauss}, i.e., codes that saturate our optimality bounds for encoding rate from~\cref{thm:optimal1D,thm:optimal2D}.

\Cref{sec:parameter-comparison} showcases a high-level comparison of code properties such as the number of physical qubits used and the locality of the encoded lattice Hamiltonian for Gauss's law codes against baselines.
\Cref{sec:LER-comparison} performs code capacity demonstrations with two selected noise models: the depolarizing channel (simulated numerically) and the uncorrelated $X/Z$ noise model (analytically tractable).

\subsection{Code Properties}\label{sec:parameter-comparison}

In this section, we discuss and compare the basic properties of Gauss's law codes with distance $d \ge 3$ on an $N$-site lattice (in both $1$D and $2$D).
We will restrict to odd $d = 2t+1$, which is necessary to enable correction of arbitrary $t$-qubit errors.
A useful comparison to an alternative code is obtained by encoding all sites and links with an $\llbracket n, 1, d=2t+1\rrbracket$ QECC, which can correct both $X$ and $Z$ errors.
For a comparison, we consider the minimal $n$ for a given $d=2t+1$, which is known to satisfy Rains' bound~\cite{Rains:1996yc}.
For $k=1$, the bound is
\begin{equation}
    t \leq \left\lfloor\frac{n+1}{6}\right\rfloor\,.
\end{equation}
When $t=1,2,3$, QECCs saturating this bound are exactly known~\cite{laflamme1996perfect, Calderbank:1996aj, dutta2010quantumcycliccode, dutta2011quantum}.
We refer to these codes as \emph{minimal-length codes} and denote the number of physical qubits as 
\begin{align}
    n_{\text{min}}=6t-1\,.
\end{align}

In addition to counting the total number of physical qubits used by each code, it is also worth comparing the locality of the encoded lattice Hamiltonian, since this is an indication of the complexity of performing quantum simulations using the encoding.
In $1$D, we will consider both the hard-core boson Hamiltonian \eqref{eq:Hstandard} and the staggered fermion case \eqref{eq:HfermionJW} for illustrative purposes; in $2$D, we will only consider the hard-core boson case to keep the discussion concise.
The encoded Hamiltonian(s) are obtained simply by replacing each Pauli operator $X$, $Y$, and $Z$ in the original Hamiltonian(s) \eqref{eq:Hstandard} (and \eqref{eq:HfermionJW} as applicable) with a logical representative in the chosen code.
The hard-core boson case \eqref{eq:Hstandard} consists of single-body $Z_S$ and $Z_L$ terms on every site and link, respectively; three-body $X_SX_LX_S$ terms on every site-link-site triple; and four-body $X_LX_LX_LX_L$ terms on every plaquette (not applicable in $1$D).
The staggered fermion case \eqref{eq:HfermionJW} contains additional $Y_SX_LY_S$ terms on every site-link-site triple.
For a fair comparison, we choose the logical representatives to have the least possible weight in order to reduce the overhead to perform quantum simulation.
We leave as a future work to evaluate the effect of weight and locality on the cost under concrete fault-tolerant quantum computing architectures.

\subsubsection{1D Comparison}

In $1$D, the optimal Gauss's law code at distance $d = 2t+1 \ge 3$ for an $N$-site lattice uses $(2t+1)(t+1)N$ physical qubits~(see~\cref{thm:optimal1D}).
Encoding the same lattice with the minimal-length codes uses $2Nn_{\text{min}}$ physical qubits.
\Cref{tab:qubit-comparison-1D} shows the comparison between these values using Rains' bound for $t = 1,2,3$ ($d=3,5,7$).
One can see that the number of physical qubits used by the optimal Gauss's law code is less than that of the minimal-length encoding for $d\le 7$.
However, the former grows like $\mathcal{O}(d^2)$ while the lower bound $n_{\text{min}}$ grows like $\mathcal{O}(d)$, which suggests that at higher code distances, the optimal Gauss's law code will eventually require more resources than the minimal-length code, if similar bounds turn out to be saturated asymptotically.
Indeed, the Gauss's-law code uses fewer physical qubits up to $d=9$, while the minimal-length code that saturates the bound (if it exists) uses fewer physical qubits for $d\geq 11$.
\begin{table*}[htp]
    \centering
    \caption{Physical qubits used by the optimal Gauss's law code (GLC) and the minimal-length code to achieve distance $d = 2t+1$ on an $N$-site lattice in $1$D. It does not depend on whether we choose the hard-core boson theory \eqref{eq:Hstandard} or the staggered fermion theory \eqref{eq:HfermionJW}.}
    \label{tab:qubit-comparison-1D}
    \begin{tabular}{c||c@{\hspace{0.2em}}c|c@{\hspace{0.2em}}c}
         & \multicolumn{2}{c|}{Optimal GLC}
         & \multicolumn{2}{c}{Minimal-length}
         \\
         \hline \hline
         $d=3$
         & $6N$ & (RRW~\cite{Rajput:2021trn})
         & $10N$ & ($\llbracket5,1,3\rrbracket$~\cite{laflamme1996perfect})
         \\
         $d=5$
         & $15N$ & (ours)
         & $22N$ & ($\llbracket11,1,5\rrbracket$~\cite{Calderbank:1996aj})
         \\
         $d=7$
         & $28N$ & (ours)
         & $34N$ & ($\llbracket17,1,7\rrbracket$~\cite{dutta2010quantumcycliccode, dutta2011quantum})
         \\
         \hline
         general
         & $(2t+1)(t+1)N$ & (ours)
         & $2(6t-1)N$ & (bound~\cite{Rains:1996yc})
    \end{tabular}
\end{table*}

Now we come to the Hamiltonian locality.
We first study the logical encoding in the Gauss's law (GL) code.
The first (electric) and second (mass) terms in both~\cref{eq:Hstandard,eq:HfermionJW} can be encoded by using logical operators $\bar{Z}^{(\text{GL})}_{L_n},\bar{Z}^{(\text{GL})}_{S_n}$ obtained by applying $Z$ to every qubit on the inner phase-flip (PF) code in the corresponding block,
\begin{equation}
    \bar{Z}^{(\text{GL})}_{L_n} \equiv \bar{Z}^{(\text{PF})}_{L^{(1)}_n}
    \,,\quad
    \bar{Z}^{(\text{GL})}_{S_n} \equiv \bar{Z}^{(\text{PF})}_{S^{(1)}_n}
    \,,
\end{equation}
which requires weight $d = 2t+1$ since $\bar{Z}^{(\text{PF})}\sim Z^{\otimes d}$ at the physical level.
Note that in the case of using duplicated lattice variables for the outer bit-flip code, one only needs to apply a logical phase-flip for the inner block representing one of the duplicates, since this will already take the operator outside of the stabilizer group.

Regarding the third (hopping) term in \cref{eq:Hstandard}, the block $[X_SX_LX_S]$ can be considered as a logical operator in the outer Gauss's law code, but a single $X_S, X_L$ cannot.
The logical encoding of the inner block on any site or link is given as
\begin{align}
&\qquad[\overline{X_{S_n}X_{L_{n}}X_{S_{n+1}}}]^{(\text{GL})}
    \\&=
    \left(\prod_{i=1}^{t}\bar{X}^{(\text{PF})}_{S^{(i)}_n} \right)
    \bar{X}^{(\text{PF})}_{L_n}
    \left(\prod_{j=1}^{t}\bar{X}^{(\text{PF})}_{S^{(j)}_{n+1}}\right)\,.
\end{align}
Since each $\bar{X}^{(\text{PF})}$ can be taken as an $X$ operator acting on one physical qubit, the overall weight of this operator is given by $t+1+t=2t+1$.
Therefore, all terms in the $1$D Hamiltonian \eqref{eq:Hstandard} require weight exactly $2t+1$, matching the code distance.

On the other hand, the $[Y_SX_LY_S]$ term in the staggered fermion case \eqref{eq:HfermionJW} can be implemented as
\begin{align}
&\qquad
\bar{Z}^{(\text{GL})}_{S_n}
[\overline{X_{S_n}X_{L_{n}}X_{S_{n+1}}}]^{(\text{GL})}
\bar{Z}^{(\text{GL})}_{S_{n+1}}
    \\&=
    \bar{Z}^{(\text{PF})}_{S^{(1)}_n}
    \left(\prod_{i=1}^{t}\bar{X}^{(\text{PF})}_{S^{(i)}_n} \right)
    \bar{X}^{(\text{PF})}_{L_n}
    \left(\prod_{j=1}^{t}\bar{X}^{(\text{PF})}_{S^{(j)}_{n+1}}\right)
    \bar{Z}^{(\text{PF})}_{S^{(1)}_{n+1}}
    \\
    &=\left(\bar{Z}^{(\text{PF})}_{S^{(1)}_n}\bar{X}^{(\text{PF})}_{S^{(1)}_n}\right)
    \left(\prod_{i=2}^{t}\bar{X}^{(\text{PF})}_{S^{(i)}_n} \right)
    \bar{X}^{(\text{PF})}_{L_n}
    \\ \notag
    &\qquad\times \left(\prod_{j=2}^{t}\bar{X}^{(\text{PF})}_{S^{(j)}_{n+1}}\right)
    \left(\bar{Z}^{(\text{PF})}_{S^{(1)}_{n+1}}\bar{X}^{(\text{PF})}_{S^{(1)}_{n+1}}\right)\,.
\end{align}
The first and last $Z,X$ require weight $d$ physical operations, while the remaining $X$'s need one physical operation. In total, one needs $d+(t-1)+1+(t-1)+d=6t+1$, which is maximum among all terms in \cref{eq:HfermionJW}.

For the minimal-length code, we assume that the minimum-weight logical operators for $X$, $Y$, and $Z$ all have distance $d = 2t+1$.
(Since $\text{wt}(\bar{X}),\text{wt}(\bar{Y}),\text{wt}(\bar{Z})\geq d = \min\{\text{wt}(\bar{X}),\text{wt}(\bar{Y}),\text{wt}(\bar{Z})\}$, this assumption gives the best estimate for the minimal-length codes.)
Therefore, the maximum weight of any term in the Hamiltonian is (greater than or equal to) $3d = 6t+3$.

\Cref{tab:locality-comparison-1D} summarizes the Hamiltonian locality.
It is useful to note that $2t + 1 < 6t + 3$ and $6t + 1 < 6t + 3$ always, and therefore the optimal Gauss's law code always leads to a more local Hamiltonian than the minimal-length encoding, at arbitrary distance.
The hard-core boson case provides exactly an improvement by a factor of $3$, while the staggered fermion case has a more modest improvement.

\begin{table}[htp]
    \centering
    \caption{Hamiltonian locality using the optimal Gauss's law code (GLC) and the minimal-length code at distance $d = 2t+1$ on a $1$D lattice for the Hamiltonians \eqref{eq:Hstandard} and \eqref{eq:HfermionJW}.}
    \label{tab:locality-comparison-1D}
    \begin{tabular}{c||c|c|c}
         & GLC \eqref{eq:Hstandard} & GLC \eqref{eq:HfermionJW} & Min-length
         \\ \hline \hline
         $d=3$ & $3$ & $7$ & $9$
         \\
         $d=5$ & $5$ & $13$ & $15$
         \\
         $d=7$ & $7$ & $19$ & $21$
         \\
         \hline
         $d \ge 7$ (general) & $2t+1$ & $6t+1$ & $6t+3$
    \end{tabular}
\end{table}

\subsubsection{2D Comparison}

In $2$D, the optimal Gauss's law code at distance $d = 2t+1$ for an $N$-site lattice uses $\frac14 (7t+4)(2t+1)N$ physical qubits for $d > 3$, and $9N$ physical qubits for $d = 3$.
Encoding the same lattice with the minimal-length code uses $3Nn_{\text{min}}$ physical qubits.
\Cref{tab:qubit-comparison-2D} shows the comparison between these values using Rains bound for $t = 1,2,3$ ($d=3,5,7$).
As in the $1$D case, the number of physical qubits used by the optimal Gauss's law code is less than that of the minimal-length encoding for $d \le 7$.
But again, the former grows as $\mathcal{O}(d^2)$ while the latter grows as $\mathcal{O}(d)$, suggesting that the optimal Gauss's law code will eventually require more resources than the optimal perfect code.
The crossover happens between $d=7$ and $d=9$, if a bound-saturating code exists.

\begin{table*}[htp]
    \centering
    \caption{Physical qubits used by the optimal Gauss's law code and the minimal-length code to achieve distance $d = 2t+1$ on an $N$-site lattice in $2$D.}
    \label{tab:qubit-comparison-2D}
    \begin{tabular}{c||c@{\hspace{0.2em}}c|c@{\hspace{0.2em}}c}
         & \multicolumn{2}{c|}{Optimal GLC}
         & \multicolumn{2}{c}{Minimal-length}
         \\
         \hline \hline
         $d=3$
         & $9N$ & (RRW~\cite{Rajput:2021trn})
         & $15N$ & ($\llbracket5,1,3\rrbracket$~\cite{laflamme1996perfect})
         \\
         $d=5$
         & $22.5N$ & (ours)
         & $33N$ & ($\llbracket11,1,5\rrbracket$~\cite{Calderbank:1996aj})
         \\
         $d=7$
         & $43.75N$ & (ours)
         & $51N$ & ($\llbracket17,1,7\rrbracket$~\cite{dutta2010quantumcycliccode, dutta2011quantum})
         \\
         \hline
         general
         & $\frac14(7t+4)(2t+1)N$ & (ours)
         & $3(6t-1)N$ & (bound~\cite{Rains:1996yc})
    \end{tabular}
\end{table*}

For the Hamiltonian locality comparison in $2$D, we can parallel the discussion from before.
We will need to use the optimal multiplicities $p_{\mathbf{n}}$ and $q_{\mathbf{n},\mu}$ provided in~\cref{sec:optimal} for the $2$D case.

By similar reasoning to the $1$D case, the electric energy term, mass term, and hopping term from \eqref{eq:Hstandard} all have weight exactly $2t+1$ after encoding.
It turns out that the $X_LX_LX_LX_L$ plaquette terms for $t>1$ \textit{also} require weight exactly $2t+1$, matching the code distance. For $t=1$, this gives weight $4$.
This can be shown by casework on the remainder of $d$ modulo $4$, using the exact expressions for $p_{\mathbf{n}}$ and $q_{\mathbf{n},\mu}$ from~\cref{sec:optimal}.
The key point is that the optimization procedure used to minimize the logical encoding rate for the Gauss's law code simultaneously makes the weight of all terms in the Hamiltonian \eqref{eq:Hstandard} essentially identical.

For the minimal-length code, the plaquette operator $X_LX_LX_LX_L$ clearly has the highest weight, requiring an interaction of $4d$ qubits at once.
The weight is then~$4d = 8t+4$.
Therefore, the improvement in locality by the Gauss's law code is a factor of $4$ ($d\geq5$) in the $2$D case.
This is summarized in \Cref{tab:locality-comparison-2D}.

\begin{table}[htp]
    \centering
    \caption{Hamiltonian locality for the hard-core boson case \eqref{eq:Hstandard} using the optimal Gauss's law code (GLC) and the minimal-length code at distance $d = 2t+1$ on a $2$D lattice. Note that the $d = 3$ case does not follow the general rule.}
    \label{tab:locality-comparison-2D}
    \begin{tabular}{c||c|c}
         & GLC \eqref{eq:Hstandard} & Min-length
         \\ \hline \hline
         $d=3$ & $4$ & $12$
         \\
         $d=5$ & $5$ & $20$
         \\
         $d=7$ & $7$ & $28$
         \\
         \hline
         $d \ge 7$ (general) & $2t+1$ & $8t+4$
    \end{tabular}
\end{table}

\subsection{Code Capacity and Logical Error Rate}\label{sec:LER-comparison}

In this section, code capacity demonstrations are performed by exposing code states to selected noise models, followed by attempting recovery via the usual stabilizer error correction procedure, equipped with a chosen syndrome decoder.
A code capacity study assumes that errors only happen before syndrome extraction and only on data qubits, and all other parts (circuits and measurements for syndromes, initial state preparation, ancilla qubits) are done perfectly.
Although such a code capacity study is less realistic than full \emph{circuit-level} simulation, where all circuits undergo errors, it can assess how much logical information can be recovered from ideal syndrome outcomes, as was done in e.g.~\cite{Dennis:2001nw}.
Many technical details regarding the syndrome decoding strategy and analytical calculations are given in~\cref{app:syndrome-decoding,app:exact-probability}.
The following features are in common between all of our demonstrations:

\begin{itemize}
    \item Spatial dimensionality of underlying lattice: $1$D.
    \item Noise model characteristic: a single-qubit error channel applied independently with identical distribution (i.i.d.) to every physical qubit.
    \item Syndrome decoder used for Gauss's law codes: maximum-likelihood decoder (MLD) (along with posterior evaluation using concatenated structure~\cite{Poulin:2006lth}); see~\cref{app:syndrome-decoding} for details.
    \item Syndrome decoder used for minimal-length codes: maximum-likelihood decoder (MLD).
\end{itemize}

A convenient performance metric for code capacity demonstrations is the logical error rate (LER), which measures the likelihood that the initial noise is disruptive enough to cause a logical failure even under the assumption of perfectly executed recovery.
Additional details are provided under the heading for each demonstration.

\subsubsection{Depolarizing Noise Demonstration}\label{sec:numerical}

For our first demonstration, consider the $1$D spatial lattice with $N$ sites, and equip the optimal Gauss's law code $C_\OUT \circ C_\IN$ for odd distance $d \ge 3$.
In addition to the Gauss's law code, consider also the minimal-length code with the same distance $d$ and on the same $N$-site lattice.
In the numerical simulation, we impose open boundary conditions for both boundary links (setting $L_0=L_{N}=1$), resulting in an open lattice with $(N-1)$ links.

We use depolarizing noise, which acts on the $i^{\mathrm{th}}$ qubit by
\begin{equation}\label{eq:depolarizing}
    \mathcal E_i(\rho) = (1-p)\rho + \frac{p}{3}\left(X_i\rho X_i + Y_i\rho Y_i + Z_i\rho Z_i\right),
\end{equation}
where $\rho$ denotes the state of the entire $n$-qubit physical system, and $0\le p\le 1$ is the physical error rate~\footnote{It is different from $p_{\mathbf{n}}$, which is the duplication parameter.}.
Applying this as i.i.d. noise to every qubit results in the error channel $\mathcal E = \mathcal E_1 \circ \mathcal E_2 \circ \cdots \circ \mathcal E_n$.

The recovery operation $\mathcal R$ depends on whether the code used is the Gauss's law code or the minimal-length code.
In both cases, begin by measuring the full stabilizer syndrome.
For the Gauss's law code, decode the syndrome with MLD (see~\cref{app:syndrome-decoding}), where the posterior probability is evaluated making use of the concatenated structure, thereby obtaining a candidate Pauli error pattern.
For the minimal-length code, decode the syndrome with MLD (see~\cref{app:syndrome-decoding}).
In both cases, to conclude the recovery operation, apply the decoded pattern directly to the physical system.

As a metric, we define the logical error rate by using the individual logical operators. 
Specifically, for Gauss's law code (GL), we define the logical operators as
\begin{align}
    \bar{Z}^{(\text{GL})}_{n} &:= \bar{Z}^{(\text{PF})}_{L_n}\,,
    \\
    \bar{X}^{(\text{GL})}_{n} &:= \bar{X}^{(\text{PF})}_{S_{n}}\bar{X}^{(\text{PF})}_{L_{n}}\bar{X}^{(\text{PF})}_{S_{n+1}}\,,\label{eq:glc-logical-x}
\end{align}
where $\bar{X}^{(\text{PF})},\bar{Z}^{(\text{PF})}$ represent the logical operators in the encoding by the inner phase-flip code.
For the minimal-length codes, the logical operators are defined by
\begin{align}
 \bar{Z}^{(\text{min})}_{2m-1} &:=\bar{Z}^{(\text{min})}_{S_m}
 \,,&
 \bar{Z}^{(\text{min})}_{2m} &:=\bar{Z}^{(\text{min})}_{L_m}\,,
 \\
 \bar{X}^{(\text{min})}_{2m-1} &:=\bar{X}^{(\text{min})}_{S_m}
 \,,&
 \bar{X}^{(\text{min})}_{2m} &:=\bar{X}^{(\text{min})}_{L_m}\,.
\end{align}

Let $x^{(\texttt{code})},z^{(\texttt{code})}\in\mathbb F_2^{k^{(\texttt{code})}}$ denote the binary vectors specifying the residual logical Pauli error after recovery. That is to say,  the residual error is written as
\begin{equation}
\prod_{n=1}^{k^{(\texttt{code})}}
\left(\bar X_n^{(\texttt{code})}\right)^{x_n^{(\texttt{code})}}
\left(\bar Z_n^{(\texttt{code})}\right)^{z_n^{(\texttt{code})}}\,,
\end{equation}
up to a stabilizer and an overall phase.
Then two types of LER can be defined by~\footnote{The different metrics (LERs) are used for the respective codes, since the logical dimensions are different. 
We can instead evaluate $\LER^{(\texttt{GL})}$ for both codes, but that for the minimal-length code does not measure failure of its full logical $X/Z$ eigenvalues. On the other hand, imposing Gauss's law to match the logical dimension would increase the relevant $X$-distance. }
\begin{align}
    \LER^{(\texttt{code})}_Z & = \Pr\left[ \left(x^{(\texttt{code})}_{1},\cdots, x^{(\texttt{code})}_{k^{(\texttt{code})}}\right) \neq \bm{0}\right]\,,
    \\
    \LER^{(\texttt{code})}_X & = \Pr\left[ \left(z^{(\texttt{code})}_{1},\cdots, z^{(\texttt{code})}_{k^{(\texttt{code})}}\right) \neq \bm{0}\right]\,,
\end{align}
where $k^{(\texttt{code})}$ represents the number of logical qubits in each code, given as
\begin{align}
    k^{(\text{GL})} = N-1\,,\quad 
    k^{(\text{min})} = 2N-1\,,
\end{align}
under the assumption that the lattice has $N$ sites and $(N-1)$ links.
The value of $\LER^{(\texttt{code})}_{Z}$ ($\LER^{(\texttt{code})}_{X}$) evaluates the ability of this setup to protect a simultaneous eigenstate of the logical $\bar{Z}$ ($\bar{X}$) operators against errors.
\Cref{fig:memory-experiment-depolarizing} shows $\LER^{(\texttt{code})}_Z$ (a) and $\LER^{(\texttt{code})}_X$ (b) against the physical error rate $p$.
The former (latter) setting is called the $Z$ memory ($X$ memory) simulation.
We see that, for both code families, LER decreases with increasing $d$, for the value of $p$ below the points where several $d$ lines cross each other (called the \emph{threshold}).
For the $Z$-memory simulation, the Gauss's-law code has a lower LER than the minimal-length code for $d=3$, but a higher LER for $d=7$. In contrast, in the $X$-memory experiment, the Gauss's-law code has a lower LER than the minimal-length code for all three distances, $d=3,5,7$. Furthermore, the crossing point for the Gauss's-law code occurs at a higher value of $p$ than that for the minimal-length code in the $X$-memory experiment, whereas the opposite ordering is observed in the $Z$-memory experiment.
Evaluating threshold values in the \textit{circuit-level} simulation is important to quantify the performance in a more practical setting, which we leave as a future direction.

For the $Z$ memory experiment, the single-site flip (product over $i$ is only taken for Gauss's law code)
\begin{equation}
    \ket{0}_{L_n}\prod_{i}(\ket{0}_{S^{(i)}_{n+1}})\ket{0}_{L_{n+1}}
    \to 
    \ket{0}_{L_n}\prod_{i}(\ket{1}_{S^{(i)}_{n+1}})\ket{0}_{L_{n+1}} \,,
\end{equation}
can be detected and corrected for Gauss's law code, while it gives a logical error for minimal-length codes.
Gauss's law codes can correct such gauge-violating errors, which are \textit{also} ideally avoided in LGT simulations.
However, when the site flip is accompanied by a flip of either neighboring link, the resulting error can cause a logical failure of the Gauss's-law code. The simulation results are consistent with the interpretation that, as $d$ increases, the multiplicity of such failure patterns comes to outweigh the advantage of correcting a simultaneous flip of all duplicated copies of one site variable (which is not gauge invariant), leading to the reversal in the relative performance of the two codes.
Besides, the asymmetry in $X$ and $Z$ memory behavior that is observed only for the Gauss's law code may come from the concatenation structure (phase-flip and bit-flip errors are treated separately in inner and outer codes) and the weight of logical operators~\cref{eq:glc-logical-x}.

\begin{figure*}
  \centering
  \includegraphics[width=0.9\linewidth]{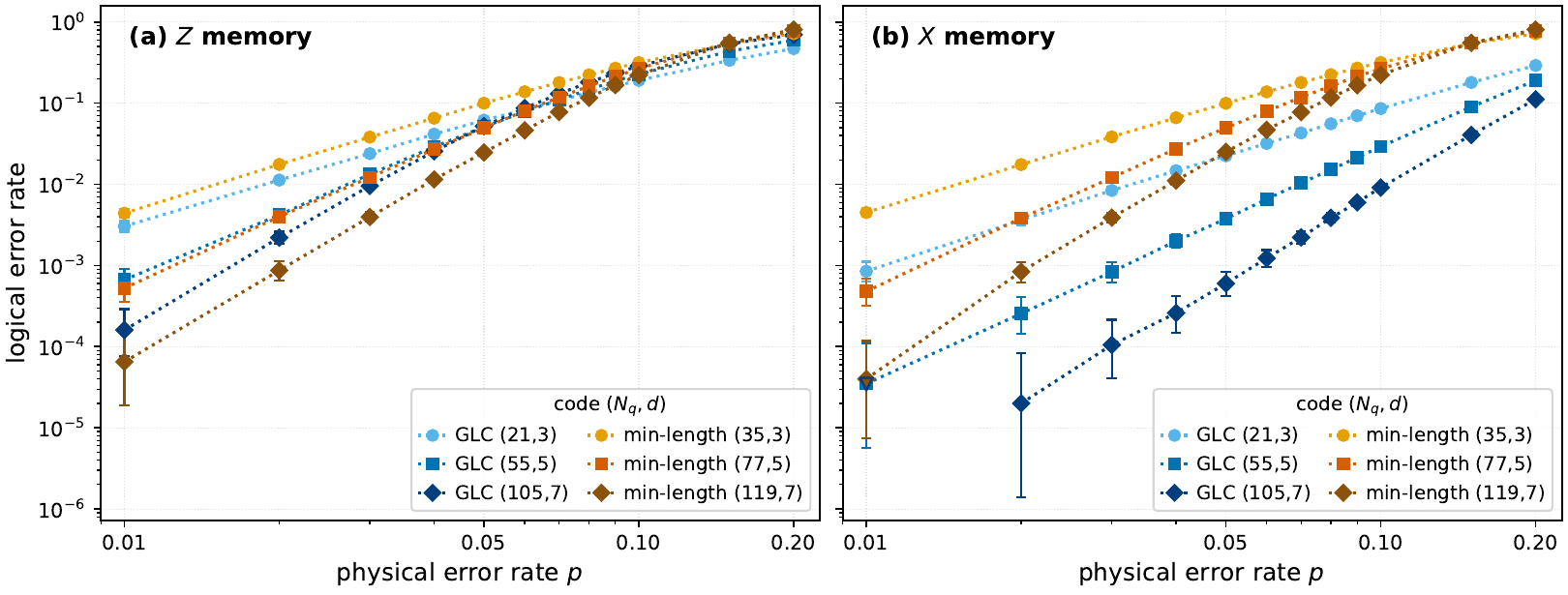}
  \caption{
  Memory experiment for a 1D $N=4$ lattice with the open boundary condition. The $x$-axis shows the physical error rate, and the $y$-axis is the logical error rate. 
  Blue and red curves correspond to the Gauss’s-law code (GLC) and the minimal-length code (min-length), respectively. Each legend entry is labeled by $(N_q,d)$, where $N_q$ is the total number of physical qubits and $d$ is the code distance. The left (right) panel focuses on the $Z$ ($X$) memory experiment, if they can keep the $Z$ ($X$) eigenstates. Monte-Carlo simulation is performed at each data point with $N_{\text{shot}} = 2\times 10^5$ shots, and LER is estimated as $N_{\text{fail}}/N_{\text{shot}}$. The error bar shows binomial likelihood-factor intervals with a maximum likelihood factor $10^3$.
  }
  \label{fig:memory-experiment-depolarizing}
\end{figure*}

\subsubsection{Uncorrelated X/Z Noise Demonstration}\label{sec:analytical}

For our second demonstration, consider again the $1$D spatial lattice with $N$ sites, and equip the optimal Gauss's law code $C_\OUT \circ C_\IN$ for odd distance $d \ge 3$.

Define uncorrelated $X/Z$ noise on the $i^{\mathrm{th}}$ qubit as the error channel
\begin{align}\label{eq:uncorrelatedX/Z}
    \mathcal E_i(\rho) = (1&-p)(1-p)\rho + p(1-p)X_i\rho X_i\\
    & + p^2Y_i\rho Y_i + p(1-p)Z_i\rho Z_i,
\end{align}
where $\rho$ denotes the state of the entire $n$-qubit physical system, and $0 < p < 1$ is used as the independent rate for physical $X$-errors and $Z$-errors.\footnote{Technically, we will need to assume $p < 1/2$ in order for the results to be exactly correct. This technicality is discussed in~\cref{app:syndrome-decoding}.}
When acting i.i.d. on every qubit, this results in the error channel $\mathcal E = \mathcal E_1 \circ \mathcal E_2 \circ \cdots \circ \mathcal E_n$.

For the recovery operation $\mathcal R$, measure all inner and outer stabilizer generators, and decode the resulting syndrome with MLD, thereby obtaining a candidate Pauli error pattern.
The explicit decoding method is discussed in~\cref{app:syndrome-decoding}; importantly, many details are simplified due to the fact that uncorrelated $X/Z$ noise is used.
To conclude the recovery operation, apply the decoded pattern directly onto the physical system.

\begin{figure}[htp]
    \centering
    \includegraphics[width=\linewidth]{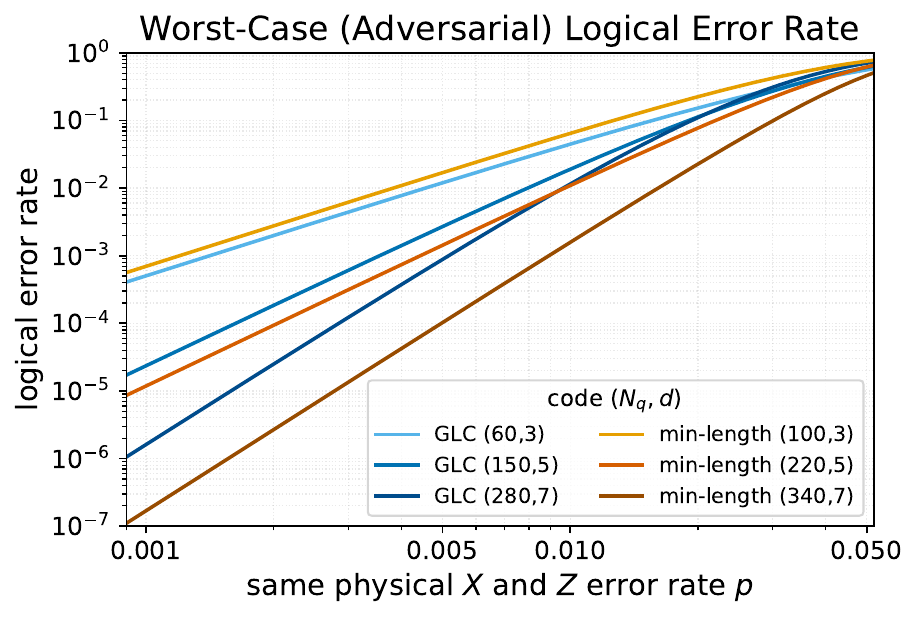}
    \caption{Logical error rates on a $1$D periodic lattice with $N_s = 10$ sites, on the worst-case (adversarial) initial code state. The uncorrelated $X/Z$ channel is used with the same physical $X$ and $Z$ error rates. Blue and red curves correspond to the Gauss's law code (GLC) and the minimal-length code (min-length), respectively. Each legend entry is labeled by $(N_q, d)$ where $N_q$ is the total number of physical qubits and $d$ is the code distance. These curves are obtained from the syndrome decoding and analytical calculations that are detailed in~\cref{app:syndrome-decoding,app:exact-probability}.
    }\label{fig:uncorrelated-XZ}
\end{figure}

In~\cref{fig:uncorrelated-XZ}, we plot the (strict) \textit{worst-case} LER, defined as
\begin{equation}\label{eq:strict-LER}
    \LER \equiv \max_{\ket{\psi}\in\mathcal H_C}\Pr\left[|\langle\psi_r|\psi\rangle|^2\neq 1\right],
\end{equation}
where $\ket{\psi_r} \equiv \mathcal R(\mathcal E(\ket{\psi}))$ denotes the (random) state obtained after recovery, and $\mathcal H_C$ denotes the code space of the Gauss's law code.
The LER defined by \eqref{eq:strict-LER} maximizes over all possible initial states to find the worst-case probability that the initial state is not recovered exactly.
For Gauss's law codes, which are handled by MLD, this is identical to the LER defined by
\begin{equation}
    \LER^{(\texttt{GL})} = \Pr\left[ x^{(\texttt{GL})} \neq \bm{0} \lor z^{(\texttt{GL})} \neq \bm{0}\right]\,,
\end{equation}
where we use the same notation as in~\cref{sec:numerical}.

As it turns out, there is a very high degree of analytical control over this question, and we can compute \eqref{eq:strict-LER} exactly.
This analytical calculation is carried out in~\cref{app:exact-probability}, and~\cref{fig:uncorrelated-XZ} contains the results.
One can see a similar trend as in the depolarizing noise simulation.
The logical error rate of the Gauss's law code is lower than that of the minimal-length code for $d=3$, but this ordering is reversed for $d=5$ and $d=7$.

\section{Summary and Conclusion}\label{sec:summary}
In this paper, we considered a quantum error-correction protocol incorporating Gauss’s law constraint. 
It allowed us to correct violations of Gauss’s law caused by hardware noise within a unified QEC framework. 
Specifically, we extended the code construction with code distance $d=3$~\cite{Rajput:2021trn} to an arbitrary distance. 
The sites and links are duplicated, and the classical Gauss’s law code is built from Gauss’s law and parity checks between pairs of duplicated links/sites, which is further concatenated with a phase-flip repetition code to be a full QEC code. 
We have shown that our construction with a specific duplication pattern achieves an arbitrary distance and has the optimal encoding rate among the considered constructions.

To quantify the performance of the resulting codes,
the number of physical qubits was compared with baseline minimal-length code constructions, and at low code distances ($d\leq 7)$, Gauss’s law code is smaller than that of the minimal-length code, generalizing similar findings for $d=3$ in RRW~\cite{Rajput:2021trn}.
We also performed the memory simulation in the code capacity setting for the depolarizing noise, and observed that the Gauss’s-law code exhibits lower $X$-memory logical error rates than the minimal-length code at all distances considered, whereas the $Z$-memory comparison depends on the distance. The exact analysis under uncorrelated $X/Z$ noise exhibits the same qualitative distance dependence as the depolarizing-noise $Z$-memory comparison.

There are many important future directions. First, it would be important to consider the fault-tolerant syndrome extraction circuit. 
It can be implemented using general ancilla-based or flag-based constructions~\cite{Shor:1996qc,DiVincenzo:1996xb, Steane:1996ic,Chao:2017two,Chamberland:2017flag} or a logical-to-physical CNOT gate as proposed in RRW~\cite{Rajput:2021trn}. 
Evaluating their performance using circuit-level simulation will allow us to compare the performance of Gauss’s law code with that of other possible codes, in particular in terms of the required error threshold. 
How to perform gate operations in a fault-tolerant way is another important direction toward fault-tolerant LGT simulation, as was discussed in~\cite{Spagnoli:2024mib} for the $d=3$ case. 
Finally, it would be interesting to extend Gauss’s law code construction to non-Abelian gauge theory.

\begin{acknowledgments}
This work was supported by the US DOE and other Agencies.
N.~S.~M. and C.~W.~B. were supported by the US Department of Energy, Office of Science, National Quantum Information Science Research Centers, Quantum Systems Accelerator (Award No. DE-SCL0000121).
C.~W.~B. also acknowledges support from the U.S. Department of Energy (DOE), Office of Science under contract DE-AC02-05CH11231, partially through Quantum Information Science Enabled Discovery (QuantISED) for High Energy Physics (KA2401032).
L.~N. is supported by JST PRESTO Grant Number JPMJPR25F5 and JST Grant Number JPMJPF2221.
M.~H.~is supported by JST CREST Grant Number JPMJCR24I3, JSPS KAKENHI Grant Number JP22H01222 and the Royal Society grants ICA/R2/242058 and IEC/R3/243103.
N.Y. is supported by JST Grant Number JPMJPF2221, JST CREST Grant Number JPMJCR23I4, IBM Quantum, Google Quantum AI, JST ASPIRE Grant Number JPMJAP2316, JST ERATO Grant Number JPMJER2302, JST [Moonshot R\&D] [Grant Number JPMJMS256J], and Institute of AI and Beyond of the University of Tokyo.
The authors acknowledge usage of Codex and ChatGPT for assistance with developing numerical simulations, as well as mathematical proofs in the appendix, which were subsequently human-checked.
\end{acknowledgments}

\bibliography{refs}

\appendix
\section{Proofs for Optimality Theorems}\label[appendix]{app:proofs}

Here, we provide the proofs of~\cref{thm:optimal1D,thm:optimal2D}.

\begin{proof}[Proof of~\cref{thm:optimal1D}]
    It suffices to prove the bound
    \begin{equation}
        \frac{k}{n} \le \frac{2}{d(d+1)},
    \end{equation}
    since it's easy to check that the code defined explicitly by the multiplicities \eqref{eq:optimal1Dpq} saturates the bound.

    Write $C_\OUT \circ C_\IN$ for the Gauss's law code, where $C_\OUT = [n_\OUT,k_\OUT,d_\OUT]$ is the classical Gauss's law code and $C_\IN = [n_\IN,1,d_\IN]$ is an arbitrary classical code encoding a single logical bit.
    
    Recall that the number of logical qubits is always $k = k_\OUT = N$ for any $1$D Gauss's law code, and therefore it suffices to find a tight lower bound on $n = n_\OUT n_\IN$.
    Clearly, $n_\IN \ge d_\IN \ge d$, with equality when $n_\IN = d_\IN = d$, e.g., by taking $C_\IN = [d,1,d]$ to be the classical repetition code.

    It remains to find the tight lower bound for $n_\OUT$.
    To this end, suppose $p_1,\dots,p_N \in \mathbb N$ and $q_1,\dots,q_N \in \mathbb N$ are the multiplicities defining the classical Gauss's law code $C_\OUT$.
    Define
    \begin{align}
        P &\equiv \sum_{i=1}^{N} p_i,\\
        Q &\equiv \sum_{i=1}^{N} q_i,
    \end{align}
    for brevity.

    The classical distance satisfies $d_\OUT \ge d$, which implies that every open string has weight at least $d$.
    In particular, for an open string between sites $j$ and $j+1$, we have
    \begin{equation}
        p_j + q_j + p_{j+1} \ge d_\OUT \ge d.
    \end{equation}
    Summing over $j$ and using periodicity yields
    \begin{equation}
        2P + Q \ge dN.
    \end{equation}
    Moreover, since every $q_j$ is a positive integer, we have $Q \ge N$.
    Therefore,
    \begin{equation}
        n_\OUT = P+Q = \frac{(2P+Q)+Q}{2} \ge \frac{dN+N}{2} = \frac{d+1}{2}N,
    \end{equation}
    which must be satisfied for every $d$, regardless of parity.

    Finally, using the fact that the number of logical qubits is $k = N$ for the concatenated code, we have that the encoding rate satisfies
    \begin{equation}
        \frac{k}{n} = \frac{N}{n_\OUT n_\IN} \le \frac{N}{\left(\frac{d+1}{2}N\right)d} = \frac{2}{d(d+1)},
    \end{equation}
    as desired.
\end{proof}

\begin{proof}[Proof of~\cref{thm:optimal2D}]
    As with the $1$D case, it suffices to prove the upper bound on the encoding rate $k/n$, since it's easy to check that the code defined explicitly by the multiplicities in~\cref{sec:optimal} saturates the bound.
    
    As in the $1$D case, write $C_\OUT \circ C_\IN$ for the Gauss's law code, where $C_\OUT = [n_\OUT,k_\OUT,d_\OUT]$ is the classical Gauss's law code and $C_\IN = [n_\IN,1,d_\IN]$ is an arbitrary classical code encoding a single logical bit.
    
    For the $2$D case, the number of logical qubits is always $k = k_\OUT = 2N$; therefore, again, it suffices to find a tight lower bound on $n = n_\OUT n_\IN$.
    For $C_\IN$, the same reasoning holds as in the $1$D case: $n_\IN \ge d_\IN \ge d$, with equality when $n_\IN = d_\IN = d$, e.g., by taking $C_\IN = [d,1,d]$ to be the classical repetition code.

    The interesting step again reduces to finding the tight lower bound for $n_\OUT$.
    This time, we denote $p_{\mathbf{n}} \in \mathbb N$ and $q_{\mathbf{n},\mu} \in \mathbb N$ as the multiplicities defining the classical Gauss's law code $C_\OUT$.
    Define
    \begin{align}
        P &\equiv \sum_{\mathbf{n}\in\Lambda} p_{\mathbf{n}},\\
        Q &\equiv \sum_{\mathbf{n}\in\Lambda}\sum_{\mu=1}^2 q_{\mathbf{n},\mu}.
    \end{align}

    The classical distance satisfies $d_\OUT \ge d$, which implies that every open string has weight at least $d$, \textit{and} every closed loop has weight at least $d$.
    We proceed by casework on $d$.
    If $d = 3$, then the result is straightforward, because $k = 2N$ and $n_\OUT \ge 3N$, so that
    \begin{equation}
        \frac{k}{n} = \frac{2N}{n_\OUT n_\IN} \le \frac{2N}{3N\cdot3} = \frac{2}{9},
    \end{equation}
    as desired.

    Now suppose $d > 3$, and consider the weight of the open string spanning just a single link $L_{\mathbf{n},\mu}$ and its endpoints $S_{\mathbf{n}}$ and $S_{\mathbf{n}+\hat{\mu}}$.
    This weight is given by
    \begin{equation}
        p_{\mathbf{n}} + q_{\mathbf{n},\mu} + p_{\mathbf{n}+\hat{\mu}} \ge d_\OUT \ge d.
    \end{equation}
    By summing over all $\mathbf{n}\in\Lambda$ and $\mu \in\{1,2\}$, we obtain
    \begin{equation}
        4P + Q \ge 2dN.
    \end{equation}
    Similarly, consider the weight of the plaquette with links $L_{\mathbf{n},x}$, $L_{\mathbf{n}+\hat{x},y}$, $L_{\mathbf{n}+\hat{y},x}$, and $L_{\mathbf{n},y}$.
    This weight is given by
    \begin{equation}
        q_{\mathbf{n},x} + q_{\mathbf{n}+\hat{x},y} + q_{\mathbf{n}+\hat{y},x} + q_{\mathbf{n},y} \ge d_\OUT \ge d.
    \end{equation}
    After summing over all $\mathbf{n} \in \Lambda$, we obtain
    \begin{equation}
        2Q \ge d N.
    \end{equation}
    Combining the inequalities together, we find
    \begin{align}
        n_\OUT &= P+Q = \frac14(4P+Q) + \frac38(2Q)\\
        & \ge \frac14 (2dN) + \frac38 (dN)\\
        & = \frac78 dN.
    \end{align}

    This immediately gives the bound on the encoding rate
    \begin{equation}
        \frac{k}{n} = \frac{2N}{n_\OUT n_\IN} \le \frac{2N}{\left(\frac78 dN\right)d} = \frac{16}{7d^2},
    \end{equation}
    which is the desired bound for even $d$ (but also holds for odd $d$).

    For odd $d$, there is a slight improvement.
    For any $\mathbf{n} \in \Lambda$, consider the sums of the site and link variables around a plaquette with corner at $\mathbf{n}$:
    \begin{align}
        v_{\mathbf{n}} &\equiv p_{\mathbf{n}} + p_{\mathbf{n}+\hat{x}} + p_{\mathbf{n}+\hat{y}} + p_{\mathbf{n}+\hat{x}+\hat{y}},\\
        \ell_{\mathbf{n}} &\equiv q_{\mathbf{n},x} + q_{\mathbf{n},y} + q_{\mathbf{n}+\hat{x},y} + q_{\mathbf{n}+\hat{y},x}.
    \end{align}
    Consider the $4$ open strings that can be formed from any single link and its endpoints around this plaquette.
    The weights of these strings satisfy
    \begin{align}
        p_{\mathbf{n}} + q_{\mathbf{n},x} + p_{\mathbf{n}+\hat{x}} &\ge d,\\
        p_{\mathbf{n}} + q_{\mathbf{n},y} + p_{\mathbf{n}+\hat{y}} &\ge d,\\
        p_{\mathbf{n}+\hat{x}} + q_{\mathbf{n}+\hat{x},y} + p_{\mathbf{n}+\hat{x}+\hat{y}} &\ge d,\\
        p_{\mathbf{n}+\hat{y}} + q_{\mathbf{n}+\hat{y},x} + p_{\mathbf{n}+\hat{x}+\hat{y}} &\ge d.
    \end{align}
    By adding together all these inequalities, we find
    \begin{equation}
        2v_{\mathbf{n}} + \ell_{\mathbf{n}} \ge 4d.
    \end{equation}
    Furthermore, the weight of the plaquette itself satisfies
    \begin{equation}
        \ell_{\mathbf{n}} \ge d.
    \end{equation}
    By adding thrice of the latter inequality to the former inequality, we find
    \begin{equation}
        2v_{\mathbf{n}} + 4\ell_{\mathbf{n}} \ge 7d.
    \end{equation}
    The LHS is always an even integer, but if $d$ is odd then the RHS is an odd integer.
    Therefore, in the case of odd $d$, we have the stronger inequality
    \begin{equation}
        2v_{\mathbf{n}} + 4\ell_{\mathbf{n}} \ge 7d + 1.
    \end{equation}
    By summing over all $\mathbf{n} \in \Lambda$, we find
    \begin{equation}
        8P + 8Q \ge (7d+1)N.
    \end{equation}
    Equivalently,
    \begin{equation}
        n_\OUT \ge \frac{7d+1}{8}N.
    \end{equation}

    This gives a stronger encoding rate bound for odd $d$, namely,
    \begin{equation}
        \frac{k}{n} = \frac{2N}{n_\OUT n_\IN} \le \frac{2N}{\left(\frac{7d+1}{8}N\right)d} = \frac{16}{(7d+1)d},
    \end{equation}
    as desired.
\end{proof}
\section{Syndrome Decoding}\label[appendix]{app:syndrome-decoding}

In this appendix, we discuss the concept of error syndromes and how they can be used to diagnose and correct errors that occur on physical qubits.
We begin with the definitions of error syndromes and error cosets in the classical and quantum setting, as well as stabilizer cosets and the related notion of ``logical equivalence classes," which are features specific to the quantum case, without a classical analogue.
After defining these general concepts, we discuss the most popular types of syndrome \textit{decoders}, which aim to deduce a candidate error pattern from knowledge only of its error syndrome.
Finally, we provide explicit syndrome decoding algorithms that are used to obtain the results in~\cref{sec:LER-comparison} in the main text.

We will assume that all noise models used in the quantum setting are Pauli noise models, since that is the case relevant for this work.

\subsection{Error Syndromes, Cosets, and Logical Equivalence Classes}

For a classical code $C = [n,k,d]$ with parity-check matrix $H$, the \textit{error syndrome} of any bit string $v \in \mathbb F_2^n$ is defined as
\begin{equation}
    \sigma(v) \equiv Hv \in \mathbb F_2^{n-k}.
\end{equation}
The syndrome of $v$ is zero if and only if $v$ is a codeword, and therefore, the syndrome can be used for the purpose of detecting errors.
If a codeword $c \in C$ is transmitted, and the received word $c' \in \mathbb F_2^n$ has nonzero syndrome $s \equiv \sigma(c') \neq 0$, then we can deduce that an error must have occurred during transmission.

Every binary error is a bit-flip, and therefore we can summarize the error on $c$ as a bit string $e \in \mathbb F_2^n$, called the \textit{error pattern}, with $c' = c + e$.
The syndrome function can be applied to the error pattern itself, and the result will always match the syndrome of the received word, $\sigma(e) = \sigma(c') = s$.
For any fixed error syndrome $s \in \mathbb F_2^{n-k}$, there are always exactly $2^k$ distinct patterns $e \in \mathbb F_2^n$ that would lead to the same syndrome, $\sigma(e) = s$.
In classical error correction, this collection is sometimes called the ``error coset" corresponding to the syndrome $s$.
The reason for this terminology is that the entire set of all possible error patterns having the same syndrome $s$,
\begin{equation}
    \Sigma_{s} \equiv \{e \in \mathbb F_2^n \mid \sigma(e) = s\},
\end{equation}
can be generated from a single representative $\widetilde{e} \in \mathbb F_2^n$ with the same syndrome $\sigma(\widetilde{e}) = s$ by arbitrary $C$-translations.
In mathematical terms, $\Sigma_{s} = \tilde{e}+C$, where
\begin{equation}
    \tilde{e}+C
    \equiv \{\widetilde{e} + c\mid c \in C\}.
\end{equation}
This is the easiest way to see that $|\Sigma_{s}| = 2^k$, because clearly $|\tilde{e}+C| = |C| = 2^k$.

At the quantum level, the situation is analogous to the classical case, but there are important changes due to the fact that, in addition to bit-flip errors ($X$ errors), there can also be phase-flip errors ($Z$ errors), or even simultaneous bit-flip and phase-flip errors ($Y$ errors).
For starters, it is not possible to assign an error syndrome directly to an arbitrary quantum state of physical qubits.
Instead, one must \textit{measure} all stabilizer generators to obtain the error syndrome, implying that syndromes are non-deterministic functions of the physical state.

For instance, consider an $\llbracket n,k,d\rrbracket$ stabilizer code with stabilizer group $\mathcal S \subseteq \mathcal P_n$ and code space $\mathcal H_C$.
An initial (pure) code state $\rho \equiv \ket{\psi}\bra{\psi}$ for $\ket{\psi} \in \mathcal H_C$ becomes a (possibly mixed) received state $\rho'$ after errors are incurred.
The error syndrome for $\rho'$ is then given by measuring the stabilizer generators $S_1, \dots, S_{n-k} \in \mathcal S$, yielding respective outcomes $m_1, \dots, m_{n-k} \in \{\pm 1\}$ drawn from the probability distribution induced by $\rho'$.

To sidestep this non-deterministic behavior, it is possible to associate the notion of error syndrome directly to the error pattern, instead of the physical quantum state, analogous to the classical case.
Given that our noise model is assumed to be Pauli, it makes sense to define the syndrome function $\sigma : \mathcal P_n \to \{\pm 1\}^{n-k}$ as follows:
\begin{equation}
    \sigma(E) \equiv (m_1, \dots, m_{n-k}) \iff \forall S_i \in \mathcal S: E S_i = m_i S_i E,
\end{equation}
where the target space $\{\pm 1\}^{n-k}$ is due to the fact that every pair of Pauli operators either commutes or anti-commutes.
This always matches exactly the syndrome that would have been measured on the received state $\rho'$ if the Pauli error $E$ was incurred, i.e., if $\rho' = E \rho E^\dagger$.
This pushes the randomness of the syndrome measurement entirely to the noise model, as in the classical case.\footnote{In other words, we can assume that the syndrome is randomly decided at the moment the error is applied, instead of when the state is measured.}

One can now also define a quantum analogue of error cosets by considering the set of all Pauli operators with a given error syndrome $m \in \{\pm 1\}^{n-k}$,
\begin{equation}
    \Sigma_{m} \equiv \{E \in \mathcal P_n \mid \sigma(E) = m\}.
\end{equation}
Similarly to the classical case, if $\widetilde E \in \mathcal P_n$ has syndrome $\sigma(\widetilde{E}) = m$, then $\Sigma_{m}$ exactly coincides with a genuine coset $[\widetilde{E}]_{C_{\mathcal{P}_n}(\mathcal S)}$ in the group theory sense.
Importantly, this coset is built from the \textit{centralizer} $C_{\mathcal{P}_n}(\mathcal S)$ of the stabilizer $\mathcal S$ within the Pauli group $\mathcal P_n$.
The coset is given by
\begin{equation}
    [\widetilde{E}]_{C_{\mathcal{P}_n}(\mathcal S)}
    \equiv
\widetilde{E}C_{\mathcal{P}_n}(\mathcal S)
    = \{\widetilde{E} P \mid P \in C_{\mathcal{P}_n}(\mathcal S)\}.
\end{equation}
This holds because $\widetilde{E}$ and $\widetilde{E} P$ have the same syndrome if and only if $P$ commutes with the entire stabilizer.

However, there is a crucial subtlety present here: unlike in the classical setting, the Pauli error pattern $E$ that connects an initial code state $\ket{\psi} \in \mathcal H_C$ to the received state $\ket{\psi'}$ is \textit{not} unique.
This is true for two separate reasons.
First, there is a technicality involving global phases: if $E,E'\in\mathcal P_n$ differ only by a global phase, then there is no physical difference between $E\ket{\psi}$ and $E'\ket{\psi}$.
This is resolved by replacing the Pauli group $\mathcal P_n$ with the Pauli group modulo phases $\overline{\mathcal P}_n$, which is defined as the quotient space $\mathcal P_n / \{I,iI,-I,-iI\}$.
After passing to this quotient, we identify $\mathcal S$ with its image under the quotient map whenever no confusion arises.
Consequently, all previous definitions (e.g., stabilizer group, error cosets, etc.) are updated in the obvious way.

However, even when working with the Pauli group modulo phases, there can still exist $E, E' \in \overline{\mathcal P}_n$ with $E \neq E'$ such that $E\ket{\psi} = E'\ket{\psi} = \ket{\psi'}$ holds for every $\ket{\psi} \in \mathcal H_C$.
The point is that if $E' = ES$ where $S \in \mathcal S$, then $E'\ket{\psi} = ES\ket{\psi} = E\ket{\psi}$, because $\ket{\psi} \in \mathcal H_C$ is held fixed by stabilizer group elements.
This observation motivates defining a different coset called the \textit{stabilizer coset} of an error pattern $E \in \overline{\mathcal P}_n$,
\begin{equation}
[E]_{\mathcal{S}}\equiv
    E\mathcal{S}
    = \{ES \mid S \in \mathcal S\}.
\end{equation}
Every Pauli error pattern in $E\mathcal S$ not only has the same error syndrome, but cannot be distinguished by any physical measurement---including even the final results of a quantum computation.
The converse is also true, but requires a technicality involving the global phase.
As long as $E$ itself is an element of $\overline{\mathcal P}_n$, then $E$ represents an equivalence class of Pauli strings that are identical up to global phase.
In that case, each element $ES$ of the stabilizer coset is also regarded as an element of the Pauli group modulo phases, and the entire collection of all possible stabilizer cosets is mathematically identical to the quotient space $\overline{\mathcal P}_n / \mathcal S$, where the Pauli group modulo phases $\overline{\mathcal P}_n$ is used, rather than the proper Pauli group $\mathcal P_n$.
It is this quotient space, $\overline{\mathcal P}_n/\mathcal S$, which represents physically distinguishable Pauli errors on the code space.

The $4^n$ Pauli operators (modulo global phase) therefore naturally split into $2^{n-k}$ error cosets $\{\Sigma_{m}\}$ based on the $2^{n-k}$ possible error syndromes $m \in \{\pm 1\}^{n-k}$.
Each error coset $\Sigma_{m}$ contains $4^n / 2^{n-k} = 2^{n+k}$ distinct Pauli strings, which form stabilizer cosets containing $|\mathcal S| = 2^{n-k}$ error patterns each.\footnote{The fact that the size of each stabilizer coset matches the total number of error cosets is merely a mathematical coincidence at this level.}
Once a syndrome is fixed, the number of stabilizer cosets consistent with that syndrome is therefore $2^{n+k} / 2^{n-k} = 4^k$.
These stabilizer cosets are called \textit{logical equivalence classes}.
They are physically distinguishable, not at the level of the \textit{encoding}, but rather at the level of \textit{numerical results} for logical computations.

In summary, the $4^n$ elements of the Pauli group modulo phases $\overline{\mathcal P}_n$ can be organized as
\begin{equation}
    4^n = \underbrace{2^{n-k}}_{\text{syndromes}} \times \underbrace{4^k}_{\text{logical classes}} \times \underbrace{2^{n-k}}_{\text{equivalent errors}}.
\end{equation}
It is useful to compare this to the situation in the classical setting.
For a classical $[n,k,d]$ code, there are $2^{n-k}$ distinct error cosets (one per syndrome pattern), and each syndrome pattern is consistent with $2^k$ different logical ``classes."
Each ``class" in the classical setting contains only one error pattern, so that every error pattern is uniquely specified by its syndrome and logical class.
This can be summarized as follows:
\begin{equation}
    2^n = \underbrace{2^{n-k}}_{\text{syndromes}} \times \underbrace{2^k}_{\text{logical ``classes"}} \times \underbrace{1}_{\text{no equivalence classically}}.
\end{equation}

\subsubsection{Binary vector representation}\label[appendix]{app:coordinates}

Pauli errors, stabilizer cosets, and logical equivalence classes all admit a natural coordinate description in the setting of stabilizer codes.
Such a description is useful for specifying probability measures on stabilizer cosets, expressing syndrome decoding algorithms, and analyzing the performance of quantum error correction.

This coordinate description is based on the fact that the Pauli group (up to phase) itself has the structure of a vector space over $\mathbb F_2$.
Specifically, if $t$ is a positive integer, then the multiplication on $\overline{\mathcal P}_t$ is group-isomorphic to the addition on $\mathbb F_2^{2t}$.
Explicitly, if $P \in \overline{\mathcal P}_t$ is an arbitrary Pauli string, then we define its \textit{binary sympectic vector}  (and \textit{$X$ and $Z$ components} of the binary vector),
\begin{equation}
    (x_1,\dots,x_t , z_1,\dots,z_t) \in \mathbb F_2^{2t},
\end{equation}
by selecting each pair $(x_i,z_i)$ based on the $i^{\mathrm{th}}$ tensor factor of $P$, via the correspondence provided in~\cref{tab:pauli-coordinates}.
The equivalence between multiplication on $\overline{\mathcal P}_t$ and addition on $\mathbb F_2^{2t}$ is then straightforward to check.
These coordinates are called the binary symplectic representation in the literature.
\begin{table}[htp]
    \centering
    \caption{Identification of Pauli matrices with $\mathbb F_2$-valued binary vector.}
    \label{tab:pauli-coordinates}
    \begin{tabular}{c|c}
         Pauli Matrix & Binary Vector\\
         \hline
         $I$ & $(0,0)$\\
         $X$ & $(1,0)$\\
         $Y$ & $(1,1)$\\
         $Z$ & $(0,1)$
    \end{tabular}
\end{table}

For an $\llbracket n,k,d\rrbracket$ stabilizer code with stabilizer $\mathcal S$, one can conversely specify a Pauli error $E \in \overline{\mathcal P}_n$ by explicitly providing a list of $X$ and $Z$ components for that error.
Importantly, however, Pauli errors with different coordinates can belong to the same stabilizer coset.
This is because the physically distinguishable errors are elements of the quotient space $\overline{\mathcal P}_n/\mathcal S$, \textit{not} simply the Pauli group modulo phases~$\overline{\mathcal P}_n$.

It turns out that a very natural representation of the quotient space $\overline{\mathcal P}_n/\mathcal S$ exists, thereby enabling easy access to defining and analyzing physically distinct errors.
This representation is based on the previously discussed structure of the Pauli group modulo phases, whose physically inequivalent elements can first be classified by their error syndrome, and then by their logical equivalence class.
The key point is that, after fixing the syndrome~$m \in \{\pm1\}^{n-k}$, the $4^k$ different logical equivalence classes are in bijective correspondence with the \textit{logical} Pauli group modulo phases, $\overline{\mathcal P}_k$.

The bijection is not unique, but it can be specified by two pieces of information: (i) a selection of Pauli operators $\{E_m\}\subseteq \overline{\mathcal P}_n$, one for each syndrome $m \in \{\pm1\}^{n-k}$, such that $\sigma(E_m) = m$, and (ii) a representation of logical operators within $\overline{\mathcal P}_n$.
The latter is obtained by specifying physical representatives $\overline X_i, \overline Z_i \in \overline{\mathcal P}_n$ for the logical Pauli basis, with $i\in\{1,\dots,k\}$.
For $(x_{\mathrm L},z_{\mathrm L})\in\mathbb F_2^{2k}$, define the abstract logical Pauli
\begin{equation}
    P(x_{\mathrm L},z_{\mathrm L})
    \equiv \prod_{j=1}^{k}X_j^{(x_{\mathrm L})_j}Z_j^{(z_{\mathrm L})_j}
    \in \overline{\mathcal P}_k,
\end{equation}
and its chosen physical representative
\begin{equation}
    \overline{P}(x_{\mathrm L},z_{\mathrm L})
    \equiv \prod_{j=1}^{k}\overline X_j^{(x_{\mathrm L})_j}\overline Z_j^{(z_{\mathrm L})_j}
    \in \overline{\mathcal P}_n.
\end{equation}
These two pieces of information together allow us to uniquely write every stabilizer coset $[E]_{\mathcal{S}} \in \overline{\mathcal P}_n/\mathcal S$ in the form
\begin{equation}
   [E]_{\mathcal{S}}
    = 
    E_m\overline{P}(x_{\mathrm L},z_{\mathrm L})\mathcal S,
    \qquad m=\sigma([E]_{\mathcal{S}})\in\{\pm1\}^{n-k}.
\end{equation}
Here, we write $\sigma([E]_{\mathcal{S}})$ to denote the syndrome of any representative of $E\mathcal{S}$; this is well defined because all representatives in a stabilizer coset have the same syndrome.
We can then define a $2^{n-k}$-to-$1$ map $\pi : \overline{\mathcal P}_n/\mathcal S \to \overline{\mathcal P}_k$ by
\begin{equation}
    \pi\left( E_m\overline{P}(x_{\mathrm L},z_{\mathrm L})\mathcal S\right) = P(x_{\mathrm L},z_{\mathrm L}).
\end{equation}
It's not difficult to see that this map is a bijection when restricted to stabilizer cosets having a fixed syndrome~$m \in \{\pm1\}^{n-k}$.

Therefore, the equivalence class of physical errors that any $E \in \overline{\mathcal P}_n$ belongs to is exactly specified by $m=\sigma(E)$ and the abstract logical Pauli $\pi(E\mathcal S)=P(x_{\mathrm L},z_{\mathrm L})$.
It follows that this equivalence class can be uniquely represented by a set of syndrome and logical labels,
\begin{equation}\label{eq:binary-syndrome-logical}
    \left(s_1,\dots,s_{n-k},x_{\mathrm L,1},\dots,x_{\mathrm L,k},z_{\mathrm L,1},\dots,z_{\mathrm L,k}\right) \in \mathbb F_2^{n+k},
\end{equation}
where the $s_1,\dots,s_{n-k} \in \mathbb F_2$ are called the \textit{syndrome vector}, related to the syndrome $m = \sigma(E)$ by $m_i = (-1)^{s_i}$, and the $x_{\mathrm L,1},\dots,x_{\mathrm L,k},z_{\mathrm L,1},\dots,z_{\mathrm L,k} \in \mathbb F_2$ are called \textit{logical labels}, defined to be the binary symplectic representation of $\pi(E\mathcal S) \in \overline{\mathcal P}_k$.
It's easy to see that the binary representations of two different Pauli strings $E,E'\in\overline{\mathcal P}_n$ are identical if and only if $E$ and $E'$ are in the same equivalence class of physically distinguishable errors.
Thus we have a well-defined coordinatization of the quotient space $\overline{\mathcal P}_n/\mathcal S$, whose elements are exactly those equivalence classes.
This labeling with the basis choice is explained in~\cite{Poulin:2006lth,Iyer:2013rry}

\subsection{General Syndrome Decoding}

Here, we discuss general decoding strategies that apply to every classical and quantum code.
We begin by defining the general concept of a syndrome decoder for both the classical and quantum case, and then proceed to discuss the two main types of syndrome decoders that are relevant for this work: \textit{minimum-weight} (MW) decoders and \textit{maximum-likelihood} (ML) decoders.

\subsubsection{Classical and Quantum Decoders}

Consider the usual setup for the classical case, where $C = [n,k,d]$ is a classical code, and $c \in C$ is a transmitted codeword.
A classical error channel applies an error $e \in \mathbb F_2^n$ to produce the received word $c' = c + e$.
The question we now ask is how the syndrome $s \equiv \sigma(c') = \sigma(e) \in \mathbb F_2^{n-k}$ can be used to reconstruct the error $e$, and therefore also the original codeword $c$.

This question is answered by a \textit{syndrome decoder} for $C$, which takes a syndrome $s \in \mathbb F_2^{n-k}$ as input, and returns a plausible error pattern from the error coset $\Sigma_{s}$, called the \textit{candidate}.
In mathematical terms, the syndrome decoder is a function $\mathscr D : \mathbb F_2^{n-k} \to \mathbb F_2^n$ which satisfies the property that $\mathscr D(s) \in \Sigma_{s}$ (or equivalently, that $\sigma(\mathscr D(s)) = s$) for every $s \in \mathbb F_2^{n-k}$.
Once a candidate $e' \equiv \mathscr D(s)$ has been determined by the decoder, a ``recovered word" $r$ can be defined by adding this pattern to the received word, $r = e' + c'$.
If the candidate $e'$ matches the actual error $e$ that occurred, then the correct initial codeword is exactly recovered, $r = e + c' = e + (e + c) = c$.
Conversely, if $e' \neq e$, then $r = e' + c'$ will still be a codeword, but \textit{not} the initial codeword transmitted.

Now consider the usual setup for the quantum case, where we have an $\llbracket n,k,d\rrbracket$ stabilizer code, with stabilizer group $\mathcal S$ and code space $\mathcal H_C$.
We subject this code to a quantum error channel, which we assume to be Pauli noise.
This noise applies a Pauli error string $E \in \overline{\mathcal P}_n$, which modifies a transmitted code state $\ket{\psi} \in \mathcal H_C$ to some received state $\ket{\psi'} \equiv E\ket{\psi}$.
By measuring the stabilizer generators on $\ket{\psi'}$, we obtain measurement outcomes $m \equiv \sigma(E)$, which tell us the error syndrome, $\sigma(E) = m$.
The question we now ask is how the syndrome $m \in \{\pm1\}^{n-k}$ can be used to reconstruct the error $E$, and therefore also the original code state $\ket{\psi}$.

The key starting point is the observation that, actually, $E$ \textit{cannot} be determined uniquely, because there are many equivalent Pauli errors that would have had the exact same effect as $E$ on the code state $\ket{\psi}$.
Therefore, the candidate error returned by a syndrome decoder in the quantum setting should really correspond to a \textit{stabilizer coset}, rather than any specific Pauli error pattern.
In mathematical terms, we therefore define a stabilizer syndrome decoder as a function $\mathscr D : \{\pm 1\}^{n-k} \to \overline{\mathcal P}_n / \mathcal S$, where $\mathscr D(m)$ always returns a stabilizer coset $[E]_{\mathcal{S}} \in \overline{\mathcal P}_n/\mathcal S$ such that $\sigma([E]_{\mathcal{S}}) = m$.

Despite this observation that stabilizer syndrome decoders should return stabilizer cosets rather than specific error patterns, it is sometimes useful, as an abuse of notation, to directly write $\mathscr D(m) = E$, where $E \in \overline{\mathcal P}_n$.
In such cases, the hidden assumption is that $\mathscr D(m)$ actually returns the entire stabilizer coset $E\mathcal S$, and $E$ just serves as a representative shorthand for that coset.
This becomes relevant downstream (at the level of implementation on a quantum computer), because applying a recovery requires choosing a single representative recovery operator from the returned stabilizer coset so that it can be enacted with physical quantum gates.

When working with syndrome and logical coordinates, there is another useful way to describe syndrome decoders for stabilizer codes. 
Recall that a stabilizer coset can be specified by the binary tuple in~\cref{eq:binary-syndrome-logical}.
The job of a syndrome decoder is then to provide candidate logical coordinates $x_{\mathrm L},z_{\mathrm L} \in \mathbb F_2^k$ from the syndrome coordinates $s \in \mathbb F_2^{n-k}$.
Therefore, we may also write the coordinate shorthand $\mathscr D(s) = (x_{\mathrm L},z_{\mathrm L})$ to mean that the decoder returns the stabilizer coset with labels $(s,x_{\mathrm L},z_{\mathrm L})$ on the input syndrome $m \in \{\pm1\}^{n-k}$ defined by $m_i=(-1)^{s_i}$.

\subsubsection{Minimum-Weight Decoding}

The representative-level idea underlying MW decoding, both in the classical and quantum setting, is to select an error pattern with the smallest possible weight among all patterns consistent with a given error syndrome.

For a classical code $C = [n,k,d]$ with parity-check matrix $H$, the MW decoder $\mathscr D^{\MW}$ is defined as
\begin{equation}
    \mathscr D^{\MW}(s) \in \argmin_{\substack{e\in\mathbb F_2^n\\ \sigma(e)=s}} \wt(e).
\end{equation}
If there are ties for the error pattern with least weight, the decoder only needs to select one of them to qualify as an MW decoder.

For a quantum $\llbracket n,k,d\rrbracket$ code with stabilizer $\mathcal S$, the MW decoder is defined as
\begin{align}
    \widehat E^{\MW}(m) &\in \argmin_{\substack{E\in\overline{\mathcal P}_n\\ \sigma(E)=m}} \wt(E),
    \\
    \mathscr D^{\MW}(m) &\equiv \widehat E^{\MW}(m)\mathcal S.
\end{align}
This is a representative-level MW rule: it minimizes the weight of an individual Pauli representative and returns its stabilizer coset, rather than optimizing a degeneracy-aware coset weight.
Similarly to the classical case, if there are ties for the Pauli error pattern with least weight, then the MW decoder can select any of them.

\subsubsection{Maximum-Likelihood Decoding}

While the MW decoder is a useful baseline, it is not always the best possible decoder available, in the sense of maximizing the probability of successful recovery.
In the presence of a specific error channel, the most ideal syndrome decoder is called the \textit{maximum-likelihood} (ML) decoder.
For a classical code, the candidate returned by an ML decoder is the most likely error pattern consistent with the observed syndrome, and therefore depends strongly on the distribution of errors generated by the error channel.

For a classical $[n,k,d]$ code with a prescribed parity-check matrix, if a syndrome $s \in \mathbb F_2^{n-k}$ is observed, then the ML decoder $\mathscr D^{\ML}$ returns the most likely error pattern $e \in \mathbb F_2^n$ with syndrome $\sigma(e) = s$.
In mathematical terms, if $\mathbf{e}$ denotes the (random) bit-flip error drawn from the classical error channel, then
\begin{equation}
    \mathscr D^{\ML}(s) \in \operatorname*{arg\,max}_{\substack{e\in\mathbb F_2^n\\\sigma(e)=s}} \Pr_{\mathbf{e}\sim P}\left[\mathbf{e}=e\mid \sigma(\mathbf{e})=s\right],
\end{equation}
where $P$ denotes the distribution of bit-flip errors generated by the error channel from which $\mathbf{e}$ is drawn.
If there are ties for the most likely error pattern, the decoder is still classified as ML as long as it picks one of the equally likely maximum-likelihood options.

A useful special case of classical ML decoding is when the error channel is a binary symmetric channel (BSC) with physical error rate $0 \le p \le 1$.
In this channel, the error pattern $\mathbf{e}$ is generated by sampling every bit i.i.d. from the distribution $\Ber(p)$ that assigns probability $p$ to the outcome $1$ and probability $1-p$ to the outcome $0$.
For $0<p<1$, if $s \in \mathbb F_2^{n-k}$ is a syndrome with positive probability, then a simple consequence of Bayes' law is the relation
\begin{equation}\label{eq:erbayes}
    \Pr\left[\mathbf{e}=e \mid \sigma(\mathbf{e})=s\right] = \frac{\Pr[\mathbf{e}=e]\delta[\sigma(e) = s]}{\Pr[\sigma(\mathbf{e})=s]},
\end{equation}
where the $\delta[\text{condition}]$ evaluates to the boolean value of the condition, i.e., $0$ if the condition is not satisfied, or $1$ if the condition is satisfied.
Note that the numerator can be expressed as
\begin{align}
    \Pr[\mathbf{e}=e] \delta[\sigma(e) = s] &= p^{\wt(e)}(1-p)^{n-\wt(e)} \delta[\sigma(e)=s]\\
    & \propto \left(\frac{p}{1-p}\right)^{\wt(e)} \delta[\sigma(e)=s],
\end{align}
and it provides the only $e$-dependence in \eqref{eq:erbayes}.
Therefore, the ML decoder returns
\begin{equation}
    \mathscr D^{\ML}(s) \in \operatorname*{arg\,max}_{\substack{e\in\mathbb F_2^n \\ \sigma(e)=s}} \left(\frac{p}{1-p}\right)^{\wt(e)}.
\end{equation}
If $p < 1/2$, then this is a strictly decreasing function in $\wt(e)$, and
\begin{equation}
    \mathscr D^{\ML}(s) \in \argmin_{\substack{e\in\mathbb F_2^n \\ \sigma(e)=s}} \wt(e).
\end{equation}
This implies that classical ML decoding on the BSC with error rate $p < 1/2$ is equivalent to classical MW decoding.

For a quantum $\llbracket n,k,d\rrbracket$ code with stabilizer $\mathcal S$, the ML decoder must return the most likely \textit{stabilizer coset} of Pauli error patterns consistent with the observed syndrome.
Write $\mathbf{E}$ for the random Pauli error generated by a given Pauli noise model, and write $\sigma(\mathbf{E})$ for the stabilizer syndrome obtained by the random Paulis.
With these notations, we introduce the following probabilities:
\begin{align}
    \Pr[E\mid m] &\equiv
    \Pr [\mathbf{E}=E\mid \sigma(\mathbf{E})=m]\,,
    \\
    \Pr[[E]_\mathcal{S}\mid m] &\equiv
    \sum_{E'\in [E]_{\mathcal{S}}} \Pr [\mathbf{E} = E' \mid \sigma(\mathbf{E})=m]\,.
    \label{eq:prob-coset}
\end{align}
Then the ML decoder returns
\begin{equation}
    \mathscr D^{\ML}(m) \in \operatorname*{arg\,max}_{\substack{[E]_\mathcal{S}\in\overline{\mathcal P}_n/\mathcal S\\\sigma(E)=m}} 
     \Pr[[E]_\mathcal{S}\mid m] \,.
\end{equation}
That is, the decoder returns the stabilizer coset most likely to be hit by a random Pauli error generated by the noise distribution, among all stabilizer cosets with the correct syndrome.
Importantly, this coset does \textit{not} necessarily contain the most likely individual Pauli string consistent with the observed syndrome.
Additionally, if there are ties, the ML decoder must select one stabilizer coset among all those tied for the highest likelihood.
Therefore, as a shorthand, we may use the convention that equality ``$=$" denotes the same thing as set membership ``$\in$" for $\argmax$ and $\argmin$.
 
Furthermore, using the fact that the coset $[E]_{\mathcal{S}} \in \overline{\mathcal{P}}_n/\mathcal{S}$ with $\sigma(E) = m $ can be expressed as $E_m\overline{P}(x_{\rm L},z_{\rm L})\mathcal{S}$, 
the probability~$\Pr[[E]_\mathcal{S}\mid m]$~\eqref{eq:prob-coset} is specified by the syndrome and logical labels,
\begin{align}
    \Pr [x_{\rm L},z_{\rm L}\mid s]
    &\equiv
    \Pr [[E_m\overline{P}(x_{\rm L},z_{\rm L})]_{\mathcal{S}} \mid m]
    \\
    &= \sum_{S\in\mathcal{S}}
    \Pr [E_m\overline{P}(x_{\rm L},z_{\rm L})S \mid m]\,,
\end{align}
where $m_i=(-1)^{s_i}$.
Then the ML decoding can be written using the logical label as
\begin{equation} \label{eq:joint-MLD}
    \left(x_{\mathrm L}^{*},z_{\mathrm L}^{*}\right)
    = \operatorname*{arg\,max}_{x_{\mathrm L},z_{\mathrm L}\in\mathbb F_2^k}
    \Pr [x_{\rm L}, z_{\rm L} \mid s]
   \,.
\end{equation}

\subsubsection{Marginalized ML Decoding}\label{subsubsec:marginal-MLD}

In a memory experiment, one may use the eigenstate of the $\overline{Z}$ ($\overline{X}$) operator $\ket{0}_{\mathrm L}$ ($\ket{+}_{\mathrm L}$) as an initial state, and observe if the QEC protocol keeps the initial information.
We call it a $Z$ memory ($X$ memory) experiment/task.
For these specific tasks, only the $X$ ($Z$) components of a logical label are required, and thus the other components can be marginalized.
More specifically, instead of using \emph{joint} ML decoding \eqref{eq:joint-MLD}, it suffices to use \emph{marginalized} ML decoding, defined by
\begin{align}
    x_{\mathrm L}^{*}
    = \operatorname*{arg\,max}_{x_{\mathrm L}\in\mathbb F_2^k}
    \sum_{z_{\mathrm L} \in \mathbb{F}_2^k}
    \Pr\left[x_{\mathrm L}, z_{\mathrm L} \mid s\right]\,,
\end{align}
for the $Z$ memory task, and
\begin{align}
    z_{\mathrm L}^{*}
    = \operatorname*{arg\,max}_{z_{\mathrm L}\in\mathbb F_2^k}
    \sum_{x_{\mathrm L} \in \mathbb{F}_2^k}
    \Pr\left[x_{\mathrm L}, z_{\mathrm L} \mid s\right]\,,
\end{align}
for the $X$ memory task.

\subsection{Maximum-Likelihood Decoding for Concatenated Codes}\label{sec:MLD-concatenated}
Maximum-likelihood decoding for the concatenated code means the ordinary joint logical-class optimization defined above.
Although evaluating this posterior is hard in general, Ref.~\cite{Poulin:2006lth} showed that concatenation permits an efficient message-passing evaluation of the same posterior.
We will focus on the special case used in the main text, where $k_\IN = 1$ so that $C_\OUT \circ C_\IN = \llbracket n,k,d\rrbracket$ with $n=n_\OUT n_\IN$, $k = k_\OUT$, and $d = \min(d_\OUT,d_\IN)$.
To keep the discussion as simple as possible, we will assume the case of i.i.d. single-qubit Pauli noise, since that is all we need for the main text.

First, we will introduce the binary vector representation for concatenated codes.
Namely, the concatenated structure $C_\OUT \circ C_\IN$ admits a hierarchical understanding of how the $X$ and $Z$ components of a Pauli error $E \in \mathcal P_n$ relate to the syndrome vector and logical labels, defined in~\cref{app:coordinates}.

\subsubsection{Binary vector representation with concatenated structure}\label[appendix]{app:hierarchical-coordinatization}

Let $E \in \overline{\mathcal P}_n$ be an arbitrary Pauli error, and write $(x,z) \in \mathbb F_2^{2n}$ for the corresponding binary vector of $E$, where $x,z\in\mathbb F_2^n$ respectively denote the $X$ and $Z$ components.
Let $x_\IN^{(b)}, z_\IN^{(b)} \in \mathbb F_2^{n_\IN}$ respectively denote the segments of the $X$ and $Z$ components supported on the $b^{\mathrm{th}}$ block, where $1\le b\le n_\OUT$.
Recall that the error $E$ can be specified by a syndrome vector and logical labels as $(s,x_{\mathrm L},z_{\mathrm L})$, where the bit string $s \in \mathbb F_2^{n-k}$ corresponds to the syndrome vector, and $x_{\mathrm L},z_{\mathrm L} \in \mathbb F_2^k$ respectively correspond to the $X$ and $Z$ components of the label of the logical equivalence class.

For the syndrome vector, write
\begin{equation}
    s = \left(s_\OUT, s_\IN^{(1)},\dots,s_\IN^{(n_\OUT)}\right),
\end{equation}
where $s_\OUT \in \mathbb F_2^{n_\OUT-k_\OUT}$ corresponds to the outer syndrome, and each $s_\IN^{(b)}\in\mathbb F_2^{n_\IN-k_\IN}$ corresponds to the inner syndrome on the $b^{\mathrm{th}}$ block with $1\le b\le n_\OUT$.
The inner syndrome vector can then be understood as
\begin{equation}
    s_\IN^{(b)} = H_\IN z_\IN^{(b)},
\end{equation}
where $H_\IN$ is the parity-check matrix for $C_\IN$.
The outer syndrome can be similarly understood as
\begin{equation}\label{eq:s_OUT}
    s_\OUT = \left(H_\OUT\otimes [1^{n_\IN}]\right) x,
\end{equation}
where $H_\OUT\otimes [1^{n_\IN}]$ is the Kronecker product of $H_\OUT$ and $[1^{n_\IN}]$.
Here, $H_\OUT$ denotes the parity-check matrix for $C_\OUT$, and $[1^{n_\IN}]$ denotes the $1\times n_\IN$ binary matrix containing only $1$s.\footnote{Technically, this assumes that the inner code $C_\IN$ is a repetition code, but we can assume this without loss of generality because every classical $[n_\IN,k_\IN=1,d_\IN]$ code is equivalent to a repetition code.}

For the $x_{\mathrm L}$ and $z_{\mathrm L}$ coordinates of $E$, recall that we must first choose a representation of $\overline{\mathcal P}_k$ within $\overline{\mathcal P}_n$, along with a collection of syndrome representatives $E_m \in \overline{\mathcal P}_n$ with $\sigma(E_m) = m$ for every syndrome $m \in \{\pm 1\}^{n-k}$.
These specifics cannot be provided explicitly without knowledge of the specific codes $C_\IN$ and $C_\OUT$.

However, some specifics relevant to our discussion can be determined from the following observation: the concatenated code $C_\OUT\circ C_\IN$ has the special property that every stabilizer generator is either a purely $X$-type or purely $Z$-type Pauli string.
Stabilizer codes with this defining property are called \textit{CSS codes} \cite{Calderbank:1995dw,Steane:1995vv}, and an important fact about CSS codes is that logical Pauli operators can be chosen so that every logical $X$ (resp. logical $Z$) operator is represented by a purely $X$-type (resp. $Z$-type) physical operator.

A mathematical consequence of the CSS nature of the code, together with the known concatenated structure, is the guaranteed existence of functions
\begin{align}\label{eq:pi}
    \pi^X_\IN : \mathbb F_2^{n_\IN} \to \mathbb F_2^{k_\IN},\\
    \pi^Z_\IN : \mathbb F_2^{n_\IN} \to \mathbb F_2^{k_\IN},\\
    \pi^X_\OUT : \mathbb F_2^{n_\OUT} \to \mathbb F_2^{k_\OUT},\\
    \pi^Z_\OUT : \mathbb F_2^{n_\OUT} \to \mathbb F_2^{k_\OUT},
\end{align}
such that if $E$ has Pauli coordinates $x_1,\dots,x_n$ and $z_1,\dots,z_n$, then its $x_{\mathrm L}$ and $z_{\mathrm L}$ coordinates can be reconstructed as
\begin{align}
    x_{\mathrm L} &= \pi^X_\OUT\left(\pi^X_\IN\left(x^{(1)}_\IN\right),\dots,\pi^X_\IN\left(x^{(n_\OUT)}_\IN\right)\right),\\
    z_{\mathrm L} &= \pi^Z_\OUT\left(\pi^Z_\IN\left(z^{(1)}_\IN\right),\dots,\pi^Z_\IN\left(z^{(n_\OUT)}_\IN\right)\right),
\end{align}
where again $x^{(b)}_\IN$ and $z^{(b)}_\IN$ respectively denote $b^{\mathrm{th}}$ block of $x$ and $z$.

While the choice of functions \eqref{eq:pi} is not unique, they have a useful interpretation in the context of coordinatizing the logical class for a Pauli error, which can be made precise when specific codes $C_\IN$ and $C_\OUT$ are given.
First, separate an arbitrary Pauli error $E \in \mathcal P_n$ into independent actions on each inner block of qubits.
The action on the $b^{\mathrm{th}}$ block has labels $x^{(b)}_\IN$ and $z^{(b)}_\IN$.
Next, for $1\le b\le n_\OUT$, we define 
$(x_\OUT,z_\OUT)$ with their elements given as
$(x_\OUT)_b \equiv \pi^X_\IN(x^{(b)}_\IN)$ and $(z_\OUT)_b\equiv \pi^Z_\IN(z^{(b)}_\IN)$ respectively.
Then, one can interpret $(x_\OUT,z_\OUT)$ as the $X$ and $Z$ components of the ``inner" logical equivalence class for the piece of $E$ supported on the $b^{\mathrm{th}}$ inner block.
Normally, this would require $\pi^X_\IN$ and $\pi^Z_\IN$ to both depend on the full binary Pauli vector for the portion of $E$ supported on the $b^{\mathrm{th}}$ block, $(x^{(b)}_\IN,z^{(b)}_\IN)$; the existence of an interpretation where $\pi^X_\IN$ only depends on the $X$ components and $\pi^Z_\IN$ only depends on the $Z$ components is a direct consequence of the fact that our inner stabilizer generators are purely $X$-type.
Now, the $X$ and $Z$ components on each inner block can themselves be viewed as the binary vector for a single-qubit Pauli operator (since $k_\IN = 1$) associated with that block.
By concatenating these binary vectors over all $n_\OUT$ blocks, we obtain an effective binary vector $(x_\OUT,z_\OUT) \in \mathbb F_2^{2n_\OUT}$ for a Pauli string on $n_\OUT$ qubits.
By construction, \eqref{eq:s_OUT} can be re-expressed as $s_\OUT=H_\OUT x_\OUT$.
Therefore, similarly to the inner case, we can interpret $\pi^X_\OUT(x_\OUT) \in \mathbb F_2^{k_\OUT}$ and $\pi^Z_\OUT(z_\OUT) \in \mathbb F_2^{k_{\OUT}}$ respectively as the $X$ and $Z$ components of an ``outer" logical class corresponding to the Pauli string $\hat{E} \in \overline{\mathcal P}_{n_\OUT}$ whose binary representation is $(x_\OUT,z_\OUT)$.
As before, this interpretation is possible because the outer stabilizer generators are purely $Z$-type.
Finally, by recognizing that the overall $X$ and $Z$ components for the logical class of $E$ are $x_{\mathrm L}, z_{\mathrm L} \in \mathbb F_2^{k}$ and $k = k_\OUT$, it follows that the outer logical classes are in bijective correspondence with the \textit{actual} logical classes for the concatenated code.
The simplest way to associate them is by simply equating the $X$ and $Z$ components together, i.e.,
\begin{align}
    x_{\mathrm L} &\equiv \pi^X_\OUT(x_\OUT),\\
    z_{\mathrm L} &\equiv \pi^Z_\OUT(z_\OUT).
\end{align}
When taken together with the syndrome vector $s = (s_\OUT,s^{(1)}_\IN,\dots,s^{(n_\OUT)}_\IN)$, this concludes the construction of the syndrome and logical labels of $E$.

When the classical codes $C_\IN$ and $C_\OUT$ are specified, this ``hierarchical coordinatization" of physically distinguishable errors as described above can be made precise by providing all of the following information:
\begin{itemize}
    \item The parity-check matrices, $H_\IN$ and $H_\OUT$;
    \item The coordinatization of inner logical equivalence classes by $\pi^X_\IN$ and $\pi^Z_\IN$;
    \item The coordinatization of outer logical equivalence classes by $\pi^X_\OUT$ and $\pi^Z_\OUT$.
\end{itemize}
Importantly, there is a compatibility requirement between these maps---namely, we need to make sure that the resulting $x_{\mathrm L}$ and $z_{\mathrm L}$ do indeed attribute a unique list of coordinates to each logical class (this will not be satisfied by any arbitrary set of maps).

Note that by providing the coordinate maps, we sidestep the need to provide any syndrome representatives $\{E_m\}$, since the conclusion of our preceding discussion is that the inner and outer maps are themselves sufficient to determine the logical class of every Pauli operator.
Therefore, the $\{E_m\}$ are only needed when it comes time to actually implement quantum gates, since a representative of the equivalence class must be chosen at that point.

\subsubsection{Decoding Strategy}\label[appendix]{app:decoding-strategy}
Having introduced the binary coordinates for concatenated codes, we can now describe maximum-likelihood decoding for the concatenated code.

The starting point, as introduced in the case of ML decoding, is a random stabilizer coset, which takes values in the quotient space $\overline{\mathcal P}_n/\mathcal S$ and is induced by a random physical Pauli error.
This random stabilizer coset carries a random syndrome vector and logical labels,
\begin{equation}
(\mathbf{s}_\OUT,\mathbf{s}_\IN^{(1)},\dots,\mathbf{s}_\IN^{(n_\OUT)},\mathbf{x}_{\mathrm L,1},\dots,\mathbf{x}_{\mathrm L,k},\mathbf{z}_{\mathrm L,1},\dots,\mathbf{z}_{\mathrm L,k}).
\end{equation}

In our setting, the measure on these random syndromes and logical labels is induced by a Pauli noise channel, which generates random Pauli errors $\mathbf{E}$ taking values in $\overline{\mathcal P}_n$, with the corresponding random binary symplectic vector,
\begin{equation}
    (\mathbf{x}_1,\dots,\mathbf{x}_n,\mathbf{z}_1,\dots,\mathbf{z}_n).
\end{equation}
As a useful notation, we write $\mathbf{x}_\IN^{(b)}$ and $\mathbf{z}_\IN^{(b)}$ for the $b^{\mathrm{th}}$ block of the random variables $\mathbf{x}$ and $\mathbf{z}$, respectively.

The goal of any syndrome decoder is to use syndrome data to produce a candidate stabilizer coset with that same syndrome.
Therefore, at the level of syndrome and logical labels, it suffices for a decoder to compute candidate $x_{\mathrm L}$ and $z_{\mathrm L}$ coordinates, given just the syndrome vector $s$.

The ordinary joint logical-class ML decoder achieves this by maximizing the conditional probability distribution~\eqref{eq:joint-MLD}.
The likelihood function can be expanded as
\begin{widetext}
\begin{align}\label{eq:joint}
    \Pr\left[x_{\mathrm L},z_{\mathrm L} \mid s\right] \notag
    &=
   \frac{1}{\Pr[\mathbf{s}_\OUT = s_\OUT \mid s_\IN^{(1)},\dots,s_\IN^{(n_\OUT)}]} \notag
\\ &\times
   \sum_{x_\OUT,z_\OUT\in\mathbb F_2^{n_\OUT}} \delta[x_{\mathrm L}=\pi^X_\OUT(x_\OUT)] \delta[z_{\mathrm L}=\pi^Z_\OUT(z_\OUT)]
   \delta[s_\OUT=H_\OUT x_\OUT] \notag
   \\ &
   \times  \prod_{b=1}^{n_\OUT}
   \Pr[(\mathbf{x}_\OUT)_b=(x_\OUT)_b\wedge(\mathbf{z}_\OUT)_b=(z_\OUT)_b\mid s_\IN^{(b)}],
\end{align}
\end{widetext}
where we defined random variables
\begin{align}
    (\mathbf{x}_\OUT)_b &\equiv \pi^X_\IN\left(\mathbf{x}_\IN^{(b)}\right),\\
    (\mathbf{z}_\OUT)_b &\equiv \pi^Z_\IN\left(\mathbf{z}_\IN^{(b)}\right).
\end{align}
The denominator is just a normalization factor that doesn't depend on $x_{\mathrm L}$ or $z_{\mathrm L}$, so it suffices to optimize the numerator for joint logical-class ML decoding.
This expansion uses the concatenated structure but remains an \emph{exact} evaluation of the posterior in \eqref{eq:joint-MLD}: the inner blocks supply the posterior factors, while the outer sum imposes $s_\OUT=H_\OUT x_\OUT$ and the logical label constraints. This efficient evaluation using the concatenated structure was introduced in~\cite{Poulin:2006lth}.

For memory-protection tasks, the relevant objective is instead task-specific marginal ML decoding.
The $Z$ memory task retains $x_{\mathrm L}$ and sums over $z_{\mathrm L}$, whereas the $X$ memory task retains $z_{\mathrm L}$ and sums over $x_{\mathrm L}$, as explained in \cref{subsubsec:marginal-MLD}.
Note that this choice of marginal is determined by the task, not by the concatenated structure.

For the first case, let us imagine that we marginalize over $z_{\mathrm L}$.
By using \eqref{eq:joint}, the marginal can be reduced to the form
\begin{widetext}
\begin{align}
    \Pr\left[x_{\mathrm L}\mid s\right] &= \sum_{z_{\mathrm L}\in\mathbb F_2^{k_\OUT}}  \Pr\left[x_{\mathrm L},z_{\mathrm L}\mid s\right]\\
    & \propto \sum_{z_{\mathrm L}\in\mathbb F_2^{k_\OUT}} \sum_{x_\OUT,z_\OUT\in\mathbb F_2^{n_\OUT}} \delta[x_{\mathrm L}=\pi^X_\OUT(x_\OUT)] \delta[z_{\mathrm L}=\pi^Z_\OUT(z_\OUT)] \times \delta[s_\OUT=H_\OUT x_\OUT] 
    \\&\times 
    \prod_{b=1}^{n_\OUT} \Pr[(\mathbf{x}_\OUT)_b=(x_\OUT)_b\wedge(\mathbf{z}_\OUT)_b=(z_\OUT)_b\mid s_\IN^{(b)}]\\
    & = \sum_{\substack{x_\OUT\in\mathbb F_2^{n_\OUT}\\s_\OUT=H_\OUT x_\OUT\\x_{\mathrm L}=\pi^X_\OUT(x_\OUT)}} \prod_{b=1}^{n_\OUT} \sum_{(z_\OUT)_b\in\mathbb F_2} \Pr[(\mathbf{x}_\OUT)_b=(x_\OUT)_b\wedge(\mathbf{z}_\OUT)_b=(z_\OUT)_b\mid s_\IN^{(b)}]\\
    & = \sum_{\substack{x_\OUT\in\mathbb F_2^{n_\OUT}\\s_\OUT=H_\OUT x_\OUT\\x_{\mathrm L}=\pi^X_\OUT(x_\OUT)}} \prod_{b=1}^{n_\OUT} \Pr[(\mathbf{x}_\OUT)_b=(x_\OUT)_b\mid s_\IN^{(b)}].
\end{align}
\end{widetext}
This turns out to be simple enough that it can often be computed explicitly for i.i.d. single-qubit Pauli noise.
The $X$ coordinate returned by task-specific marginal ML decoding for $Z$ memory is then~\footnote{The $Z$ coordinate is irrelevant for this task.  If a full logical label is desired, it may be supplied by the optional auxiliary completion
\begin{align}
    z_{\mathrm L}^{*} &\equiv \pi^Z_\OUT\left(z_{\OUT}^{*}\right),\\
    z_{\OUT}^{*} &\in \operatorname*{arg\,max}_{z_{\OUT}\in\mathbb F_2^{n_\OUT}} \prod_{b=1}^{n_\OUT} \Pr\left[\mathbf{z}_{\OUT, b} = z_{\OUT, b} \mid s_\IN^{(b)}\right].
\end{align}
For uncorrelated $X/Z$ noise, combining this hard decision of inner code with the syndrome-constrained outer decision above yields a solution of the joint logical-class ML problem~\eqref{eq:joint-MLD}.
}
\begin{equation}
    x_{\mathrm L}^{*} \in \operatorname*{arg\,max}_{x_{\mathrm L}\in\mathbb F_2^{k_\OUT}} \Pr\left[x_{\mathrm L}\mid s\right].
\end{equation}

Now we come to the alternative task, where marginalization over $x_{\mathrm L}$ is performed instead.
The marginal distribution for $z_{\mathrm L}$ is then given by
\begin{widetext}
\begin{align}
    \Pr\left[z_{\mathrm L}\mid s\right] &= \sum_{x_{\mathrm L}\in\mathbb F_2^{k_\OUT}}  \Pr\left[x_{\mathrm L},z_{\mathrm L}\mid s\right]\\
    & \propto \sum_{x_{\mathrm L}\in\mathbb F_2^{k_\OUT}} \sum_{x_\OUT,z_\OUT\in\mathbb F_2^{n_\OUT}} \delta[x_{\mathrm L}=\pi^X_\OUT(x_\OUT)] \delta[z_{\mathrm L}=\pi^Z_\OUT(z_\OUT)] \times \delta[s_\OUT=H_\OUT x_\OUT] \\&\times  \prod_{b=1}^{n_\OUT} \Pr[(\mathbf{x}_\OUT)_b=(x_\OUT)_b\wedge(\mathbf{z}_\OUT)_b=(z_\OUT)_b\mid s_\IN^{(b)}]\\
    & = \sum_{\substack{x_\OUT,z_\OUT\in\mathbb F_2^{n_\OUT}\\s_\OUT=H_\OUT x_\OUT\\z_{\mathrm L}=\pi^Z_\OUT(z_\OUT)}} 
    \prod_{b=1}^{n_\OUT} \Pr[(\mathbf{x}_\OUT)_b=(x_\OUT)_b\wedge(\mathbf{z}_\OUT)_b=(z_\OUT)_b\mid s_\IN^{(b)}]
\end{align}
\end{widetext}
which puts us in a more complicated setup than before, because the sum cannot be simplified further until a specific code and noise model is given.
Nevertheless, this can be computed when a specific i.i.d. single-qubit Pauli noise model is given.
The $Z$ coordinate returned by task-specific marginal ML decoding for $X$ memory is then
\begin{equation}
    z_{\mathrm L}^{*} \in \operatorname*{arg\,max}_{z_{\mathrm L}\in\mathbb F_2^{k_\OUT}} \Pr\left[z_{\mathrm L}\mid s\right].
\end{equation}

\subsection{Explicit Decoders for Gauss's Law Codes}\label[appendix]{app:explicit-decoders}

Now we come to the specific decoders used in our code capacity demonstrations in~\cref{sec:LER-comparison}.
For the depolarizing noise model (used in our $Z$ memory and $X$ memory experiments), we implement marginalized ML decoding, as discussed in~\cref{sec:MLD-concatenated}.
Recall that the binary vector representation for the concatenated code is defined by the following information:
\begin{itemize}
    \item The parity-check matrices, $H_\IN$ and $H_\OUT$;
    \item The coordinatization of inner logical equivalence classes by $\pi^X_\IN$ and $\pi^Z_\IN$;
    \item The coordinatization of outer logical equivalence classes by $\pi^X_\OUT$ and $\pi^Z_\OUT$.
\end{itemize}

The parity-check matrix $H_\OUT$ is defined directly by the parity checks used by classical Gauss's law code $C_\OUT$, as discussed in~\cref{sec:gauss}.
For $H_\IN$, we use the standard parity checks of the $d$-bit repetition code, namely,
\begin{equation}
    H_\IN \equiv
    \begin{pmatrix}
        1 & 1 & 0 & \cdots & 0 & 0\\
        0 & 1 & 1 & \cdots & 0 & 0\\
        \vdots & \vdots & \vdots & \ddots & \vdots & \vdots\\
        0 & 0 & 0 & \cdots & 1 & 1
    \end{pmatrix}.
\end{equation}
We also assume odd $d$, as in~\cref{sec:compare}.

For the inner logical equivalence classes, we use
\begin{align}
    \pi_\IN^X\left(x_\IN^{(b)}\right) &\equiv
    \begin{cases}
        0, & \wt\left(x_\IN^{(b)}\right) \equiv 0\pmod2,\\
        1, & \wt\left(x_\IN^{(b)}\right) \equiv 1\pmod2,
    \end{cases}\\
    \pi_\IN^Z\left(z_\IN^{(b)}\right) &\equiv
    \begin{cases}
        0, & \wt\left(z_\IN^{(b)}\right) < \frac 12 d,\\
        1, & \wt\left(z_\IN^{(b)}\right) > \frac 12 d,
    \end{cases}\label{eq:inner-logical-coordinates}
\end{align}
for every $x_\IN^{(b)},z_\IN^{(b)}\in\mathbb F_2^{n_\IN}$.
Another way to put this is as follows: define the $b^{\mathrm{th}}$ block ``inner" logical bit-flip and phase-flip operators by
\begin{align}
    \overline{X}^{(b)} &\cong X\otimes I^{\otimes (n_{\IN}-1)},\\
    \overline{Z}^{(b)} &\cong Z^{\otimes n_\IN},
\end{align}
where the RHS is assumed to act solely on the $b^{\mathrm{th}}$ block.
Then the inner logical $X$ coordinate simply tells us whether the bit-flips from a Pauli error can be generated by the inner stabilizer generators (which would imply it is \textit{not} a genuine error) or not---indeed, the two possible answers to this question are exactly separated by $\overline{X}^{(b)}$.
For the inner logical $Z$ coordinate of the error, we simply look at whether a majority or minority of the qubits on block $b$ are affected by a phase-flip, which is also exactly separated by the action of $\overline{Z}^{(b)}$.

For $\pi^X_\OUT$ and $\pi^Z_\OUT$, we defer to the link qubits to define the logical class, since the reduced Pauli group restricted to just the link qubit states is in 1-to-1 correspondence with the $4^N$ ``outer" logical classes.
Specifically, for any bit string $x_\OUT\in\mathbb F_2^{n_\OUT}$, we define $\pi^X_\OUT(x_\OUT)\in\mathbb F_2^{k_\OUT}$ to be the bit string consisting of just the link bits of $x_\OUT$.
For the $Z$ coordinates, let $G_\OUT\in\mathbb F_2^{k_\OUT\times n_\OUT}$ be a full-rank generator matrix for $C_\OUT$, and define
\begin{equation}
    \pi^Z_\OUT(z_\OUT)\equiv G_\OUT z_\OUT.
\end{equation}

Based on these constructions, it's not difficult to see that for any Pauli string with $X$ and $Z$ coordinates $(x,z) \in \mathbb F_2^{2n}$, the logical coordinates
\begin{align}
    x_{\mathrm L} &\equiv \pi^X_\OUT\left(\pi^X_\IN\left(x^{(1)}_\IN\right),\dots,\pi^X_\IN\left(x^{(n_\OUT)}_\IN\right)\right),\\
    z_{\mathrm L} &\equiv \pi^Z_\OUT\left(\pi^Z_\IN\left(z^{(1)}_\IN\right),\dots,\pi^Z_\IN\left(z^{(n_\OUT)}_\IN\right)\right),
\end{align}
exactly track the space of all possible stabilizer cosets, when equipped with arbitrary syndrome representatives $\{E_m\}$.
We thus define and analyze our decoders in the physical coordinate system defined by these logical coordinates (in addition to the syndrome coordinates).

All error models we consider are i.i.d. single-qubit Pauli noise, which takes the form $\mathcal E = \prod_{i=1}^n \mathcal E_i$, with
\begin{equation}
    \mathcal E_i(\rho) = (1-p_X-p_Y-p_Z)\rho + p_X X\rho X + p_Y Y\rho Y + p_Z Z\rho Z,
\end{equation}
where $0\le p_X,p_Y,p_Z\le 1$ are the physical rates of $X$, $Y$, and $Z$ errors, respectively, and $p_X + p_Y + p_Z \le 1$.
If $p_X$, $p_Y$, and $p_Z$ are collectively too large (e.g., with sum greater than $1/2$), then the most ideal decoder would always be forced to assume the unnatural hypothesis that some error has occurred, even in the absence of any nontrivial syndrome.
Therefore, in the following discussion, we will make the additional assumption that the physical error rates are constrained to the regime where errors at each step of decoding are less likely than no error.

\subsubsection{Depolarizing Noise: \texorpdfstring{$Z$}{Z} Memory Experiment}\label[appendix]{app:Z-memory-decoder}

For the case of depolarizing noise, we take $p_X = p_Y = p_Z = \frac13 p$, where $p$ is the overall physical error rate.
We assume $p$ is below the cutoff needed so that the default assumption on each qubit is that an error is less likely to occur than it is to \textit{not} occur.
As we will see, $p < 1/2$ is sufficient.

Importantly, in the case of $Z$ memory, the $Z$ coordinates of the logical class have no effect on the recovery, because the original state is already an eigenstate of every logical $Z$ operator.
Therefore, it suffices to provide the decoding strategy just for the $X$ coordinates of the logical class.
From~\cref{app:decoding-strategy}, recall that we need to find $x_{\mathrm L}$ that maximizes
\begin{equation}
    \Pr\left[x_{\mathrm L}\mid s\right] \propto \sum_{\substack{x_{\OUT}\in\mathbb F_2^{n_\OUT}\\s_\OUT=H_\OUT x_\OUT\\x_{\mathrm L}=\pi^X_\OUT(x_\OUT)}} \prod_{b=1}^{n_\OUT} \Pr[(\mathbf{x}_\OUT)_b=(x_\OUT)_b\mid s_\IN^{(b)}].
\end{equation}
In the case of Gauss's law codes, this simplifies substantially.
Note that the constraint $x_{\mathrm L} = \pi^X_\OUT(x_\OUT)$ fixes the link bits of $x_\OUT$, and the constraint $s_\OUT = H_\OUT x_\OUT$ subsequently fixes the site bits.
Therefore, only one term contributes in the sum, and the decoder objective simplifies to
\begin{align}
    x_{\mathrm L}^* &= \pi^X_\OUT\left(x_\OUT^*\right),\\
    x_\OUT^* &\equiv \argmax_{\substack{x_\OUT \in \mathbb F_2^{n_\OUT}\\s_\OUT = H_\OUT x_\OUT}} \prod_{b=1}^{n_\OUT} \Pr[(\mathbf{x}_\OUT)_b=(x_\OUT)_b\mid s_\IN^{(b)}].
\end{align}

This can be computed explicitly by the following observation:
\begin{equation}\label{eq:conditional-equal-likelihood}
    \Pr\left[(\mathbf{x}_\OUT)_b = 0 \mid \mathbf{s}_\IN^{(b)} \neq 0\right] = \Pr\left[(\mathbf{x}_\OUT)_b = 1 \mid \mathbf{s}_\IN^{(b)} \neq 0\right] = \frac 12.
\end{equation}
That is, if the inner syndrome $s_\IN^{(b)}$ on a given block is nontrivial, then the $X$ coordinate $(\mathbf{x}_\OUT)_b \equiv \pi^X_\IN\left(\mathbf{x}_\IN^{(b)}\right)$ for the logical class on that inner block becomes a fair coin toss.
The reason for this is that a nontrivial inner block syndrome $s_\IN^{(b)} = H_\IN z_\IN^{(b)} \neq 0$ implies that at least one of the $Z$ coordinates in the vector $z_\IN^{(b)}$ is nonzero.
If, say, the $j^{\mathrm{th}}$ coordinate is nonzero, then the Pauli substring acting on the $b^{\mathrm{th}}$ block \textit{cannot} have its $j^{\mathrm{th}}$ tensor factor as a pure bit-flip, $X_j$.
Instead, this factor must either be $Y_j$ or $Z_j$, both of which have exactly the same likelihood, since their difference is invisible to the inner block syndrome vector $s_\IN^{(b)}$.
But the two options $Y_j$ and $Z_j$ also correspond to the two different inner logical $X$ coordinates, because they contribute different parities of bit-flip weights, thereby implying \eqref{eq:conditional-equal-likelihood}.

To compute the probabilities conditioned on the trivial inner block syndrome,
\begin{align}
    \Pr\left[(\mathbf{x}_\OUT)_b = 0 \mid \mathbf{s}_\IN^{(b)} = 0\right],\\
    \Pr\left[(\mathbf{x}_\OUT)_b = 1 \mid \mathbf{s}_\IN^{(b)} = 0\right],
\end{align}
we can use the relation
\begin{align}
    \Pr\left[(\mathbf{x}_\OUT)_b = 1\right] = & \Pr\left[(\mathbf{x}_\OUT)_b = 1 \mid \mathbf{s}_\IN^{(b)} = 0\right] \times \Pr\left[\mathbf{s}_\IN^{(b)} = 0\right]\\
    & + \Pr\left[(\mathbf{x}_\OUT)_b = 1 \mid \mathbf{s}_\IN^{(b)} \neq 0\right] \times \Pr\left[\mathbf{s}_\IN^{(b)} \neq 0\right].
\end{align}
The LHS is computing the probability that the random bit string $\mathbf{x}_\IN^{(b)}$ has odd Hamming weight.
Under depolarizing noise, each $X$ coordinate is independently nonzero with probability $p_X + p_Y = \frac23p$, because both $X$ and $Y$ errors have nonzero $X$ coordinate.
Therefore, this is precisely the probability that a $d$-bit error generated by the BSC with physical bit-flip rate $p_{\BF} \equiv \frac 23 p$ has odd weight.
This is a well-known elementary combinatorial problem with answer
\begin{equation}
    \Pr\left[(\mathbf{x}_\OUT)_b = 1\right] = \frac 12\left[1 - f(p_{\BF})^d\right],
\end{equation}
where $f(x) \equiv 1 - 2x$.
Meanwhile, the inner syndrome is trivial if and only if all of the $Z$ coordinates on the block are equal.
Since each $Z$ coordinate of the error is independently nonzero with probability $p_{\PF} \equiv p_Y + p_Z = \frac 23p$, we have
\begin{equation}
    \Pr\left[\mathbf{s}_\IN^{(b)}=0\right] = g(p_{\PF}),
\end{equation}
where $g(x) \equiv (1-x)^d + x^d$.
Using~\eqref{eq:conditional-equal-likelihood} in the preceding relation and solving for the remaining conditional probability gives
\begin{widetext}
\begin{align}
    \Pr\left[(\mathbf{x}_\OUT)_b=1\mid\mathbf{s}_\IN^{(b)}=0\right] &= \frac{\Pr[(\mathbf{x}_\OUT)_b=1] - \frac12\Pr[\mathbf{s}_\IN^{(b)}\neq0]}{\Pr[\mathbf{s}_\IN^{(b)}=0]}\\
    &= \frac12\left[1- \frac{f(p_{\BF})^d}{g(p_{\PF})}\right].
\end{align}
\end{widetext}
Similarly,
\begin{equation}
    \Pr\left[(\mathbf{x}_\OUT)_b=0\mid\mathbf{s}_\IN^{(b)}=0\right]=\frac12\left[1+\frac{f(p_{\BF})^d}{g(p_{\PF})}\right].
\end{equation}
Thus, conditioned on a trivial inner syndrome, the inner logical $X$ coordinate $0$ is more likely than the inner logical $X$ coordinate $1$.
Conditioned on a nontrivial inner syndrome, the two coordinates are equally likely.

Now define $w_0(s_\IN)$ to be the number of blocks $b$ such that $s_\IN^{(b)} = 0$, and define $w_0(x_\OUT,s_\IN)$ to be the number of blocks $b$ such that $s_\IN^{(b)} = 0$ and $(x_\OUT)_b = 1$.
Then it follows that
\begin{widetext}
\begin{align}
    x_\OUT^* &= \argmax_{\substack{x_\OUT \in \mathbb F_2^{n_\OUT}\\s_\OUT = H_\OUT x_\OUT}} \frac 1{2^{n_\OUT}} \left[1- \frac{f(p_{\BF})^d}{g(p_{\PF})}\right]^{w_0(x_\OUT,s_\IN)} \left[1+\frac{f(p_{\BF})^d}{g(p_{\PF})}\right]^{w_0(s_\IN) - w_0(x_\OUT,s_\IN)}\\
    & = \argmax_{\substack{x_\OUT \in \mathbb F_2^{n_\OUT}\\s_\OUT = H_\OUT x_\OUT}} \left(\frac{1- f(p_{\BF})^d / g(p_{\PF})}{1+f(p_{\BF})^d / g(p_{\PF})}\right)^{w_0(x_\OUT,s_\IN)}\\
    & = \argmin_{\substack{x_\OUT \in \mathbb F_2^{n_\OUT}\\s_\OUT = H_\OUT x_\OUT}} w_0(x_\OUT,s_\IN),
\end{align}
\end{widetext}
where we made the assumption that $f(p_{\BF})^d / g(p_{\PF}) > 0$, which holds as long as $p < 1/2$.
In other words, an algorithm for the ML decoder with the concatenation structure is obtained by searching for the $x_\OUT \in \mathbb F_2^{n_\OUT}$ that minimizes the number of bits turned on \textit{at the block indices where a trivial inner syndrome is observed}, conditioned on having the correct outer syndrome.
This can be thought of as a modification to the classical MW decoder, which would have simply searched for $x_\OUT \in \mathbb F_2^{n_\OUT}$ to minimize the number of bits turned on \textit{at all}, conditioned on having the correct syndrome.

\subsubsection{Depolarizing Noise: \texorpdfstring{$X$}{X} Memory Experiment}\label[appendix]{app:X-memory-decoder}

In the case of $X$ memory, it is now $x_{\mathrm L}$ which does not need to be considered.
Apart from this change, we proceed by similar logic to the previous case, and continue using the same notation.
We need to find $z_{\mathrm L}$ that maximizes the marginal likelihood
\begin{widetext}
\begin{equation}
    \Pr\left[z_{\mathrm L}\mid s\right] \propto \sum_{\substack{x_\OUT,z_\OUT\in\mathbb F_2^{n_\OUT}\\s_\OUT=H_\OUT x_\OUT\\z_{\mathrm L}=\pi^Z_\OUT(z_\OUT)}} \prod_{b=1}^{n_\OUT} \Pr[(\mathbf{x}_\OUT)_b=(x_\OUT)_b\wedge(\mathbf{z}_\OUT)_b=(z_\OUT)_b\mid s_\IN^{(b)}].
\end{equation}
\end{widetext}
In fact, as we will show, it turns out that
\begin{equation}
    z_{\mathrm L}^* \equiv \argmax_{z_{\mathrm L}\in\mathbb F_2^{k}} \Pr\left[z_{\mathrm L}\mid s\right] = 0,
\end{equation}
holds exactly for $p < 1/2$.
In light of \eqref{eq:inner-logical-coordinates}, this implies that the ML decoding with the concatenated structure can be accomplished by simply using an MW decoder for the classical $d$-bit repetition code to correct phase-flip errors independently on each inner block.

The first observation to this end is that if we define
\begin{equation}
    q_b\left(t';t\right) \equiv \Pr\left[(\mathbf{z}_\OUT)_b = t' \mid (\mathbf{x}_\OUT)_b = t\wedge\mathbf{s}_\IN^{(b)} = s_\IN^{(b)}\right],
\end{equation}
then
\begin{equation}\label{eq:rel-cond-block-prob}
    q_b\left(0;t\right) > q_b\left(1;t\right),
\end{equation}
for all choices of $t \in \mathbb F_2$ and $s_\IN^{(b)} \in \mathbb F_2^{n_\IN-k_\IN}$.
Intuitively, this makes sense: $(\mathbf{z}_\OUT)_b = 0$ corresponds to a Pauli error with low weight in the $Z$ coordinates, which is more likely to occur than an error with high weight in the $Z$ coordinates.
The complication is that this holds true even after we condition on knowledge of the inner syndrome $\mathbf{s}_\IN^{(b)}$ \textit{and} the inner logical $X$ coordinate $(\mathbf{x}_\OUT)_b$, irrespective of their values.
Despite the complication, this can easily be verified (by simply checking all possible cases).

Now write
\begin{align}
    R(x_\OUT) &\equiv \prod_{b=1}^{n_\OUT} r_b((x_\OUT)_b),\\
    r_b(t) &\equiv \Pr\left[(\mathbf{x}_\OUT)_b=t\mid s_\IN^{(b)}\right],
\end{align}
so that
\begin{widetext}
\begin{align}
    \Pr\left[z_{\mathrm L}\mid s\right] = \sum_{\substack{x_\OUT,z_\OUT\in\mathbb F_2^{n_\OUT}\\s_\OUT=H_\OUT x_\OUT\\z_{\mathrm L}=\pi^Z_\OUT(z_\OUT)}} R(x_\OUT)\prod_{b=1}^{n_\OUT} q_b(1;(x_\OUT)_b)^{(z_\OUT)_b} q_b(0;(x_\OUT)_b)^{1-(z_\OUT)_b}.
\end{align}
\end{widetext}
Note that
\begin{widetext}
\begin{equation}
    \sum_{\substack{z_\OUT\in\mathbb F_2^{n_\OUT}\\z_{\mathrm L}=\pi^Z_\OUT(z_\OUT)}} \prod_{b=1}^{n_\OUT} q_b(1;(x_\OUT)_b)^{(z_\OUT)_b} q_b(0;(x_\OUT)_b)^{1-(z_\OUT)_b} = \Pr\left[\mathbf{z}_{\mathrm L}=z_{\mathrm L}\mid \mathbf{x}_\OUT=x_\OUT\wedge \mathbf{s}_\IN = s_\IN\right],
\end{equation}
\end{widetext}
which we simply denote as $p(x_{\mathrm L} \mid x_\OUT,s_\IN)$ for brevity.
Therefore,
\begin{equation}
    \Pr\left[z_{\mathrm L}\mid s\right] \propto \sum_{\substack{x_\OUT\in\mathbb F_2^{n_\OUT}\\s_\OUT=H_\OUT x_\OUT}} R(x_\OUT) p(z_{\mathrm L} \mid x_\OUT,s_\IN).
\end{equation}
We will show that
\begin{equation}
    p(0 \mid x_\OUT,s_\IN) \ge p(z_{\mathrm L} \mid x_\OUT,s_\IN)
\end{equation}
for every choice of $x_\OUT$ and $s_\IN$.
Since every $R(x_\OUT)$ is clearly non-negative, this implies that $\Pr[0\mid s] \ge \Pr[z_{\mathrm L}\mid s]$ for every $z_{\mathrm L}$, establishing the desired conclusion that $z_{\mathrm L}^* = 0$ suffices for the ML decoding with the concatenated structure.

To proceed further, let us define $C_\OUT^{\perp}$ to be the dual code of $C_\OUT$.
Then $C_\OUT^{\perp}$ is also the row span of $H_\OUT$.
More importantly, since $\mathbf{z}_{\mathrm L} = \pi_\OUT^{Z}(\mathbf{z}_\OUT) = G_\OUT \mathbf{z}_\OUT$, it follows that inequivalent samples of $\mathbf{z}_{\mathrm L}$ exactly correspond to samples of $\mathbf{z}_\OUT$ that are drawn from different $C_\OUT^{\perp}$-cosets.
At an intuitive level, all this is saying is that two $Z$-type Pauli strings are in the same logical equivalence class if their quotient is in the group of outer stabilizer generators.
Mathematically, the statement is that for any $z_{\mathrm L} \in \mathbb F_2^{k_\OUT}$, one can assign $v_{z_{\mathrm L}} \in \mathbb F_2^{n_\OUT}$ such that
\begin{equation}
    \pi^Z_\OUT\left(z_\OUT\right) = z_{\mathrm L} \iff z_\OUT\in v_{z_{\mathrm L}} + C_\OUT^{\perp},
\end{equation}
where $v_{z_{\mathrm L}} + C_\OUT^{\perp}$ denotes the coset of $C_\OUT^{\perp}$ given by translating with $v_{z_{\mathrm L}}$.
Since every $z_{\mathrm L}$ now uniquely corresponds to some coset defined by $v_{z_{\mathrm L}}$, we are thus reduced to an optimization problem over these cosets.

To simplify the notation, fix $x_\OUT \in \mathbb F_2^{n_\OUT}$ and let $\hat{\mathbf{Z}}$ denote the random variable taking values in $\mathbb F_2^{n_\OUT}$ with bits $\hat{\mathbf{Z}}_b$ drawn i.i.d. from the distribution
\begin{equation}
    \Pr\left[\hat{\mathbf{Z}}_b = \hat{Z}_b\right] = q_b(\hat{Z}_b;(x_\OUT)_b),
\end{equation}
for $1\le b\le n_\OUT$.
This immediately implies
\begin{equation}
    p\left(z_{\mathrm L}\mid x_\OUT,s_\IN\right) = \Pr\left[\hat{\mathbf{Z}} \in v_{z_{\mathrm L}} + C_\OUT^{\perp}\right].
\end{equation}
As mentioned, the question reduces to the relative likelihoods for the cosets in which $\hat{\mathbf{Z}}$ can live.
We will show that the highest probability mass resides in the zero coset, i.e., $C_\OUT^{\perp}$ itself.

For this, let $v \in \mathbb F_2^{n_\OUT}$ be arbitrary, and define the indicator function of the desired event,
\begin{equation}
    \IND_v\left(\hat{Z}\right) \equiv
    \begin{cases}
        1, & \hat{Z} \in v + C_\OUT^{\perp},\\
        0, & \hat{Z} \notin v + C_\OUT^{\perp}.
    \end{cases}
\end{equation}
We then employ the commonly used expansion of this indicator function,
\begin{equation}
    \IND_v\left(\hat{Z}\right) = \frac 1{|C_\OUT|} \sum_{u\in C_\OUT} (-1)^{u\cdot\left(v+\hat{Z}\right)},
\end{equation}
where we technically rely on the fact that the dual of the dual of $C_\OUT$ is again $C_\OUT$.
Then it follows
\begin{align}
    \Pr\left[\hat{\mathbf{Z}} \in v + C_\OUT^{\perp}\right] &= \mathbb E\left[\IND_v\left(\hat{\mathbf{Z}}\right)\right]\\
    & = \frac1{|C_\OUT|}\sum_{u\in C_\OUT} (-1)^{u\cdot v} \mathbb E\left[(-1)^{u\cdot\hat{\mathbf{Z}}}\right]\\
    & = \frac1{|C_\OUT|}\sum_{u\in C_\OUT} (-1)^{u\cdot v} \prod_{b=1}^{n_\OUT}\mathbb E\left[(-1)^{u_b\hat{\mathbf{Z}}_b}\right],
\end{align}
where we used independence of the bits of $\hat{\mathbf{Z}}$.
Conditioned on the value of the bit $u_b$, it's easy to see that
\begin{equation}
    \mathbb E\left[(-1)^{u_b\hat{\mathbf{Z}}_b}\right] =
    \begin{cases}
        1, & u_b = 0,\\
        q_b(0;(x_\OUT)_b) - q_b(1;(x_\OUT)_b), & u_b = 1.
    \end{cases}
\end{equation}

Now we come to the critical point of the argument.
Recall that there was a slight bias towards lower weight $Z$ coordinates on physical Pauli errors, even when conditioned on arbitrary observation of the syndrome and $X$ coordinates.
This culminated in the mathematical result \eqref{eq:rel-cond-block-prob}, which holds irrespective of the values of $(x_\OUT)_b$ and the measured inner syndrome.
But this implies that the above expectation value $\mathbb E[(-1)^{u_b\hat{\mathbf{Z}}_b}]$ is \textit{positive} (in fact, we only need it to be non-negative for the following argument to carry through).
Therefore,
\begin{widetext}
\begin{equation}
    \frac1{|C_\OUT|}\sum_{u\in C_\OUT} (-1)^{u\cdot v} \prod_{b=1}^{n_\OUT}\mathbb E\left[(-1)^{u_b\hat{\mathbf{Z}}_b}\right] \le \frac1{|C_\OUT|}\sum_{u\in C_\OUT} \prod_{b=1}^{n_\OUT}\mathbb E\left[(-1)^{u_b\hat{\mathbf{Z}}_b}\right],
\end{equation}
\end{widetext}
since the RHS just directly adds up all the positive contributions where the LHS might attach a minus sign.
But the RHS is precisely the $v=0$ case, i.e., probability that $\hat{\mathbf{Z}}$ lives in the zero coset of $C_\OUT^{\perp}$.
This establishes the desired result.

\subsubsection{Uncorrelated X/Z Noise: Arbitrary Memory}\label[appendix]{app:arbitrary-memory-decoder}

In the case of uncorrelated X/Z noise, there exist parameters $p'_X$ and $p'_Z$ with $0 \le p'_X, p'_Z \le 1$ such that\footnote{In the main text, we assume $p'_X = p'_Z = p$. Here, we consider the more general case.}
\begin{align}
    p_X &= p_X'(1-p_Z'),\\
    p_Y &= p_X'p_Z',\\
    p_Z &= p_Z'(1-p_X').
\end{align}
This implies that the single-qubit channel $\mathcal E_i$ can be factored as $\mathcal E_i = \mathcal E^X_i \mathcal E^Z_i$, where
\begin{align}
    \mathcal E^X_i(\rho) &= (1-p_X')\rho + p_X' X \rho X,\\
    \mathcal E^Z_i(\rho) &= (1-p_Z')\rho + p_Z' Z \rho Z,
\end{align}
and therefore that $X$ and $Z$ errors are applied \textit{independently} (i.e., in an uncorrelated fashion) for each qubit.

This uncorrelated nature makes ML decoding relatively straightforward.
The point is simply that the joint law factorizes,
\begin{equation}
    \Pr\left[x_{\mathrm L},z_{\mathrm L}\mid s\right] = \Pr\left[x_{\mathrm L}\mid s\right]\Pr\left[z_{\mathrm L}\mid s\right],
\end{equation}
and each factor is separately equivalent to the corresponding marginal distribution in its own right.
In fact, the absence of correlations implies further that
\begin{align}
    \Pr\left[x_{\mathrm L}\mid s\right] &= \Pr\left[x_{\mathrm L} \mid \mathbf{s}_\OUT = s_\OUT\right],\\
    \Pr\left[z_{\mathrm L}\mid s\right] &= \Pr\left[z_{\mathrm L} \mid \mathbf{s}_\IN = s_\IN\right].
\end{align}

For $x_{\mathrm L}$, we have
\begin{widetext}
\begin{align}
    \Pr\left[(\mathbf{x}_\OUT)_b=(x_\OUT)_b\mid s_\IN^{(b)}\right] &= \Pr\left[(\mathbf{x}_\OUT)_b=(x_\OUT)_b\right]\\
    & = \frac12 \times
    \begin{cases}
        1 + f(p_\BF)^d, & (x_\OUT)_b = 0,\\
        1 - f(p_\BF)^d, & (x_\OUT)_b = 1,
    \end{cases}
\end{align}
\end{widetext}
where $p_\BF \equiv p_X + p_Y = p_X'$ is the physical bit-flip error rate, and $f(x) = 1-2x$.
This follows from the elementary combinatorial result for the distribution of the parity of the weight of a random $d$-bit error string generated from the BSC with physical error rate $p_\BF$.
Now it follows, using similar logic to~\cref{app:Z-memory-decoder}, that
\begin{align}
    x_{\mathrm L}^* &= \argmax_{x_{\mathrm L} \in \mathbb F_2^{k_\OUT}} \sum_{\substack{x_\OUT\in\mathbb F_2^{n_\OUT}\\s_\OUT=H_\OUT x_\OUT\\x_{\mathrm L}=\pi^X_\OUT(x_\OUT)}} \prod_{b=1}^{n_\OUT} \Pr[(\mathbf{x}_\OUT)_b=(x_\OUT)_b\mid s_\IN^{(b)}]\\
    & = \pi^X_\OUT\left(x_\OUT^*\right),
\end{align}
where
\begin{widetext}
\begin{align}
    x_\OUT^* &\equiv \argmax_{\substack{x_\OUT\in\mathbb F_2^{n_\OUT}\\ s_\OUT = H_\OUT x_\OUT}} \frac1{2^{n_\OUT}} \left[1 - f(p_\BF)^d\right]^{\wt(x_\OUT)} \left[1 + f(p_\BF)^d\right]^{n_\OUT - \wt(x_\OUT)}\\
    & = \argmin_{\substack{x_\OUT\in\mathbb F_2^{n_\OUT}\\ s_\OUT = H_\OUT x_\OUT}} \wt(x_\OUT).
\end{align}
\end{widetext}
In other words, we have an \textit{exact} MW decoder for the logical $X$ coordinates, provided that the bit-flip rate $p_\BF < 1/2$.

For $z_{\mathrm L}$, the marginal simplifies as
\begin{equation}
    \Pr\left[z_{\mathrm L} \mid s\right] \propto \sum_{\substack{z_\OUT\in\mathbb F_2^{n_\OUT}\\z_{\mathrm L}=\pi^Z_\OUT(z_\OUT)}} \prod_{b=1}^{n_\OUT} \Pr[(\mathbf{z}_\OUT)_b=(z_\OUT)_b\mid s_\IN^{(b)}].
\end{equation}
Each factor counts the probability that a BSC with physical error rate $p_\PF \equiv p_Y + p_Z = p_Z'$ generates a $d$-bit error string with weight less than $d/2$ (if $(z_\OUT)_b = 0$) or more than $d/2$ (if $(z_\OUT)_b = 1$).
In analogy with~\cref{app:X-memory-decoder}, we define
\begin{equation}
    q_b(t') \equiv \Pr[(\mathbf{z}_\OUT)_b=t'\mid s_\IN^{(b)}],
\end{equation}
with the explicit form
\begin{align}
    q_b(0) &= \sum_{w=0}^{(d-1)/2} \binom{d}{w} p_\PF^w (1-p_\PF)^{d-w},\\
    q_b(1) &= 1 - q_b(0),
\end{align}
which arises from summing the likelihood of every allowed weight of the $Z$ coordinate at block $b$.
It's not too hard to see that $q_b(0) > q_b(1)$, provided that $p_\PF < 1/2$, which puts us in the same position as in~\cref{app:X-memory-decoder}.
We will use this fact to show that $z_{\mathrm L}^* = 0$, which implies that we have MW decoding for phase-flip errors, independently on every block.

Following the argument from~\cref{app:X-memory-decoder} closely, we define the random variable $\hat{\mathbf{Z}}$ taking values in $\mathbb F_2^{n_\OUT}$ with i.i.d. bits drawn from the distribution
\begin{equation}
    \Pr\left[\hat{\mathbf{Z}}_b = \hat{Z}_b\right] = q_b(\hat{Z}_b).
\end{equation}
Then it again holds that
\begin{equation}
    \Pr\left[z_{\mathrm L}\mid s\right] \propto \Pr\left[\hat{\mathbf{Z}} \in v_{z_{\mathrm L}} + C_\OUT^{\perp}\right],
\end{equation}
where each $v_{z_{\mathrm L}} \in \mathbb F_2^{n_\OUT}$ serves as a coset identifier corresponding to $z_{\mathrm L}$.
The exact same argument from~\cref{app:X-memory-decoder} goes through, and we find
\begin{equation}
    \Pr\left[\hat{\mathbf{Z}} \in v_{z_{\mathrm L}} + C_\OUT^{\perp}\right] \le \Pr\left[\hat{\mathbf{Z}} \in C_\OUT^{\perp}\right],
\end{equation}
establishing that $z_{\mathrm L} = 0$.
\section{Exact Calculation for Logical Error Rate}\label[appendix]{app:exact-probability}

Now we come to the exact calculation of the logical error rate (LER) used in~\cref{sec:analytical}, for an optimal $1$D Gauss's law code $C_\OUT \circ C_\IN$ where $C_\OUT = [n_\OUT,k_\OUT,d]$ is a classical Gauss's law code for bit-flips and $C_\IN = [d,1,d]$ is a classical repetition code for phase-flips.
We focus on the case of uncorrelated X/Z noise, and use the strict version of the LER \eqref{eq:strict-LER}.
Additionally, to reproduce the results in~\cref{fig:uncorrelated-XZ}, it suffices to assume $d$ is odd.
This calculation is essentially a continuation from~\cref{app:arbitrary-memory-decoder}, once we have selected the ML decoder based on marginal likelihoods.
Thus, we adopt the full notation conventions introduced in~\cref{app:syndrome-decoding}.

The strict LER is given by the probability that the logical coordinates $(\mathbf{x}_{\mathrm L}^*,\mathbf{z}_{\mathrm L}^*)$ determined from the ML decoding \textit{do not exactly match} the logical coordinates $(\mathbf{x}_{\mathrm L},\mathbf{z}_{\mathrm L})$ drawn randomly from the distribution induced by the random Pauli error $\mathbf{E}$.
Note that we embolden $\mathbf{x}_{\mathrm L}^*$ and $\mathbf{z}_{\mathrm L}^*$ because they are now themselves random variables, since the output of the decoder is a function of the random Pauli error.
In the uncorrelated X/Z channel, the $X$ and $Z$ components of $\mathbf{E}$ are independent, and therefore the events $\mathbf{x}_{\mathrm L}^* = \mathbf{x}_{\mathrm L}$ and $\mathbf{z}_{\mathrm L}^* = \mathbf{z}_{\mathrm L}$ are independent.
Therefore, the strict LER can be expressed as
\begin{equation}
    \LER = 1 - p_\OUT \cdot p_\IN,
\end{equation}
where
\begin{align}
    p_\OUT &\equiv \Pr\left[\mathbf{x}_{\mathrm L}^* = \mathbf{x}_{\mathrm L}\right],\\
    p_\IN &\equiv \Pr\left[\mathbf{z}_{\mathrm L}^* = \mathbf{z}_{\mathrm L}\right].
\end{align}

Let's start with the case of inner decoding.
Recall that for $p_\PF < 1/2$, the inner ML decoding step is exactly equivalent to MW decoding on the BSC with physical error rate $p_\PF$.
Define $P_{C_\IN}(p)$ to be precisely the probability that an MW decoder succeeds for the classical code $C_\IN$ (i.e., the $d$-bit repetition code) on the BSC with physical error rate $p$.
Then we have, for instance,
\begin{equation}
    P_{C_\IN}(p_\PF) \equiv \Pr\left[(\mathbf{z}_\OUT)_b = 0\right],
\end{equation}
because the inner decoder always prefers $(z_\OUT)_b = 0$ on every inner block.
By iterating over all possible weights $0\le w\le (d-1)/2$ on which the MW decoder would succeed, it's not hard to see that
\begin{equation}
    P_{C_\IN}(p) = \sum_{w=0}^{(d-1)/2} \binom{d}{w} p^w (1-p)^{d-w}.
\end{equation}

Now, to get to $p_\IN$, note that the event $\mathbf{z}_{\mathrm L}^* = \mathbf{z}_{\mathrm L}$ occurs if and only if the pattern of inner blocks on which the MW decoder succeeds perfectly corresponds to an element of the dual code, $C_\OUT$.
In other words,
\begin{align}\label{eq:pin-exact}
    p_\IN &= \Pr\left[\mathbf{z}_\OUT \in C_\OUT^{\perp}\right]\\
    & = \sum_{z_\OUT \in C_\OUT^{\perp}} P_{C_\IN}(p_\PF)^{n_\OUT-\wt(z_\OUT)} \cdot \left[1-P_{C_\IN}(p_\PF)\right]^{\wt(z_\OUT)},
\end{align}
by summing the respective likelihoods over all the ways this can happen.

Let us assume we are dealing with an $N$-site lattice, and write $d = 2t+1$ for the code distance.
Then the sum \eqref{eq:pin-exact} can be interpreted as the partition function for a $1$D system of binary variables with nearest-neighbor interactions, which can thus be evaluated exactly with a transfer matrix.

To understand this picture, we must first understand the structure of the dual code $C_\OUT^{\perp}$.
Recall that the classical Gauss's law code $C_\OUT$ used in the optimal construction requires duplicating site variables $t$ times into the classical binary variables $S_i^{(j)} \in \mathbb F_2$ for $1\le i\le N$ and $1\le j\le t$; links are not duplicated, but we still denote $L_i^{(1)} \in \mathbb F_2$ for the classical binary variable corresponding to the $i^{\mathrm{th}}$ link, for $1\le i\le N$.
Any proposed parity check $h \in \mathbb F_2^{n_\OUT}$ can be indexed by these duplicated (or not) site and link variables, where $h_{S_i^{(j)}} = 1$ indicates that $S_i^{(j)}$ participates in the parity check, and $h_{L_i^{(1)}} = 1$ indicates that $L_i^{(1)}$ participates in the parity check.
The corresponding linear constraint for a given $h \in \mathbb F_2^{n_\OUT}$ would be
\begin{equation}
    \sum_{i=1}^{N} \left(h_{L_i^{(1)}}L_i^{(1)} + \sum_{j=1}^t h_{S_i^{(j)}} S_i^{(j)}\right) = 0.
\end{equation}
Not all such checks are elements of the dual code $C_\OUT^{\perp}$.
A convenient basis for $C_\OUT^{\perp}$ is provided by
\begin{align}
    h^{(i)} &: L_{i-1}^{(1)} + S_{i}^{(1)} + L_{i}^{(1)} = 0,\\
    h^{(i,j)} &: S_{i}^{(1)} + S_{i}^{(j)} = 0,
\end{align}
where $1\le i\le N$ and $2\le j\le t$, and indices are identified periodically modulo $N$, so that $L_0^{(1)} \equiv L_N^{(1)}$.
The $h^{(i)}$ checks enforce Gauss's law constraints, while the $h^{(i,j)}$ checks enforce duplication of the site variables.
Write $\alpha_i$ and $\beta_{ij}$ as the binary coefficients of any $h \in C_\OUT^{\perp}$ when expanded in this basis, so that
\begin{equation}
    h = \sum_{a=1}^{N} \left(\alpha_i h^{(i)} + \sum_{b=2}^{t} \beta_{ij} h^{(i,j)}\right).
\end{equation}
The key observation from this basis expansion is that an arbitrary word $h \in \mathbb F_2^{n_\OUT}$ is an element of the dual code $C_\OUT^{\perp}$ if and only if there exist binary variables $\alpha_1,\dots,\alpha_N \in \mathbb F_2$ such that
\begin{align}
    h_{L_{i}^{(1)}} &= \alpha_{i} + \alpha_{i+1},\\
    \sum_{j=1}^t h_{S_{i}^{(j)}} &= \alpha_i,
\end{align}
for every $1\le i\le N$, where indices are again identified periodically modulo $N$, so that $\alpha_{N+1} \equiv \alpha_1$.

With these points noted, the transfer-matrix procedure for computing \eqref{eq:pin-exact} can now be provided.
For any $z_\OUT\in C_\OUT^{\perp}$, assign a weight contribution to each bit of $z_\OUT$ as follows: if the bit is $0$, then assign weight $W(0) \equiv P_{C_\IN}(p_\BF)$; if the bit is $1$, then assign weight $W(1) \equiv 1 - W(0)$.
In other words, define the function
\begin{equation}\label{eq:W(zb)}
    W((z_\OUT)_b) \equiv
    \begin{cases}
        P_{C_\IN}(p_\BF), & (z_\OUT)_b = 0,\\
        1 - P_{C_\IN}(p_\BF), & (z_\OUT)_b = 1.
    \end{cases}
\end{equation}
The full weight of $z_\OUT$ is then the product of the weight contribution from each bit, and the inner success probability $p_{\IN}$ can be understood as a ``partition function" given by summing these weights over every $z_\OUT$,
\begin{align}
    p_{\IN} &\equiv \sum_{z_\OUT \in C_\OUT^{\perp}} P_{C_\IN}(p_\BF)^{n_\OUT-\wt(z_\OUT)} \cdot \left[1-P_{C_\IN}(p_\BF)\right]^{\wt(z_\OUT)}\\
    & = \sum_{z_\OUT\in C_\OUT^{\perp}} \prod_{b=1}^{n_\OUT} W((z_\OUT)_b)\\
    & = \sum_{z_\OUT\in C_\OUT^{\perp}} \prod_{i=1}^N \left[W\left((z_\OUT)_{L_i^{(1)}}\right) \prod_{j=1}^t W\left((z_\OUT)_{S_i^{(j)}}\right)\right]
\end{align}
In order to evaluate this, it benefits us to use the coordinatization of $z_\OUT\in C_\OUT^{\perp}$ in terms of the components $\alpha_i$ and $\beta_{ij}$, as introduced previously.
This change of coordinates, valid on the dual code, is given explicitly by
\begin{align}
    (z_\OUT)_{L_i^{(1)}} &\equiv \alpha_i + \alpha_{i+1},\\
    (z_\OUT)_{S_i^{(1)}} &\equiv \alpha_i + \sum_{j=2}^t \beta_{ij},\\
    (z_\OUT)_{S_i^{(j)}} &\equiv \beta_{ij},
\end{align}
where $1\le i\le N$ and $2\le j\le t$.
Then we can write
\begin{widetext}
    \begin{align}
        p_{\IN} &= \sum_{\alpha} \sum_{\beta} \prod_{i=1}^N \left[W(\alpha_i+\alpha_{i+1}) W\left(\alpha_i+\sum_{j=2}^t \beta_{ij}\right) \prod_{j=2}^t W(\beta_{ij})\right]\\
        & = \sum_{\alpha} \prod_{i=1}^N \sum_{\beta_{i 2},\dots,\beta_{i t}} \left[W(\alpha_i+\alpha_{i+1}) W\left(\alpha_i+\sum_{j=2}^t \beta_{ij}\right) \prod_{j=2}^t W(\beta_{ij})\right].
    \end{align}
\end{widetext}
In the first line, the sum runs over all binary coefficient tensors $\alpha \in \mathbb F_2^{N}$ and $\beta \in \mathbb F_2^{N\times(t-1)}$, where the second index on $\beta$ begins at $2$.
The second line follows from the distributive property, where $\beta_{i 2},\dots,\beta_{i t}$ are dummy variables that are independently summed over $\mathbb F_2$.

Now fix an index $1\le i\le N$, and consider just the $\beta$-dependent part of the sum in the $i^{\mathrm{th}}$ term of the product,
\begin{equation}
    F_i \equiv \sum_{\beta_{i 2},\dots,\beta_{i t}} W\left(\alpha_i + \sum_{j=2}^t \beta_{i j}\right) \prod_{j=2}^t W(\beta_{i j}). 
\end{equation}
To evaluate $F_i$, define $\gamma_1, \dots, \gamma_t \in \mathbb F_2$ by
\begin{equation}
    \gamma_a \equiv
    \begin{cases}
        \alpha_i + \sum_{j=2}^t \beta_{i j}, & a = 1,\\
        \beta_{i a}, & 2 \le a \le t.
    \end{cases}
\end{equation}
Then the choice of $\beta_{i 2},\dots,\beta_{i t} \in \mathbb F_2$ uniquely corresponds to a choice of $\gamma_1,\dots,\gamma_t \in \mathbb F_2$ subject to the constraint $\sum_{j=1}^t \gamma_a = \alpha_i$.
In other words,
\begin{equation}
    F_i \equiv \sum_{\substack{\gamma_1,\dots,\gamma_t\\ \gamma_1 + \cdots + \gamma_t = \alpha_i}} \prod_{a=1}^t W(\gamma_a).
\end{equation}
From the definition of the weight function \eqref{eq:W(zb)}, it is clear that $F_i$ is computing precisely the probability that $t$ random bits $\gamma_1,\dots,\gamma_t$ drawn i.i.d. from the distribution $\mathrm{Bernoulli}(W(1))$ have total parity $\gamma_1+\cdots+\gamma_t = \alpha_i$.
Therefore, we can invoke the well-known answer to this problem as used repeatedly in~\cref{app:explicit-decoders}; namely,
\begin{equation}
    F_i \equiv \frac 12 \times
    \begin{cases}
        1 + f(W(1))^t, & \alpha_i = 0,\\
        1 - f(W(1))^t, & \alpha_i = 1,
    \end{cases}
\end{equation}
where $f(x) = 1 - 2x$.
For brevity, let us denote this by $F(\alpha_i) \equiv F_i$, since it depends entirely on the value of the bit $\alpha_i$.

Now the partition function simplifies to
\begin{equation}
    p_{\IN} = \sum_{\alpha} \prod_{i=1}^N W(\alpha_i + \alpha_{i+1}) F(\alpha_i),
\end{equation}
which corresponds to a periodic, translation-invariant $1$D system with only nearest-neighbor interactions, where configurations are specified by the bit string $\alpha$.
This can immediately be computed by defining the transfer matrix
\begin{equation}
    M_{uv} \equiv W(u+v) F(v),
\end{equation}
for indices $u,v\in \mathbb F_2$.
The result is
\begin{equation}
    p_{\IN} = \tr(M^N) = \lambda_+^N + \lambda_-^N,
\end{equation}
where $\lambda_{\pm}$ are the eigenvalues of the transfer matrix
\begin{equation}
    M =
    \begin{pmatrix}
        W(0)F(0) & W(1)F(1)\\
        W(1)F(0) & W(0)F(1)
    \end{pmatrix}.
\end{equation}
Explicitly, these eigenvalues are given by
\begin{equation}
    \lambda_{\pm} = \frac12\left[W(0)\pm\sqrt{W(1)^2+(W(0)-W(1))^{2t+1}}\right],
\end{equation}
which now fully specifies the exact calculation of $p_{\IN}$.

Now we move onto the calculation of $p_\OUT$.
This requires understanding when exactly the event $\mathbf{x}_{\mathrm L}^* = \mathbf{x}_{\mathrm L}$ occurs.
Based on the construction in~\cref{app:arbitrary-memory-decoder}, it's not difficult to see that $\mathbf{x}_{\mathrm L}^* = \pi_\OUT^X(\mathbf{x}_\OUT^*)$, where $\mathbf{x}_\OUT^*$ is the minimum-weight solution to
\begin{equation}\label{eq:Hx*=Hx}
    H_\OUT\mathbf{x}_\OUT^* = H_\OUT \mathbf{x}_\OUT,
\end{equation}
or equivalently, the minimum weight solution to $\mathbf{x}_\OUT^* + \mathbf{x}_\OUT \in C_\OUT$.
Note that the bit string $\mathbf{x}_\OUT^*$ is a random variable because $\mathbf{x}_\OUT$ is a random variable.

On the condition that \eqref{eq:Hx*=Hx} is satisfied, the event $\mathbf{x}_{\mathrm L}^* = \mathbf{x}_{\mathrm L}$ occurs if and only if $\mathbf{x}_\OUT^* = \mathbf{x}_\OUT$.
This is because the equivalence $\mathbf{x}_{\mathrm L}^* = \mathbf{x}_{\mathrm L}$ forces the link bits of $\mathbf{x}_\OUT^*$ and $\mathbf{x}_\OUT$ to agree, and \eqref{eq:Hx*=Hx} then ensures the syndromes agree, thereby forcing the site bits to agree as well.
In other words,
\begin{equation}
    p_\OUT = \Pr\left[\mathbf{x}_\OUT^* = \mathbf{x}_\OUT\right].
\end{equation}
Importantly, this is \textit{not} simply calculating the probability that $\mathbf{x}_\OUT$ is the minimum-weight representative of its syndrome class.
This is because there can be \textit{several} minimum-weight representatives, of which $\mathbf{x}_\OUT^*$ will only be one of them.
An important concern that can be raised, then, is whether the success probability depends on the tie-breaking mechanism chosen by the MW decoder for $C_\OUT$.
As we will see, there is no such dependence---the success probability is well-defined without choosing any specific MW decoder explicitly.

We can view $p_\OUT$ as the success probability $P_{C_\OUT}(p)$ of the classical MW decoder on the classical Gauss's law code $C_\OUT$, subject to a BSC with physical error rate $p$ given by
\begin{align}
    p &\equiv \Pr\left[(\mathbf{x}_\OUT)_b = 1\right]\\
    & = \frac 12\left[1 - f(p_\BF)^d\right],
\end{align}
which is identical for every block index $1\le b\le n_\OUT$.
We have thus reduced the problem to a calculation in the setting of classical error correction.

For any syndrome $s_\OUT \in \mathbb F_2^{n_\OUT-k_\OUT}$, denote $A(s_\OUT)$ to be the minimum weight of any error pattern having the syndrome $s_\OUT$, and denote $B(A)$ to be the number of distinct syndromes $s_\OUT$ such that $A(s_\OUT) = A$.
Without loss of generality, assume that for each $s_\OUT$, the classical MW decoder deterministically returns the candidate error pattern $e(s_\OUT) \in \mathbb F_2^{n_\OUT}$.
Then the classical MW decoder succeeds if and only if the true error pattern $e \in \mathbb F_2^{n_\OUT}$ happens to be drawn from the set
\begin{equation}
    S \equiv \left\{e(s_\OUT) \mid s_\OUT \in \mathbb F_2^{n_\OUT}\right\}.
\end{equation}
If $\mathbf{e}$ denotes the random error pattern sampled from the BSC, then
\begin{align}
    P_{C_\OUT}(p) &= \Pr\left[\mathbf{e} \in S\right]\\
    & = \sum_{s_\OUT \in \mathbb F_2^{n_\OUT-k_\OUT}} \Pr\left[\mathbf{e} = e(s_\OUT)\right]\\
    & = \sum_{s_\OUT \in \mathbb F_2^{n_\OUT-k_\OUT}} p^{A(s_\OUT)} (1-p)^{n_\OUT - A(s_\OUT)}\\
    & = \sum_{A=0}^{A_\MAX} B(A) p^A (1-p)^{n_\OUT-A},
\end{align}
where
\begin{equation}
    A_\MAX \equiv \max_{s_\OUT \in \mathbb F_2^{n_\OUT-k_\OUT}} A(s_\OUT).
\end{equation}

For general classical codes, evaluating $A_\MAX$ and $B(A)$ is believed to be computationally difficult~\cite{McLoughlin1984, GuruswamiMicciancioRegev2005}.
Nevertheless, for the classical Gauss's law code $C_\OUT$ on an $N$-site lattice with distance $d = 2t+1$, the exact values can be computed systematically.

\end{document}